\documentclass[journal,comsoc]{IEEEtran}
%\documentclass[12pt,draftclsnofoot,onecolumn]{IEEEtran}
%\usepackage{geometry}
%\geometry{
	%	letterpaper,
	%margin=1in,
%	left=0.80in,%0.72
%	right=0.80in,%0.72
%	top=0.9in,%0.72
%	bottom=0.9in,%0.72
%}
% Some Computer Society conferences also require the compsoc mode option,
% but others use the standard conference format.
%
% If IEEEtran.cls has not been installed into the LaTeX system files,
% manually specify the path to it like:
% \documentclass[conference]{../sty/IEEEtran}
%\IEEEoverridecommandlockouts

%\usepackage{graphicx,cite,amssymb,amsmath,psfrag,subfigure}
%\usepackage[T1]{fontenc}% optional T1 font encoding
\usepackage{graphicx}
\usepackage{amsfonts}
\usepackage{amssymb}
\usepackage{amsmath}
\usepackage{amsthm}
\usepackage{cite}
\usepackage{subfigure}
\usepackage{mathrsfs}
\usepackage[displaymath,mathlines]{lineno}
\usepackage{color}
\usepackage{tabulary}
\usepackage{multirow}
\usepackage{algpseudocode}
\usepackage{algorithm,algpseudocode}
\usepackage{algorithmicx}
\usepackage{pbox}
\usepackage{multicol}
\usepackage{lipsum}
\usepackage{amsthm}
\usepackage{relsize}
\usepackage{lipsum}
\usepackage{epstopdf}
\usepackage{mathtools}% http://ctan.org/pkg/mathtools
\usepackage{calc}% http://ctan.org/pkg/calc
\usepackage{makecell}
\usepackage{cases}

\makeatletter
\newcommand{\doublewidetilde}[1]{{%
		\mathpalette\double@widetilde{#1}}}
\newcommand{\double@widetilde}[2]{%
	\sbox\z@{$\m@th#1\widetilde{#2}$}%
	\ht\z@=.5\ht\z@
	\widetilde{\box\z@}}
\makeatother

\usepackage{accents}
\newcommand*{\dt}[1]{%
	\accentset{\mbox{\large .}}{#1}}

\newtheorem{theorem}{Theorem}
\newtheorem{lemma}{Lemma}

\newtheorem{remark}{Remark}

%\newtheorem{algorithm}{Algorithm}
%\newtheorem{proof}{Proof}

% Some very useful LaTeX packages include:
% (uncomment the ones you want to load)

\setcounter{page}{1}

\begin{document}
	\title{\huge RIS-Assisted Wireless Communications: Long-Term versus Short-Term Phase Shift Designs}
	\author{Trinh~Van~Chien,  \textit{Member}, \textit{IEEE}, Lam~Thanh~Tu,  Waqas~Khalid, Heejung~Yu, \textit{Senior Member}, \textit{IEEE},  Symeon~Chatzinotas, \textit{Fellow}, \textit{IEEE}, and Marco~Di~Renzo,~\textit{Fellow},~\textit{IEEE}
 \thanks{This research is funded by Hanoi University of Science and Technology (HUST) under project number T2022-TT-001. The work of M. Di Renzo was supported in part by the European Commission through the H2020 ARIADNE project under grant agreement number 871464 and through the H2020 RISE-6G project under grant agreement number 101017011, and by the Agence Nationale de la Recherche (ANR PEPR-5G and Future Networks, grant NF-
PERSEUS, 22-PEFT-004). Parts of this paper were presented at ICASSP~2022 \cite{van2022controlling}.}
 \thanks{T. V. Chien is with the School of Information and Communication Technology (SoICT), Hanoi University of Science and Technology, Vietnam (email: chientv@soict.hust.edu.vn).}
  \thanks{L. T. Tu is with Ton Duc Thang University, Vietnam (email: tulamthanh@tdtu.edu.vn).}
  \thanks{W. Khalid is with the Institute of Industrial Technology, Korea University, Sejong 30019, South Korea (email: waqas283@korea.ac.kr).}
  \thanks{H. Yu is with the Department of Electronics
and Information Engineering, Korea University, Sejong 30019, Korea
(email:heejungyu@korea.ac.kr).}
\thanks{S. Chatzinotas is with the Interdisciplinary Centre for Security, Reliability and Trust (SnT), University of Luxembourg, L-1855 Luxembourg, Luxembourg (email: symeon.chatzinotas@uni.lu).}
\thanks{M. Di Renzo is with Universit\'e Paris-Saclay, CNRS, CentraleSup\'elec, Laboratoire des Signaux et Syst\`emes, 3 Rue Joliot-Curie, 91192 Gif-sur-Yvette, France. (marco.di-renzo@universite-paris-saclay.fr)}
	}

	\maketitle

	\begin{abstract}
	Reconfigurable intelligent surface (RIS) has recently
	gained significant interest as an emerging technology for
	future wireless networks thanks to its potential for improving the  coverage probability in challenging propagation environments. This paper studies an RIS-assisted
	propagation environment, where a source transmits data to a destination in the presence of a weak direct link. We analyze and compare RIS designs based on long-term and short-term channel statistics in terms of coverage probability and ergodic rate. For the
	considered optimization designs, we derive closed-form expressions for the coverage probability and ergodic rate, which explicitly unveil the impact of both the propagation environment and the RIS on the system performance. Besides the optimization of the RIS phase profile, we formulate an RIS placement optimization problem with the aim of maximizing the coverage probability by relying only on partial channel state information. An efficient algorithm is  proposed based on the gradient ascent method. Simulation results are illustrated in order to corroborate the analytical framework and findings. The proposed RIS phase profile is shown to outperform several heuristic benchmarks in terms of outage probability and ergodic rate. In addition, the proposed RIS placement strategy provides an extra degree of freedom that remarkably improves  system performance.   
	\end{abstract}
	\begin{IEEEkeywords}
		Reconfigurable intelligent surface, coverage probability, ergodic rate, gradient ascent method.
	\end{IEEEkeywords}
	\IEEEpeerreviewmaketitle
	% \vspace*{-0.5cm}
	\section{Introduction}
	% \vspace*{-0.15cm}
	
Although fifth-generation (5G) networks are under deployment worldwide, the massive growth of the number of different users/devices and data traffic poses major challenges for beyond 5G or sixth-generation (6G) networks \cite{tataria20216g,tariq2020speculative}. Specifically, the number of mobile devices is estimated to reach 25 billions by 2025 \cite{jonsson2021ericsson} and the requested peak data rate per link may exceed 100 Gbps by 2030 \cite{tariq2020speculative}. Dealing with such challenging requirements in harsh propagation environments subject to the presence of several blocking objects (obstacles) and deep fading is still a challenging task to solve for wireless engineers and researchers. A major challenge in $5$G and beyond networks lies in improving the data rate and the coverage probability in a cost-effective, energy-sustainable, and economically viable manner. This is why mobile operators resort to different network nodes to offer blanket coverage in their deployments \cite{3gpp2020technical}. Within the third generation partnership program (3GPP), for example, different new network nodes are currently being discussed, including the integrated access and backhaul (IAB) node, the network-controlled repeater node, the reconfigurable intelligent surface (RIS) node, and the smart skin node \cite{flamini2022towards}. Recently, RISs have gained considerable attention from  academia and industry due to their ability of controlling the propagation characteristics of wireless environments via passive scattering elements integrated with low-cost and low-power electronics \cite{qian2020beamforming, wu2019intelligent, zappone2020overhead, kumar2022achievable,wei2021channel}. For example, the authors of \cite{qian2020beamforming}  studied the statistics of the signal-to-noise ratio (SNR) when the number of RIS elements grows large. The minimization of the total transmit power at a multiple antenna access point (AP) in multiple-input single-output (MISO) RIS-assisted wireless systems was investigated in \cite{wu2019intelligent}, by jointly optimizing the transmit beamforming at the AP and the phase profile at the RIS. The authors of \cite{zappone2020overhead} studied a single-user multiple-input multiple-output (MIMO) system with the help of a single RIS and optimized it by considering the overhead for channel estimation and feedback. The deployment of an RIS in cognitive radio systems was studied in \cite{kumar2022achievable}, where the spectrum is shared between the primary and secondary users. With the exception of \cite{qian2020beamforming}, which relies on large-scale analysis, these research works are focused on optimization algorithms and do not provide any closed-form expressions for the distribution of the signal-to-noise ratio (SNR) and ergodic rate of RIS-aided networks.

The complex nature of wireless environments results in propagation channels that are characterized by small-scale, i.e., short-term, and large-scale, i.e., long-term, fading. An RIS aims to shape the electromagnetic waves in complex wireless environments by appropriately optimizing the reflection coefficients of its constitutive elements (i.e., the unit cells) \cite{tang2020wireless}. Due to the small-scale and large-scale dynamics that characterize  complex wireless channels, the phase shifts of the RIS elements can be optimized based on different time scales \cite{zhi2021two,di2021communication}. The vast majority of research works in the literature have considered the optimization of the RIS phase profile based on  instantaneous channel state information (CSI) \cite{wu2019intelligent, zhang2020sum, zappone2020overhead,huang2019reconfigurable,wei2022joint,huang2021multi} and references therein for application to sub-6 gigahertz and terahertz frequency bands. Interested readers are referred to \cite[Sec. III-B]{pan2022overview} for an extensive overview of optimizing RISs based on different levels of CSI. The optimization criterion based on instantaneous CSI operates by adjusting the phase shifts of the RIS elements based on the channel's small-scale dynamics. Therefore, it results in the best achievable performance at the price of a significant channel estimation overhead \cite{van2022reconfigurable,zappone2020overhead, wei2021channel,pan2021overview}. For these reasons, this optimization criterion may not be applicable in scenarios that are characterized by a short coherent time since the optimal phase shifts of the RIS elements need to be updated frequently in order to adapt to the rapid changes that characterize the dynamics of small-scale fading \cite{jung2020performance, pan2022overview}. This, in fact, may not be possible if the number of RIS elements and/or the number of users is large since the associated channel estimation overhead may result in very few resources available for communication \cite{han2019large,zhao2021two,dai2021statistical,luo2021reconfigurable,you2021reconfigurable}.

In contrast to these aforementioned research works that assume the availability of instantaneous CSI, another option for optimizing the phase shifts of the RIS elements is based on leveraging only statistical CSI, i.e., the large-scale characteristics of the wireless channel \cite{han2019large, abrardo2020intelligent, 9140329,luo2021reconfigurable, zhao2021two, dai2021statistical,peng2021analysis,luo2021reconfigurable,you2021reconfigurable} that vary at a much longer time scale.  Optimization criteria based on long-term CSI need to be updated less frequently, and this reduces the channel estimation overhead \cite{hu2021two}. In most research works, however, the channel statistics are analyzed numerically \cite{gan2021ris}, which are difficult to use for system optimization \cite{trigui2022performance}. More recently, some research works, such as \cite{zhi2021two,dai2021statistical}, have  obtained closed-form expressions for the ergodic channel rates of RIS-aided Massive MIMO communications when the base station is equipped with many antennas. These research works are based on a lower bound for the rate that may  not be tight in  scenarios where channel hardening does not hold \cite{van2022reconfigurable}. Even though some research works have recently proposed and analyzed the design of RISs based on long-term or statistical CSI, to the best of our knowledge, no previous work has proposed an analytical framework to comprehensively analyze and compare the achievable performance of RIS-assisted wireless systems based on short-term and long-term CSI.

In addition to the impact of CSI on the optimal design of RISs, another major open research issue is the optimal deployment of RISs in wireless networks, so as to achieve the best performance. For example, the optimal deployment of RISs is extensively discussed in \cite{you2022deploy}. In the far field, RISs are usually deployed close to the transmitter or the receiver since this minimizes the path-loss function \cite{tang2020wireless}. If the receivers are not static, locating the RISs close to the transmitter is often convenient to ensure that the transmission distance of one of the two transmission links is minimized \cite{perovic2021maximum}. Recently, hybrid deployments with RISs located close to the transmitters and the receivers have been proposed by leveraging the secondary reflections from multiple deployed RISs \cite{zheng2021double,nguyen2022leveraging}. To the best of our knowledge, a comprehensive analysis of the optimal deployment of RISs under different levels of CSI has not been reported in the open technical literature. In this paper, we analyze this important research problem.  

%Beyond many advantages, RISs are not without limitations. For example, a RIS provides little advantage in propagation environments with the strong direct links. Due to high mobility, low cost of deployment, and ability of providing strong light-of-sight (LoS) data transmission links, the use of mobile platforms such as unmanned aerial vehicles (UAVs) become a potential solution to overcome the limitations of RISs \cite{ZhiqiangIRSUAV2021,ShafiqueIRSUAV2021}. The installation of RISs on mobile devices potentially has to bring many benefits such as spectrum efficiency since RISs only reflect the same frequency as the transmitter and do not require additional frequencies. Besides, those devices provide more flexible placement of RISs in three-dimensional environments than one installed on the building surfaces.

Motivated by these considerations, we consider the deployment of an RIS in the coverage area of a  source and a destination, and compare the system performance under phase shift designs based on different time scales, i.e., levels of CSI, and study the optimal placement of the RIS for the different options.  Specifically, the main contributions made by this paper are summarized as follows:
   \begin{itemize}
   	\item We introduce a novel analytical framework for evaluating the performance of RIS-aided channels in which the phase profile and the location of the RIS are optimized. Two different phase shift design criteria are considered: The first, referred to as the short-term phase shift design, exploits CSI and offers the best performance at the cost of a higher computational complexity. The second, referred to as the long-term phase shift design, exploits only the channel statistics to optimize the phase shifts.
   	\item We derive closed-form expressions for the coverage probability and ergodic channel rate for the long-term or short-term phase shift design, by using the moment matching technique. Although the statistical parameters behave differently when the number of RIS elements is finite, we mathematically show that the coverage probability approaches unity in the limiting regime of an infinite number of RIS elements. 
   	\item We formulate and solve a coverage probability maximization problem as a function of the RIS location subject to  a given phase shift design. The gradient ascent method is utilized as a low-cost solution to solve the optimization problem and to  overcome its inherent non-convexity. 
   	\item Numerical results demonstrate the coverage improvement offered by deploying RISs in wireless networks. For a large number of phase shifts, the long-term phase shift design provides a coverage probability and an ergodic channel rate that are close to the optimal values obtained by relying on instantaneous CSI. Moreover, the optimization of the location of the RIS is shown to offer a significant performance improvement.
   \end{itemize}
	\textit{Notation}: Upper and lower bold symbols denote matrices and vectors, respectively. The superscripts $(\cdot)^H$ and $(\cdot)^T$ denote the Hermitian and transpose operations.  $\mathbb{E}\{ \cdot \}$ and $\mathsf{Var}\{ \cdot \}$ denote the expectation and the variance of a random variable.  $\mathcal{CN}(\cdot,\cdot)$ denotes a circularly symmetric complex Gaussian distribution. The upper incomplete Gamma function and the Gamma function are denoted by $\Gamma(m,n) = \int_n^\infty t^{m-1} \exp(-t) dt$ and $\Gamma(x) = \int_{0}^{\infty} t^{x-1} \exp(-t) dt$, respectively.  The generalized hypergeometric function is denoted by ${_p}F_q (\cdot, \cdot, \cdot)$ with ${_1}{F}_1 (\cdot, \cdot, \cdot)$, i.e., $p=q=1$ denoting the confluent hypergeometric function of the first kind, and the hypergeometric function ${_2}{F}_2 (\cdot, \cdot, \cdot)$, i.e.,  $p=q=2$. The notation $f(x) \asymp  g(x)$ means that two real functions $f(x)$ and $g(x)$ fulfill the condition  $\lim_{x \rightarrow \infty} f(x)/g(x) = 1$. The first-order  derivative of the function $f(x)$ is denoted as $\dt{f}(x)$.
	%\begin{figure}[t]
	%	\centering
	%	\includegraphics[trim=3.4cm 2.3cm 2.0cm 7.0cm, clip=true, width=3.5in]{FigSystemModel} \vspace*{-0.2cm}
	%	\caption{An RIS-assisted cooperative wireless system with UAV deployment.}
	%	\label{FigSysModel}
	%	\vspace*{-0.5cm}
	%\end{figure}
	\section{System model}
	We consider an RIS-assisted communication system where a single-antenna source transmits data to a single-antenna destination. We assume a quasi-static fading model where the channel gains are static and frequency flat in each time slot. The communication link from the source to the destination is assumed to be weak, and therefore it is enhanced by the assistance of an RIS equipped with $M$ tunable reflecting elements. The phase shift matrix $\pmb{\Phi} \in \mathbb{C}^{M \times M}$ is denoted by $\pmb{\Phi} = \mathrm{diag}\big([e^{j\theta_{1}}, \ldots, e^{j\theta_{M}}]^T \big)$, where $\theta_{m} \in [-\pi, \pi]$ is the phase shift  of the $m$-th RIS element. 
	\subsection{Channel Model}
	The channel of the direct link between the source and the destination is denoted by $h_{\mathrm{sd}} \in \mathbb{C}$. The corresponding indirect link comprises the channel between the source and the RIS, denoted by $\mathbf{h}_{\mathrm{sr}} \in \mathbb{C}^M$, and the channel between the RIS and the destination, denoted by $\mathbf{h}_{\mathrm{rd}} \in \mathbb{C}^M$. The channel between the source and the destination is assumed not to have a dominant path (i.e., a line-of-sight component), while the source-RIS and RIS-destination channels are characterized by line-of-sight and non-line-of-sight components. Specifically, the channels are modeled as\footnote{The line-of-sight and non-line-of-sight components of the channels are determined by the transmission distance, the shadowing, and the multipath propagation in a general wireless environment \cite{rappaport2010wireless}.}
	\begin{align} \label{eq:ChannelMod}
		h_{\mathrm{sd}} = \sqrt{\beta_{\mathrm{sd}}} g_{\mathrm{sd}} \mbox{ and } \mathbf{h}_{\alpha} = \bar{\mathbf{h}}_{\alpha} + \mathbf{g}_{\alpha}, 
	\end{align}
	where $g_{\mathrm{sd}} \sim \mathcal{CN}(0,1)$ is the small-scale fading coefficient of the direct link between the source and the destination; and $\mathbf{g}_{\alpha} \sim \mathcal{CN}(\mathbf{0}, (\beta_{\alpha}/(\kappa_{\alpha}+1))  \mathbf{I}_M)$  with $\alpha \in \{\mathrm{sr}, \mathrm{rd} \}$  are the vectors of the small-scale fading coefficients of the RIS-aided indirect links; $\beta_{\mathrm{sd}}$ and $\beta_{\alpha}$ are the large-scale fading coefficients; $\kappa_{\alpha}$ are the Rician factors with $ \kappa_{\alpha} >0$. The deterministic LoS channel vectors $\bar{\mathbf{h}}_{\alpha} \in \mathbb{C}^M$ are modelled as
	\begin{equation} \label{eq:barhsr}
		\bar{\mathbf{h}}_{\alpha} = \sqrt{\frac{\kappa_{\alpha} \beta_{\alpha} }{\kappa_{\alpha}+1}} \left[e^{j \mathbf{k}(\psi_{\alpha}, \phi_{\alpha} )^T \mathbf{u}_1}, \ldots,  e^{j \mathbf{k}(\psi_{\alpha}, \phi_{\alpha} )^T \mathbf{u}_M} \right]^T, 
	\end{equation}
	where $\psi_{\alpha}$ and $\phi_{\alpha}$ are the azimuth and elevation angles of departure (AoD)  from  the source as seen  by  the RIS.  Since the RIS is a planar array (i.e., a surface), the wave vectors $\mathbf{k}(\psi_{\alpha}, \phi_\alpha)$ is defined as
	\begin{equation}
		\mathbf{k}(\psi_{\alpha}, \phi_\alpha) = \frac{2\pi}{\lambda} \left[ \cos(\psi_{\alpha})\cos(\phi_\alpha), \, \sin(\psi_{\alpha})\cos(\phi_\alpha), \,\sin(\psi_{\alpha}) \right]^T,
	\end{equation}
	where $\lambda$ is the signal wavelength. In \eqref{eq:barhsr}, furthermore, the indexing vector $\mathbf{u}_m$ is defined as $\mathbf{u}_m =[0, \, \mathrm{mod}(m-1,N_H), \, \lfloor (m-1)/N_H \rfloor]^T$. 
	Due to the presence of LoS and NLoS components, i.e., $\bar{\mathbf{h}}_{\alpha}$ and  $\mathbf{g}_{\alpha}$, respectively, in RIS-aided links, the analytical performance evaluation of the considered system model is not straightforward.
	\subsection{Data Transmission and Achievable Rate}
The source is assumed to transmit a data symbol $x$ subject to the unit-power constraint $\mathbb{E}\{ |x|^2 \} = 1$. Thus, the received signal $y \in \mathbb{C}$ at the destination is 
	\begin{equation} \label{eq:ReceivedSig}
		y = \sqrt{p} ( h_{\mathrm{sd}}  +  \mathbf{h}_{\mathrm{sr}}^H \pmb{\Phi} \mathbf{h}_{\mathrm{rd}} ) x  + w,
	\end{equation}
	where $p$ is the transmit power allocated to the data symbol $x$; and  $w$ is the additive noise, which is distributed as $w \sim \mathcal{CN}(0, \sigma^2)$ with $\sigma^2$ being the noise variance. 
 
  Conditioned on  the RIS phase shift matrix, we provide the second and fourth moments of the indirect link, i.e., the link from the source to the destination through the RIS, in Theorem~\ref{Theorem:RISChannel}. 
	\begin{theorem} \label{Theorem:RISChannel}
		For a given phase shift matrix $\pmb{\Phi}$, the indirect link from the source to the destination through the RIS has the following second and fourth moments: 
		\begin{align}
			\mathbb{E} \{ |\mathbf{h}_{\mathrm{sr}}^H \pmb{\Phi} \mathbf{h}_{\mathrm{rd}}|^2 \} & =   \delta , \label{eq:2MomentCascade}\\
			\mathbb{E} \{ |\mathbf{h}_{\mathrm{sr}}^H \pmb{\Phi} \mathbf{h}_{\mathrm{rd}}|^4 \} & =  \delta^2  + \tilde{a}, \label{eq:4MomentCascade}
		\end{align}
		where $\delta = |\bar{\alpha}|^2 + M\mu \tilde{\kappa}$, $\bar{\alpha} = \bar{\mathbf{h}}_{\mathrm{sr}}^H \pmb{\Phi} \bar{\mathbf{h}}_{\mathrm{rd}} $, $\mu = \beta_{\mathrm{sr}}\beta_{\mathrm{rd}}/ ((\kappa_{\mathrm{sr}}+1)(\kappa_{\mathrm{rd}}+1))$, $\widetilde{\kappa} = \kappa_{\mathrm{sr}}+ \kappa_{\mathrm{rd}}+1$, $\hat{\kappa} = 1+ 2\kappa_{\mathrm{sr}} + 2\kappa_{\mathrm{rd}}$, and $\tilde{a} =2M |\bar{\alpha}|^2 \mu  \tilde{\kappa} + M^2 \mu^2 \tilde{\kappa}^2 +  2M \mu^2 \hat{\kappa} +8 |\bar{\alpha}|^2 \mu$.
	\end{theorem}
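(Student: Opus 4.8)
The plan is to isolate the deterministic part of the cascaded channel and treat the remainder as zero-mean fluctuations whose moments admit closed forms. Substituting $\mathbf{h}_\alpha = \bar{\mathbf{h}}_\alpha + \mathbf{g}_\alpha$ into $\mathbf{h}_{\mathrm{sr}}^H\pmb{\Phi}\mathbf{h}_{\mathrm{rd}}$, I would decompose
\begin{equation}
\mathbf{h}_{\mathrm{sr}}^H\pmb{\Phi}\mathbf{h}_{\mathrm{rd}} = \bar{\alpha} + a + b + c,
\end{equation}
where $\bar{\alpha} = \bar{\mathbf{h}}_{\mathrm{sr}}^H\pmb{\Phi}\bar{\mathbf{h}}_{\mathrm{rd}}$ is deterministic, $a = \bar{\mathbf{h}}_{\mathrm{sr}}^H\pmb{\Phi}\mathbf{g}_{\mathrm{rd}}$ and $b = \mathbf{g}_{\mathrm{sr}}^H\pmb{\Phi}\bar{\mathbf{h}}_{\mathrm{rd}}$ are zero-mean complex Gaussian scalars, and $c = \mathbf{g}_{\mathrm{sr}}^H\pmb{\Phi}\mathbf{g}_{\mathrm{rd}}$ is a zero-mean bilinear form in the two independent fading vectors. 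Two facts are used repeatedly: the unit-modulus diagonal of $\pmb{\Phi}$ gives $\pmb{\Phi}\pmb{\Phi}^H = \mathbf{I}_M$, and the constant-modulus LoS entries give $\|\bar{\mathbf{h}}_\alpha\|^2 = M\kappa_\alpha\beta_\alpha/(\kappa_\alpha+1)$.

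For the second moment I would expand $|\bar{\alpha}+a+b+c|^2$ and observe that every cross term vanishes: independence of $\mathbf{g}_{\mathrm{sr}}$ and $\mathbf{g}_{\mathrm{rd}}$ together with their zero mean kills all products in which one fading vector appears an odd number of times. What remains is $|\bar{\alpha}|^2 + \mathbb{E}\{|a|^2\} + \mathbb{E}\{|b|^2\} + \mathbb{E}\{|c|^2\}$, and each variance follows from $\mathbb{E}\{\mathbf{g}_\alpha\mathbf{g}_\alpha^H\} = (\beta_\alpha/(\kappa_\alpha+1))\mathbf{I}_M$ together with the two identities above, giving $\mathbb{E}\{|a|^2\} = M\kappa_{\mathrm{sr}}\mu$, $\mathbb{E}\{|b|^2\} = M\kappa_{\mathrm{rd}}\mu$, and $\mathbb{E}\{|c|^2\} = M\mu$. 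Their sum is $M\mu\widetilde{\kappa}$, which establishes \eqref{eq:2MomentCascade}.

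For the fourth moment I would set $v = a+b+c$ and expand $|z|^4 = (|\bar{\alpha}|^2 + 2\operatorname{Re}(\bar{\alpha}^* v) + |v|^2)^2$ into six expectations. The term $\mathbb{E}\{\operatorname{Re}(\bar\alpha^* v)\}$ vanishes; $\mathbb{E}\{(\operatorname{Re}(\bar\alpha^* v))^2\}$ reduces to $\tfrac12|\bar\alpha|^2\mathbb{E}\{|v|^2\}$ once the pseudo-variance $\mathbb{E}\{v^2\}=0$ is noted (each of $a^2,b^2,c^2$ has zero expectation by circular symmetry and the cross products vanish by independence); and $\mathbb{E}\{\operatorname{Re}(\bar\alpha^* v)|v|^2\}$ reduces to the third moment $\mathbb{E}\{v^2 v^*\}$. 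A phase-rotation argument---replacing $\mathbf{g}_{\mathrm{rd}}\mapsto e^{j\phi}\mathbf{g}_{\mathrm{rd}}$ and $\mathbf{g}_{\mathrm{sr}}\mapsto e^{j\phi}\mathbf{g}_{\mathrm{sr}}$ and matching net phases---shows that the only surviving contribution is $2\,\mathbb{E}\{abc^*\}$, which telescopes into $2\mu\bar\alpha$ after inserting $\mathbb{E}\{\mathbf{g}_\alpha\mathbf{g}_\alpha^H\}$ and using $\pmb{\Phi}\pmb{\Phi}^H\pmb{\Phi}=\pmb{\Phi}$; this produces the $8|\bar\alpha|^2\mu$ contribution.

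The main obstacle is $\mathbb{E}\{|v|^4\}$. The same phase-balance argument collapses the expansion to the six products $\mathbb{E}\{|a|^4\}$, $\mathbb{E}\{|b|^4\}$, $\mathbb{E}\{|c|^4\}$, $4\mathbb{E}\{|a|^2|b|^2\}$, $4\mathbb{E}\{|a|^2|c|^2\}$, $4\mathbb{E}\{|b|^2|c|^2\}$. The purely Gaussian pieces are immediate, e.g. $\mathbb{E}\{|a|^4\}=2(M\kappa_{\mathrm{sr}}\mu)^2$ and $\mathbb{E}\{|a|^2|b|^2\}=\mathbb{E}\{|a|^2\}\mathbb{E}\{|b|^2\}$ by independence. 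The delicate pieces involve $c$: I would evaluate $\mathbb{E}\{|c|^4\}$ as the fourth moment of a sum of independent zero-mean proper terms $w_m = g_{\mathrm{sr},m}^* e^{j\theta_m}g_{\mathrm{rd},m}$, obtaining $2M^2\mu^2 + 2M\mu^2$, and evaluate $\mathbb{E}\{|a|^2|c|^2\}$ and $\mathbb{E}\{|b|^2|c|^2\}$ by first averaging over the fading vector not shared with $a$ (resp. $b$) and then applying the Isserlis identity to the remaining fourth-order Gaussian moment, which yields the characteristic $M(M+1)$ factor, $\mathbb{E}\{|a|^2|c|^2\}=\kappa_{\mathrm{sr}}\mu^2 M(M+1)$ and $\mathbb{E}\{|b|^2|c|^2\}=\kappa_{\mathrm{rd}}\mu^2 M(M+1)$. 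Collecting the $M^2\mu^2$ coefficients reconstructs $2\widetilde{\kappa}^2$ via $(\kappa_{\mathrm{sr}}+\kappa_{\mathrm{rd}}+1)^2$, while the leftover $M\mu^2$ coefficients assemble into $2\hat{\kappa}$, giving $\mathbb{E}\{|v|^4\}=2M^2\mu^2\widetilde{\kappa}^2+2M\mu^2\hat{\kappa}$; adding back the $\bar\alpha$-dependent contributions then reproduces $\delta^2+\tilde{a}$. The error-prone bookkeeping is precisely the separation of $M^2$ from $M$ scaling, since it is the $O(M)$ remainders from $\mathbb{E}\{|c|^4\}$ and the shared-vector products that generate the $\hat{\kappa}$ term.
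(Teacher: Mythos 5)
Your proposal is correct and follows essentially the same route as the paper's proof: the identical decomposition $\bar{\alpha}+a+b+c$ (the paper's $\bar{\alpha},\alpha_2,\alpha_3,\alpha_4$), the same reduction of the fourth moment to the six even-power products $\mathbb{E}\{|a|^4\},\mathbb{E}\{|b|^4\},\mathbb{E}\{|c|^4\},\mathbb{E}\{|a|^2|b|^2\},\mathbb{E}\{|a|^2|c|^2\},\mathbb{E}\{|b|^2|c|^2\}$ plus the third-moment term $2\mathbb{E}\{abc^{*}\}=2\mu\bar{\alpha}$, and the same Gaussian fourth-moment identity (your Isserlis step is the paper's Lemma~\ref{lemma:4momentv1}), with all individual values ($2M^2\mu^2\kappa_{\mathrm{sr}}^2$, $2(M^2+M)\mu^2$, $(M^2+M)\mu^2\kappa_{\mathrm{sr}}$, etc.) matching the paper's. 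The phase-rotation argument you use to kill the odd-power cross terms is just a cleaner justification of what the paper asserts by inspection.
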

	\begin{proof}
		See Appendix~\ref{Appendix:RISChannel}.
	\end{proof}
	The analytical expressions in \eqref{eq:2MomentCascade} and \eqref{eq:4MomentCascade} depend on the LoS and NLoS channels and unveil that the second and fourth moments scale with $M$ and $M^2$, respectively. Next, Theorem~\ref{Theorem:RISChannel} will be utilized to facilitate the performance evaluation of RIS-aided channels. 
	
	The phase-shift matrix $\pmb{\Phi}$ is usually optimized based on different levels of CSI \cite{pan2021overview}. In this paper, we focus our attention on two design criteria:
    \begin{itemize}
    	\item \textit{Long-term phase-shift design}: The phase shifts of the RIS
    	elements are optimized based on statistical CSI. Specifically, the optimal phase shift matrix is obtained by  maximizing the average SNR at the destination. For this design criterion, the phase shift matrix is denoted by  $\pmb{\Phi}^{\mathsf{lt}} =  \mathrm{diag}\big([e^{j\theta_{1}^{\mathsf{lt}}}, \ldots, e^{j\theta_{M}^{\mathsf{lt}}}]^T \big)$, where $-\pi \leq \theta_{m}^{\mathsf{lt}} \leq \pi$ is the phase shift of the $m$-th element of the RIS. In this case, in simple terms, the optimization of the RIS depends on the large-scale fading coefficients and the LoS components of the channel.
		\item \textit{Short-term phase-shift design}: The phase shifts of the RIS elements are optimized based on  instantaneous CSI, which encompasses large-scale and small-scale fading statistics. For this design criterion, the phase shift matrix is denoted by $\pmb{\Phi}^{\mathsf{st}} =  \mathrm{diag}\big([e^{j\theta_{1}^{\mathsf{st}}}, \ldots, e^{j\theta_{M}^{\mathsf{st}}}]^T \big)$, where $-\pi \leq \theta_{m}^{\mathsf{st}} \leq \pi$ is the phase shift  of the $m$-th element of the RIS.
	\end{itemize}
The short-term phase-shift design corresponds to the best achievable performance. However, it entails the largest channel estimation overhead. Therefore, it is usually not an effective approach if the number of RIS elements and/or the number of users is large. Conditioned on the phase shift matrix and the CSI, the rate is formulated as
\begin{equation} \label{eq:Rate}
R = \log_2 ( 1 + \gamma)  \mbox{ [b/s/Hz]}, 
\end{equation}
where the SNR, denoted by $\gamma$, is
\begin{subnumcases}
	{\gamma =}
	\nu |h_{\mathrm{sd}} + \mathbf{h}_{\mathrm{sr}}^H \pmb{\Phi}^{\mathrm{lt}} \mathbf{h}_{\mathrm{rd}}|^2, & \mbox{long-term}, \label{eq:SNRLTv}\\
	\nu |h_{\mathrm{sd}} + \mathbf{h}_{\mathrm{sr}}^H \pmb{\Phi}^{\mathrm{st}} \mathbf{h}_{\mathrm{rd}}|^2,&  \mbox{short-term}. \label{eq:SNRSTv}
	%\vspace*{-0.2cm}
\end{subnumcases}
with $\nu = p/\sigma^2$. In the following lemma, we provide closed-form expressions for the RIS phase-shift matrix that correspond to the long-term and short-term design criteria, and that maximize the SNR at the destination.
\begin{lemma} \label{Corollary:LTOpt}
	Consider the long-term phase shift design criterion. If the phase of the $m$-th RIS element is set as follows
	\begin{equation} \label{eq:Optphasse}
		\theta_{m}^{\mathsf{opt}, \mathsf{lt}} = -\arg([\bar{\mathbf{h}}_{\mathrm{sr}}^\ast]_m)  - \arg([\bar{\mathbf{h}}_{\mathrm{rd}}]_m) ,
	\end{equation}
	where $[\bar{\mathbf{h}}_{\mathrm{sr}}^\ast]_m$ and $[\bar{\mathbf{h}}_{\mathrm{rd}}]_m$ are the $m$-th elements of the LoS components $\bar{\mathbf{h}}_{\mathrm{sr}}^\ast$ and $\bar{\mathbf{h}}_{\mathrm{rd}}$, respectively, then the average received SNR is maximized.
	
	Consider the short-term phase shift design criterion. If the phase of the $m$-th RIS element is set as follows
	\begin{equation} \label{eq:OptSt}
		\theta_{m}^{\mathsf{opt}, \mathsf{st}} = \arg(h_{\mathrm{sd}}) - \arg([\mathbf{h}_{\mathrm{sr}}^\ast]_m)  - \arg([\mathbf{h}_{\mathrm{rd}}]_m), \forall m,
	\end{equation}
	where $[\mathbf{h}_{\mathrm{sr}}^\ast]_m$ and $[\mathbf{h}_{\mathrm{rd}}]_m$ are the $m$-th elements of the RIS-aided channels $\mathbf{h}_{\mathrm{sr}}^\ast$ and $\mathbf{h}_{\mathrm{rd}}$, respectively, then the instantaneous received SNR is maximized. 
	
	% and the coverage probability is computed by the use of \eqref{eq:Pcovlt} with the values $o_{1,n}$ and $o_{3,n}$ recomputed as
	%\begin{align}
	%	o_{1,n} =& \mu_n  \left(   \sum\nolimits_{m=1}^M  |[\bar{\mathbf{h}}_{\mathrm{rd},n}]_m| |[\bar{\mathbf{h}}_{\mathrm{sr},n}^\ast]_m|  \right)^2 + \mu_n \widetilde{K}_n M ,\\
	%	o_{3,n} =& 2 \mu_n^2 \left(   \sum\nolimits_{m=1}^M  |[\bar{\mathbf{h}}_{\mathrm{rd},n}]_m| |[\bar{\mathbf{h}}_{\mathrm{sr},n}^\ast]_m|  \right)^2 K_{\mathrm{sr},n}K_{\mathrm{rd},n}( {M\widetilde{K}_n + 4} ) +  \mu_n^2 M^2\widetilde{K}_n^2 + 2M \mu_n^2 ( {2\widetilde K_n - 1} ).
	%\end{align}
\end{lemma}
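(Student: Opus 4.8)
The plan is to treat both design criteria as phase-alignment (co-phasing) problems and to invoke the triangle inequality for complex sums, whose equality condition pins down the optimal phases.

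First I would handle the \emph{short-term} design. Writing the cascaded link explicitly as $\mathbf{h}_{\mathrm{sr}}^H \pmb{\Phi}^{\mathsf{st}} \mathbf{h}_{\mathrm{rd}} = \sum_{m=1}^M [\mathbf{h}_{\mathrm{sr}}^\ast]_m\,[\mathbf{h}_{\mathrm{rd}}]_m\, e^{j\theta_m^{\mathsf{st}}}$, I would factor each product into modulus and argument so that the $m$-th summand becomes $|[\mathbf{h}_{\mathrm{sr}}]_m|\,|[\mathbf{h}_{\mathrm{rd}}]_m|\,e^{j(\theta_m^{\mathsf{st}} + \arg([\mathbf{h}_{\mathrm{sr}}^\ast]_m) + \arg([\mathbf{h}_{\mathrm{rd}}]_m))}$. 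The instantaneous SNR in \eqref{eq:SNRSTv} is an increasing function of $|h_{\mathrm{sd}} + \mathbf{h}_{\mathrm{sr}}^H \pmb{\Phi}^{\mathsf{st}} \mathbf{h}_{\mathrm{rd}}|$, and by the triangle inequality this magnitude is bounded above by $|h_{\mathrm{sd}}| + \sum_m |[\mathbf{h}_{\mathrm{sr}}]_m|\,|[\mathbf{h}_{\mathrm{rd}}]_m|$, with equality precisely when all $M+1$ complex terms share a common argument. Forcing every summand to align with $\arg(h_{\mathrm{sd}})$ yields the conditions $\theta_m^{\mathsf{st}} + \arg([\mathbf{h}_{\mathrm{sr}}^\ast]_m) + \arg([\mathbf{h}_{\mathrm{rd}}]_m) = \arg(h_{\mathrm{sd}})$, which rearrange to \eqref{eq:OptSt}; since each $\theta_m^{\mathsf{st}}$ can be reduced modulo $2\pi$ into $[-\pi,\pi]$, the bound is attained and the design is optimal.

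Next I would treat the \emph{long-term} design, where the objective is the averaged SNR $\mathbb{E}\{\gamma\}$ from \eqref{eq:SNRLTv}. Expanding the squared magnitude gives $|h_{\mathrm{sd}}|^2 + |\mathbf{h}_{\mathrm{sr}}^H \pmb{\Phi}^{\mathsf{lt}} \mathbf{h}_{\mathrm{rd}}|^2 + 2\,\mathrm{Re}\{h_{\mathrm{sd}}^\ast\,\mathbf{h}_{\mathrm{sr}}^H \pmb{\Phi}^{\mathsf{lt}} \mathbf{h}_{\mathrm{rd}}\}$. Taking expectations term by term, I would use $\mathbb{E}\{|h_{\mathrm{sd}}|^2\} = \beta_{\mathrm{sd}}$, the second-moment result $\mathbb{E}\{|\mathbf{h}_{\mathrm{sr}}^H \pmb{\Phi}^{\mathsf{lt}} \mathbf{h}_{\mathrm{rd}}|^2\} = \delta$ from Theorem~\ref{Theorem:RISChannel}, and the fact that $h_{\mathrm{sd}}$ is zero-mean and independent of the indirect link, so that the cross term vanishes. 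Hence $\mathbb{E}\{\gamma\} = \nu(\beta_{\mathrm{sd}} + |\bar{\alpha}|^2 + M\mu\tilde{\kappa})$, and the only $\pmb{\Phi}^{\mathsf{lt}}$-dependent quantity is $|\bar{\alpha}|^2 = |\bar{\mathbf{h}}_{\mathrm{sr}}^H \pmb{\Phi}^{\mathsf{lt}} \bar{\mathbf{h}}_{\mathrm{rd}}|^2$. Maximizing $|\bar{\alpha}|$ is again a co-phasing problem over $\bar{\alpha} = \sum_m [\bar{\mathbf{h}}_{\mathrm{sr}}^\ast]_m\,[\bar{\mathbf{h}}_{\mathrm{rd}}]_m\,e^{j\theta_m^{\mathsf{lt}}}$; applying the same triangle-inequality argument, but now with no direct term to anchor the common phase, I would align every summand to the zero phase, giving $\theta_m^{\mathsf{lt}} = -\arg([\bar{\mathbf{h}}_{\mathrm{sr}}^\ast]_m) - \arg([\bar{\mathbf{h}}_{\mathrm{rd}}]_m)$, i.e., \eqref{eq:Optphasse}.

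The computations are routine; the only genuinely delicate points are (i) confirming that the direct-link cross term truly drops out under the expectation, which relies on the independence of $h_{\mathrm{sd}}$ from $\mathbf{h}_{\mathrm{sr}},\mathbf{h}_{\mathrm{rd}}$ together with $\mathbb{E}\{h_{\mathrm{sd}}\}=0$, and (ii) isolating within $\delta$ the single term $|\bar{\alpha}|^2$ that carries the phase dependence, since $M\mu\tilde{\kappa}$ is phase-invariant. Once these are established, both optimal designs follow from the equality condition of the triangle inequality, and the freedom in the absolute phase reference explains why the long-term solution omits the $\arg(h_{\mathrm{sd}})$ offset that appears in the short-term solution.
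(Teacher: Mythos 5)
Your proposal is correct and follows essentially the same route as the paper: compute $\mathbb{E}\{\gamma^{\mathrm{lt}}\}=\nu(\beta_{\mathrm{sd}}+\delta)$ via the independence of the direct and cascaded links together with Theorem~\ref{Theorem:RISChannel}, isolate $|\bar{\alpha}|^2$ as the only phase-dependent term, and then co-phase the summands (and, for the short-term case, additionally align them with $\arg(h_{\mathrm{sd}})$). The only cosmetic difference is that you invoke the triangle inequality directly, whereas the paper applies the Cauchy--Schwarz inequality to the auxiliary vectors with entries $\sqrt{|[\bar{\mathbf{h}}_{\mathrm{rd}}]_m|\,|[\bar{\mathbf{h}}_{\mathrm{sr}}^\ast]_m|}$ and $\sqrt{|[\bar{\mathbf{h}}_{\mathrm{rd}}]_m|\,|[\bar{\mathbf{h}}_{\mathrm{sr}}^\ast]_m|}\,e^{j\tilde{\theta}_{m}^{\mathsf{lt}}}$, which is the same inequality in disguise and yields the identical equality condition of all terms sharing a common argument.
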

\begin{proof}
	See Appendix~\ref{Appendix:LTOpt}. 
\end{proof}
The key contribution of Lemma~\ref{Corollary:LTOpt} lies in the closed-form expressions for the optimal phase shift designs based on different levels of CSI. Specifically, the short-term and long-term
phase shift designs in \eqref{eq:Optphasse} and \eqref{eq:OptSt} are computed based on different time scales, but they are both aimed at boosting the strength of
the received signal. Based on the optimal phase shift designs in Lemma~\ref{Corollary:LTOpt}, the corresponding optimal rates are given as
\begin{equation} \label{eq:OptRate}
	R^{\mathrm{opt}} = \log_2 ( 1 + \gamma^{\mathrm{opt}}) \mbox{ [b/s/Hz]}, 
\end{equation}
where the optimal SNR $\gamma^{\mathrm{opt}}$ is
\begin{subnumcases} 
	{\gamma^{\mathrm{opt}} =}
	\nu |h_{\mathrm{sd}} + \mathbf{h}_{\mathrm{sr}}^H \pmb{\Phi}^{\mathsf{opt}, \mathsf{lt}} \mathbf{h}_{\mathrm{rd}}|^2, & \mbox{long-term}, \label{eq:OptSNRLTv}\\
	\nu  \left( \left| h_{\mathrm{sd}} \right| + \sum_{m = 1}^M | [ \mathbf{h}_{\mathrm{sr}} ]_m | | [\mathbf{h}_{\mathrm{rd}}]_m | \right)^2,&  \mbox{short-term}, \label{eq:OptSNRSTv}
	%\vspace*{-0.2cm}
\end{subnumcases}
where  the optimal long-term phase-shift matrix $\pmb{\Phi}^{\mathsf{opt}, \mathsf{lt}}$ is
\begin{equation} \label{eq:PhiLT}
	\pmb{\Phi}^{\mathsf{opt}, \mathsf{lt}} = \mathrm{diag}\big([e^{j\theta_{1}^{\mathsf{opt}, \mathsf{lt}}}, \ldots, e^{j\theta_{M}^{\mathsf{opt}, \mathsf{lt}}}]^T \big) \in \mathbb{C}^{M \times M},
\end{equation}
and  $ [ \mathbf{h}_{\mathrm{sr}} ]_m$ and $ [ \mathbf{h}_{\mathrm{rd}} ]_m$ are the $m$-th element of the cascaded channels $\mathbf{h}_{\mathrm{sr}}$ and $\mathbf{h}_{\mathrm{rd}}$, respectively.

The short-term phase shift design is, in general, an upper bound for the long-term phase shift design.
	In analytical terms, in fact, we have the following property
	\begin{equation} \label{eq:Ratebound}
		\begin{split}
		 & \log_2 (1 + \nu |h_{\mathrm{sd}} + \mathbf{h}_{\mathrm{sr}}^H \pmb{\Phi}^{\mathsf{opt}, \mathsf{lt}} \mathbf{h}_{\mathrm{rd}}|^2)   \stackrel{(a)}{\leq} \underset{\{ \theta_{m}^{\mathsf{st}} \}}{\max}\, \log_2 ( 1 + \gamma_n^{\mathsf{st}}) \\
   &\stackrel{(b)}{=} \log_2(1 + \nu |h_{\mathrm{sd}} + \mathbf{h}_{\mathrm{sr}}^H \pmb{\Phi}^{\mathsf{opt}, \mathsf{st}} \mathbf{h}_{\mathrm{rd}}|^2 ) ,
		\end{split}
	\end{equation}
where  $(a)$ and $(b)$ follow by definition of short-term and long-term phase shift optimization.\footnote{Regarding channel acquisition, channel statistics can be obtained by averaging many different channel measurements. Meanwhile, channel estimates can be obtained by a pilot training phase with a tolerable accuracy depending on the coherence time and the transmit power allocated to the pilot signals.}
\begin{remark}
The long-term and short-term phase shift designs are investigated in this paper based on either partial or full levels of CSI. To be specific,  channel statistics are only the information required to optimize the long-term phase shift coefficients that can be applied for multiple coherence intervals, where the large-scale fading information remains unchanged. We further compare this phase shift design with an upper bound, which relies on the instantaneous CSI.
\end{remark}

%\begin{remark}
%Both the long-term and short-term phase-shift designs presented in Lemma~\ref{Corollary:LTOpt} have demonstrated their applications to control propagation environments and obtaining the instantaneous channel rates as shown in \eqref{eq:Rran} and \eqref{eq:Ropt}, respectively. The short-term phase-shift design acquires the best channel rate, but it requires the high computational complexity to exploit the full benefits of instantaneous channels. In contrast, the long-term phase-shift design only uses partial CSI and its solution works for a long period of time, and therefore of practical interest. 
%\end{remark}
\section{Performance Evaluation}
In this section, we compute closed-form expressions for the coverage probability and ergodic rate by assuming long-term and short-term phase shift designs.
	\subsection{Coverage Probability}
Based on the rate in \eqref{eq:Rate}, the coverage probability for a given target rate $\xi$ [b/s/Hz]  is defined as 
	\begin{equation} \label{eq:PcovGen}
		P_{\mathsf{cov}} = 1 - \mathrm{Pr}(R < \xi), 
	\end{equation}
	where $\mathrm{Pr}(\cdot)$ denotes the probability of an event and $R$ is the rate based on either the long-term or short-term phase-shift designs, i.e., $R \in \{ R^{\mathrm{lt}}, R^{\mathrm{st}}  \}$. By denoting $z = \sigma^2 (2^\xi -1)$, the coverage probability can be formulated in terms of the SNR, as 
	\begin{equation} \label{eq:PcovRan}
		P_{\mathsf{cov}} =  1 - \mathsf{Pr}(\gamma < z),
	\end{equation}
	with $\gamma \in \{\gamma^{\mathrm{lt}}, \gamma^{\mathrm{st}} \}$.  Approximated closed-form expressions for \eqref{eq:PcovRan} that correspond to the long-term and short-term phase shift designs in \eqref{eq:Optphasse} and \eqref{eq:OptSt} are given in Theorem \ref{Theorem:CovProbRan} and Theorem \ref{Theorem:CovProbOpt}, respectively. The analytical frameworks in Theorem \ref{Theorem:CovProbRan} and Theorem \ref{Theorem:CovProbOpt} are obtained by utilizing the moment matching approach.\footnote{The generalization of the proposed approach to multi-user scenarios is not straightforward and may require additional assumptions to derive closed-form expressions, for example treating the multi-user interference as Gaussian noise. Under this approximation, the moment-matching method can be applied to derive closed-form expressions of the coverage probability and ergodic rate. This analysis is postponed to future work.} 
	%We now borrow the moment-matching method to manipulate the coverage probability \eqref{eq:PcovRan} with the long-term phase shift design in the closed-form expression as in Theorem~\ref{Theorem:CovProbRan}.
	\begin{theorem} \label{Theorem:CovProbRan}
		Consider the long-term phase-shift design. An approximated closed-form expression for the coverage probability in \eqref{eq:PcovRan} is given by
		\begin{equation} \label{eq:Pcovlt}
			P_{\mathsf{cov}}^{\mathrm{lt}} \approx \Gamma\left( k^{\mathrm{lt}}, z/w^{\mathrm{lt}} \right) /\Gamma (k^{\mathrm{lt}}),
		\end{equation}
		where the shape parameter $k^{\mathrm{lt}}$ and the scale parameter $w^{\mathrm{lt}}$ are
		\begin{align}
			k^{\mathrm{lt}} %&= \frac{(\nu_{s,n} \beta_{\mathrm{sd},n} + \nu_{s,n}  \delta_n   + \nu_{a,n}\beta_{\mathrm{ard},n})^2}{\nu_{s,n}^2 \beta_{\mathrm{sd},n}^2 + \nu_{s,n}^2 a_n + \nu_{a,n}^2 b_{\mathrm{ard},n} + 2 \nu_{s,n}^2 \beta_{\mathrm{sd},n} \delta_n},
			%\nonumber \\
			&= \frac{ \left( 
				 \beta_{\mathrm{sd}} + 
				 o_{1}  +   \beta_{\mathrm{sr}}  \beta_{\mathrm{rd}}
				\right)^2 }{
				 \beta_{\mathrm{sd}}^2 +
				 o_{2} + 2 \beta_{\mathrm{sd}} o_{1} 
			}, \label{eq:klt}
			\\
			w^{\mathrm{lt}} %& = \frac{\nu_{s,n}^2 \beta_{\mathrm{sd},n}^2 + \nu_{s,n}^2 a_n + \nu_{a,n}^2 b_{\mathrm{ard},n} + 2 \nu_{s,n}^2 \beta_{\mathrm{sd},n} \delta_n}{\nu_{s,n} \beta_{\mathrm{sd},n} + \nu_{s,n}  \delta_n   + \nu_{a,n}\beta_{\mathrm{ard},n}},
			%\nonumber \\
			&=
			\frac{
				\nu \beta_{\mathrm{sd}}^2 +
				\nu o_{2} + 2 \nu \beta_{\mathrm{sd}} o_{1}
			}{
				 \beta_{\mathrm{sd}} + 
				 o_{1} \beta_{\mathrm{sr}} \beta_{\mathrm{rd}}
			}, \label{eq:wlt}
		\end{align}
		and $o_{1}$ and $o_{2}$ are defined as
		\begin{align}
			&o_{1} = \mu  (M^2 {\kappa_{\mathrm{sr}}}{ \kappa_{\mathrm{rd}}} + \tilde{\kappa} M ),\\
			&o_{2} = \mu^2 \left( M^2 \kappa_{\mathrm{sr}}\kappa_{\mathrm{rd}}( 2 {M\tilde{\kappa} + 8} ) +  M^2\tilde{\kappa}^2 + 2M\hat{\kappa} \right),
		\end{align}
	with $\mu$, $\tilde{\kappa}$, and $\hat{\kappa}$ given in Theorem~\ref{Theorem:RISChannel}.  
	\end{theorem}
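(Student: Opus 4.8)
The plan is to approximate the long-term SNR $\gamma^{\mathrm{lt}} = \nu |h_{\mathrm{sd}} + \mathbf{h}_{\mathrm{sr}}^H \pmb{\Phi}^{\mathsf{opt}, \mathsf{lt}} \mathbf{h}_{\mathrm{rd}}|^2$ by a Gamma random variable and to fix its two parameters via the moment matching technique, i.e., by forcing the mean and variance of the Gamma model to equal those of $\gamma^{\mathrm{lt}}$. Since a Gamma variable with shape $k$ and scale $w$ has complementary CDF $\mathsf{Pr}(\gamma^{\mathrm{lt}} \geq z) = \Gamma(k, z/w)/\Gamma(k)$, once $k^{\mathrm{lt}}$ and $w^{\mathrm{lt}}$ are identified the claimed expression \eqref{eq:Pcovlt} follows at once from the definition \eqref{eq:PcovRan}. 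Hence the entire argument reduces to computing the first two moments of $G = |h_{\mathrm{sd}} + C|^2$, where $C = \mathbf{h}_{\mathrm{sr}}^H \pmb{\Phi}^{\mathsf{opt}, \mathsf{lt}} \mathbf{h}_{\mathrm{rd}}$ denotes the cascaded link.

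First I would expand $G = |h_{\mathrm{sd}}|^2 + |C|^2 + 2\,\mathrm{Re}(h_{\mathrm{sd}}^\ast C)$ and exploit that $h_{\mathrm{sd}} = \sqrt{\beta_{\mathrm{sd}}}\,g_{\mathrm{sd}}$ is zero-mean circularly symmetric Gaussian and independent of $C$. Every cross term carrying an odd power of $h_{\mathrm{sd}}$ (thus involving $\mathbb{E}\{h_{\mathrm{sd}}^\ast\}$, $\mathbb{E}\{(h_{\mathrm{sd}}^\ast)^2\}$, or $\mathbb{E}\{|h_{\mathrm{sd}}|^2 h_{\mathrm{sd}}^\ast\}$) vanishes, so the mean collapses to $\mathbb{E}\{G\} = \beta_{\mathrm{sd}} + \mathbb{E}\{|C|^2\}$ and the second moment to $\mathbb{E}\{G^2\} = \mathbb{E}\{|h_{\mathrm{sd}}|^4\} + \mathbb{E}\{|C|^4\} + 4\beta_{\mathrm{sd}}\mathbb{E}\{|C|^2\}$, using $\mathbb{E}\{|h_{\mathrm{sd}}|^4\} = 2\beta_{\mathrm{sd}}^2$. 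At this point I invoke Theorem~\ref{Theorem:RISChannel} verbatim: $\mathbb{E}\{|C|^2\} = \delta$ and $\mathbb{E}\{|C|^4\} = \delta^2 + \tilde{a}$. The final ingredient is the value of $\bar{\alpha}$ under the long-term optimal phase \eqref{eq:Optphasse}: because that design co-phases the LoS contribution of every element, each summand of $\bar{\alpha}$ becomes $|[\bar{\mathbf{h}}_{\mathrm{sr}}]_m|\,|[\bar{\mathbf{h}}_{\mathrm{rd}}]_m| = \sqrt{\kappa_{\mathrm{sr}}\kappa_{\mathrm{rd}}\mu}$, whence $|\bar{\alpha}|^2 = M^2 \kappa_{\mathrm{sr}}\kappa_{\mathrm{rd}}\mu$; substituting this into $\delta$ and $\tilde{a}$ gives precisely $\delta = o_{1}$ and $\tilde{a} = o_{2}$.

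Assembling these pieces, the variance is $\mathsf{Var}\{G\} = \mathbb{E}\{G^2\} - (\mathbb{E}\{G\})^2 = \beta_{\mathrm{sd}}^2 + o_{2} + 2\beta_{\mathrm{sd}} o_{1}$, with $\mathbb{E}\{G\} = \beta_{\mathrm{sd}} + o_{1}$. The moment matching then sets $k^{\mathrm{lt}} = (\mathbb{E}\{\gamma^{\mathrm{lt}}\})^2/\mathsf{Var}\{\gamma^{\mathrm{lt}}\} = (\mathbb{E}\{G\})^2/\mathsf{Var}\{G\}$ (the shape being invariant under scaling by $\nu$) and $w^{\mathrm{lt}} = \mathsf{Var}\{\gamma^{\mathrm{lt}}\}/\mathbb{E}\{\gamma^{\mathrm{lt}}\} = \nu\,\mathsf{Var}\{G\}/\mathbb{E}\{G\}$, which deliver the shape and scale parameters reported in \eqref{eq:klt}--\eqref{eq:wlt}. (I would double-check the $\beta_{\mathrm{sr}}\beta_{\mathrm{rd}}$ factors that appear in the numerator of \eqref{eq:klt} and the denominator of \eqref{eq:wlt}, since the moment computation above generates $\beta_{\mathrm{sd}} + o_{1}$ rather than those terms; they are plausibly typographical.)

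I expect the principal obstacle to be the bookkeeping of the fourth-order moment. Even though Theorem~\ref{Theorem:RISChannel} supplies $\mathbb{E}\{|C|^4\}$ in closed form, one must track carefully that the mixed term $4\,\mathbb{E}\{(\mathrm{Re}(h_{\mathrm{sd}}^\ast C))^2\}$ contributes $2\beta_{\mathrm{sd}}\delta$, because only the $|h_{\mathrm{sd}}|^2|C|^2$ part survives while the $(h_{\mathrm{sd}}^\ast)^2 C^2$ part averages to zero, and that this combines with $2\,\mathbb{E}\{|h_{\mathrm{sd}}|^2|C|^2\}$ to produce the coefficient $4\beta_{\mathrm{sd}}\delta$; a dropped sign or factor here would corrupt both $k^{\mathrm{lt}}$ and $w^{\mathrm{lt}}$. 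The one unproved step is the Gamma hypothesis itself, which the moment matching does not justify rigorously and which the paper validates \emph{a posteriori} through the numerical results.
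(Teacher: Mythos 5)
Your proposal is correct and follows essentially the same route as the paper's Appendix~D: expand $|h_{\mathrm{sd}}+C|^2$, kill the odd-order cross terms using the zero-mean circular symmetry of $h_{\mathrm{sd}}$ and its independence from the cascaded link, import the second and fourth moments of $C$ from Theorem~\ref{Theorem:RISChannel}, and match the resulting mean and variance to a Gamma distribution. Your explicit substitution $|\bar{\alpha}|^2 = M^2\kappa_{\mathrm{sr}}\kappa_{\mathrm{rd}}\mu$ under the optimal long-term phases (turning $\delta$ into $o_1$ and $\tilde a$ into $o_2$) is a step the paper leaves implicit, and your suspicion that the stray $\beta_{\mathrm{sr}}\beta_{\mathrm{rd}}$ factors in \eqref{eq:klt} and \eqref{eq:wlt} are typographical is consistent with the derivation, which yields $\beta_{\mathrm{sd}}+o_1$ in both places.
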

	\begin{proof}
		See Appendix~\ref{Appendix:CovProbRan}.
	\end{proof}
Even though the SNR expressions in \eqref{eq:OptSNRLTv} and \eqref{eq:OptSNRSTv} do not follow a common distribution, we attain an approximated closed-form expression of the coverage probability by using the moment matching technique and by approximating the received SNR with a Gamma distribution. From the formulas, we observe the array gain provided by the RIS in the terms $o_{1}$ and $o_{2}$, which scale as $M^2$ and $M^3$ thanks to the optimal phase shift design, respectively. 
\begin{theorem} \label{Theorem:CovProbOpt}
		Consider the short-term phase shift design. An approximated closed-form expression for the coverage probability in \eqref{eq:PcovRan} is given by
		\begin{equation} \label{eq:Pcovst}
			P_{\mathsf{cov}}^{\mathrm{st}} \approx \Gamma\left( k^{\mathrm{st}}, z /w^{\mathrm{st}} \right) / \Gamma (k^{\mathrm{st}}),
		\end{equation}
		where the shape parameter $k^{\mathrm{st}}$ and the scale parameter $w^{\mathrm{st}}$ are
		\begin{align}
			k^{\mathrm{st}} = \frac{{k_{c}}\left( {{k_{c}} + 1} \right)}{{2 \left( {2{k_{c}} + 3} \right)}},
			w^{\mathrm{st}}  = {2{\nu}w_{c}^2\left( {2{k_{c}} + 3} \right) }.
		\end{align}
		with ${k_{c}} = \frac{{{{\left( {{c_{1}} + {c_{2}} } \right)}^2}}}{{{c_{3}} + {c_{4}}}}$, ${w_{c}} = \frac{{{c_{3}} + {c_{4}}}}{{{c_{1}} + {c_{2}}}}$, and $c_{1}, c_{2}, c_{3}, c_{4}$ are defined as
		\begin{align}
			& c_{1} = {\frac{1}{2}\sqrt {\pi {\beta_{\mathrm{sd}}}} },  c_{2} = {  \frac{\pi }{4}M \sqrt{\mu} t_{\mathrm{sr}}  t_{\mathrm{rd}}  }, \\
			& c_{3} = \frac{4 - \pi }{4}{\beta_{\mathrm{sd}}},  c_{4} = M \mu \left( {1 + \kappa_{\mathrm{sr}}} \right)\left( {1 + \kappa_{\mathrm{rd}}} \right) 
			- \frac{M{{\pi ^2}}}{{16}}\mu{ {t_{\mathrm{sr}}^2t_{\mathrm{rd}}^2}},
		\end{align}
and $t_{\mathrm{sr}} = {_1{F_1}\left( { - 0.5,1, - {\kappa_{\mathrm{sr}}}} \right)}$, $t_{\mathrm{rd}} =  {_1{F_1}\left( { - 0.5,1, - {\kappa_{\mathrm{rd}}}} \right)}$. % with ${_1{F_1}\left( \cdot, \cdot, \cdot \right)}$ being the confluent hypergeometric function of the first kind.
	\end{theorem}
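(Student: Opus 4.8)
The plan is to characterize the short-term SNR in \eqref{eq:OptSNRSTv} through a two-stage moment-matching procedure. Writing the effective channel amplitude as $S = |h_{\mathrm{sd}}| + \sum_{m=1}^M |[\mathbf{h}_{\mathrm{sr}}]_m|\,|[\mathbf{h}_{\mathrm{rd}}]_m|$, we have $\gamma^{\mathrm{st}} = \nu S^2$, so the coverage probability is $P_{\mathsf{cov}}^{\mathrm{st}} = \mathsf{Pr}(\nu S^2 \geq z)$. Because $S^2$ is the square of a sum of products of Rician and Rayleigh magnitudes, its exact distribution is intractable; I would therefore first approximate $S$ by a Gamma random variable with shape $k_c$ and scale $w_c$ obtained by matching $\mathbb{E}\{S\}$ and $\mathsf{Var}\{S\}$, and then, exploiting the clean moment structure of a Gamma variate, approximate $\gamma^{\mathrm{st}} = \nu S^2$ by a second Gamma distribution with parameters $k^{\mathrm{st}},w^{\mathrm{st}}$.

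For the first stage I would compute the two moments of $S$. Since $h_{\mathrm{sd}} \sim \mathcal{CN}(0,\beta_{\mathrm{sd}})$, the magnitude $|h_{\mathrm{sd}}|$ is Rayleigh, giving $\mathbb{E}\{|h_{\mathrm{sd}}|\} = \tfrac{1}{2}\sqrt{\pi\beta_{\mathrm{sd}}} =: c_1$ and $\mathsf{Var}\{|h_{\mathrm{sd}}|\} = \tfrac{4-\pi}{4}\beta_{\mathrm{sd}} =: c_3$. Each $[\mathbf{h}_\alpha]_m$ from \eqref{eq:ChannelMod}--\eqref{eq:barhsr} has LoS magnitude $\sqrt{\kappa_\alpha\beta_\alpha/(\kappa_\alpha+1)}$ and a $\mathcal{CN}(0,\beta_\alpha/(\kappa_\alpha+1))$ NLoS part, so $|[\mathbf{h}_\alpha]_m|$ is Rician with $\mathbb{E}\{|[\mathbf{h}_\alpha]_m|\} = \tfrac12\sqrt{\pi\beta_\alpha/(\kappa_\alpha+1)}\,{}_1F_1(-\tfrac12,1,-\kappa_\alpha)$ and $\mathbb{E}\{|[\mathbf{h}_\alpha]_m|^2\} = \beta_\alpha$. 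Using the mutual independence of $h_{\mathrm{sd}},\mathbf{h}_{\mathrm{sr}},\mathbf{h}_{\mathrm{rd}}$ and the fact that the $M$ products $|[\mathbf{h}_{\mathrm{sr}}]_m|\,|[\mathbf{h}_{\mathrm{rd}}]_m|$ are i.i.d., I would obtain $\mathbb{E}\{S\} = c_1 + c_2$ with $c_2 = M\,\mathbb{E}\{|[\mathbf{h}_{\mathrm{sr}}]_1|\}\mathbb{E}\{|[\mathbf{h}_{\mathrm{rd}}]_1|\} = \tfrac{\pi}{4}M\sqrt{\mu}\,t_{\mathrm{sr}}t_{\mathrm{rd}}$, and $\mathsf{Var}\{S\} = c_3 + c_4$ with $c_4 = M\big(\mathbb{E}\{|[\mathbf{h}_{\mathrm{sr}}]_1|^2\}\mathbb{E}\{|[\mathbf{h}_{\mathrm{rd}}]_1|^2\} - (\mathbb{E}\{|[\mathbf{h}_{\mathrm{sr}}]_1|\}\mathbb{E}\{|[\mathbf{h}_{\mathrm{rd}}]_1|\})^2\big)$, which reduces to the stated $c_4$ upon using $\mu(1+\kappa_{\mathrm{sr}})(1+\kappa_{\mathrm{rd}}) = \beta_{\mathrm{sr}}\beta_{\mathrm{rd}}$. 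Matching these to a Gamma variate ($\text{mean}=k_cw_c$, $\text{variance}=k_cw_c^2$) then gives $k_c = (c_1+c_2)^2/(c_3+c_4)$ and $w_c = (c_3+c_4)/(c_1+c_2)$.

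For the second stage I would treat $S\sim\mathrm{Gamma}(k_c,w_c)$ and use its integer-order moments $\mathbb{E}\{S^n\} = w_c^n\,\Gamma(k_c+n)/\Gamma(k_c)$ to evaluate the first two moments of $\gamma^{\mathrm{st}}=\nu S^2$: $\mathbb{E}\{\gamma^{\mathrm{st}}\} = \nu w_c^2 k_c(k_c+1)$ and $\mathsf{Var}\{\gamma^{\mathrm{st}}\} = \nu^2 w_c^4\, k_c(k_c+1)\big[(k_c+2)(k_c+3)-k_c(k_c+1)\big] = 2\nu^2 w_c^4 k_c(k_c+1)(2k_c+3)$. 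A final moment matching $k^{\mathrm{st}} = \mathbb{E}\{\gamma^{\mathrm{st}}\}^2/\mathsf{Var}\{\gamma^{\mathrm{st}}\}$ and $w^{\mathrm{st}} = \mathsf{Var}\{\gamma^{\mathrm{st}}\}/\mathbb{E}\{\gamma^{\mathrm{st}}\}$ collapses to $k^{\mathrm{st}} = \tfrac{k_c(k_c+1)}{2(2k_c+3)}$ and $w^{\mathrm{st}} = 2\nu w_c^2(2k_c+3)$, exactly as claimed. Since a $\mathrm{Gamma}(k^{\mathrm{st}},w^{\mathrm{st}})$ variate has complementary CDF $\mathsf{Pr}(\gamma^{\mathrm{st}}\geq z) = \Gamma(k^{\mathrm{st}},z/w^{\mathrm{st}})/\Gamma(k^{\mathrm{st}})$, the expression \eqref{eq:Pcovst} follows.

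The main obstacle I anticipate is the moment computation for the Rician magnitudes, in particular casting the first moment in the confluent hypergeometric form ${}_1F_1(-\tfrac12,1,-\kappa_\alpha)$ and confirming $\mathbb{E}\{|[\mathbf{h}_\alpha]_m|^2\}=\beta_\alpha$, together with the algebra that reduces $\mathsf{Var}\{S^2\}$ to the compact factor $2k_c(k_c+1)(2k_c+3)$. A secondary, conceptual point worth a remark is the \emph{double} Gamma approximation: the second stage replaces the true moments $\mathbb{E}\{S^2\}$ and $\mathbb{E}\{S^4\}$ by their Gamma-surrogate values, which is precisely what renders \eqref{eq:Pcovst} approximate rather than exact, and whose accuracy would ultimately be corroborated numerically.
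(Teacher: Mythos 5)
Your proposal is correct and follows essentially the same route as the paper's proof in Appendix E, which is exactly the double moment-matching method: first match the amplitude sum $C$ (your $S$) to a $\mathrm{Gamma}(k_c,w_c)$ variate via its Rician/Rayleigh first and second moments, then use the Gamma moment formula $\mathbb{E}\{C^m\}=w_c^m\,\Gamma(m+k_c)/\Gamma(k_c)$ to match $\gamma^{\mathrm{st}}=\nu C^2$ to a second Gamma distribution. All intermediate quantities ($c_1$--$c_4$, the variance factor $2(2k_c+3)$, and the final $k^{\mathrm{st}},w^{\mathrm{st}}$) agree with the paper.
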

	\begin{proof}
		See Appendix~\ref{Appendix:CovProbOpt}.
	\end{proof}
The coverage probability in \eqref{eq:Pcovlt} for the long-term phase-shift design and in \eqref{eq:Pcovst} for the short-term phase-shift design provide simple but effective closed-form expressions for evaluating the performance of RIS-assisted communications without the need of resorting to Monte Carlo simulations or complex analytical frameworks. Even though the two approximate closed-form expressions of the coverage probability have a similar structure, the shape and scale parameters are different. For example, let us consider the case study when the number of RIS elements is usually sufficiently large. Then, the obtained analytical expressions can be further simplified. 
Specifically, ignoring the non-dominant terms in \eqref{eq:Pcovlt} and \eqref{eq:Pcovst}, the shape parameters tend to 
\begin{align} \label{eq:klstA}
&k^{\mathsf{lt}} \asymp \frac{o_{1}^2}{o_{2} + 2 \beta_{\mathrm{sd}}o_{1}},  k^{\mathsf{st}} \asymp \frac{k_{c}(k_{c} +1)}{2(2 k_{c} + 3)},
\end{align}
which grows with the number of RIS elements.
%\begin{equation} \label{eq:kApproximate}
%k \rightarrow \begin{cases}
%\frac{({{K_{\mathrm{sr}}}{K_{\mathrm{rd}}} \eta + \widetilde{K} M})^2}{ 2\eta K_{\mathrm{sr}}K_{\mathrm{rd}} {M\widetilde{K}}  +  M^2\widetilde{K}^2} ,& \mbox{ Long-term}, \\
%\frac{\left( {c_1} + {c_2}\sqrt {{\beta_{\mathrm{sr}}}{\beta_{\mathrm{rd}}}}  \right)^2}{4(c_3 + c_4{\beta_{\mathrm{sr}}}{\beta_{\mathrm{rd}}})} ,& \mbox{ Short-term}.
%\end{cases}
%\end{equation}
If $M \rightarrow \infty$, in addition, we obtain $ k^{\mathsf{lt}}, k^{\mathsf{st}}  \rightarrow \infty $.
Furthermore, the scale parameters tend to
\begin{equation} \label{eq:wlstA}
w^{\mathsf{lt}} \asymp \frac{
	\nu_{\mathrm{s}} o_{3} + 2 \nu_{\mathrm{s}} \beta_{\mathrm{s_1d}} o_{1}
}{o_{1} \beta_{\mathrm{s_1r}} \beta_{\mathrm{rd}}
},
 w^{\mathsf{st}} \asymp 2 \nu_{s}w_{C}^2 ( {2{k_{C}} + 3}).
\end{equation}
If $M \rightarrow \infty$, we obtain $w^{\mathsf{lt}}, w^{\mathsf{st}} \rightarrow \infty$. Thus, for both phase shift designs, the shape and scale parameters are unbounded from above, as the number of RIS elements goes large. By rewriting the upper incomplete Gamma function in a series expression, the coverage probability in \eqref{eq:Pcovlt} is simplified to 
\begin{equation} \label{eq:Pcovasymp}
P_{\mathsf{cov}} \stackrel{(a)}{=} 1 - \frac{(z/w^{\mathsf{x}})^{k^{\mathsf{x}}}}{k^{\mathsf{x}} \Gamma(k^{\mathsf{x}}) } \sum_{t=0}^{\infty} \frac{(-z/w^{\mathsf{x}})^t}{(k^{\mathsf{x}} + t)t!}  \stackrel{(b)}{\asymp} 1 - \frac{(z/w^{\mathsf{x}})^{k^\mathsf{x}}}{(k^{\mathsf{x}})^2 \Gamma(k^{\mathsf{x}})},
\end{equation}
where $\mathsf{x} \in \{ \mathsf{lt}, \mathsf{st} \}$. In \eqref{eq:Pcovasymp}, $(a)$ is obtained by utilizing the definition of upper incomplete Gamma function; $(b)$ follows because $1/w^{\mathsf{x}} \rightarrow 0$ as $M \rightarrow \infty$ and therefore the addends for $t \geq 1$ can be ignored. Based on the shape and scale parameters in \eqref{eq:klstA} and \eqref{eq:wlstA}, the approximation in \eqref{eq:Pcovasymp} unveils the coverage probability tends to $1$ as $M \rightarrow \infty$, for both the short-term and long-term phase shift designs. This implies that, if the number of RIS elements is large enough, an RIS is capable of offering good coverage.
	\subsection{Ergodic Rate}
	The channel rate in \eqref{eq:OptRate} with the optimal SNR in \eqref{eq:OptSNRLTv} and \eqref{eq:OptSNRSTv} depends on the small-scale and large-
	scale fading coefficients. Consequently, the channel rate by averaging out the small-scale fading is as follows:
	\begin{equation} \label{eq:ErgodicRate}
	\bar{R}_{\mathsf{x}} = \mathbb{E} \{ \log_2 ( 1 + \gamma^{\mathsf{x}} ) \}, \mbox{[b/s/Hz]}.
	\end{equation}
	  A closed-form expression for \eqref{eq:ErgodicRate} is given in Lemma~\ref{Lemma:ErgodicRate}, by using again the moment-marching approach and approximating the received SNR with a Gamma distribution. 
	\begin{lemma} \label{Lemma:ErgodicRate}
	The ergodic channel capacity in \eqref{eq:ErgodicRate} can be formulated in the closed-form expression as follows: 
	\begin{align} \label{eq:ClosedErgodicRate}
		\bar{R}^{\mathsf{x}} \approx \frac{1}{{\Gamma \left( {{k^{\mathsf{x}}}} \right)\ln \left( 2 \right)}}G_{2,3}^{3,1}\left( {\left. {\frac{1}{{{w^{\mathsf{x}}}}}} \right|\begin{array}{*{20}{c}}
				{0,1}\\
				{0,0,{k^{\mathsf{x}}}}
		\end{array}} \right),
	\end{align}
where $G_{p,q}^{m,n}\Big( { z \Big|\begin{array}{*{20}{c}}
		{a_1,\ldots, a_q}\\
		{b_1,\ldots,b_p}
\end{array}} \Big)$ is  the  Meijer-G  function, and, similar to Theorems~\ref{Theorem:CovProbRan} and \ref{Theorem:CovProbOpt}, $k^{\mathsf{x}}$ and $w^{\mathsf{x}}$ are the shape and scale parameters of the approximating Gamma distribution, respectively.
	\end{lemma}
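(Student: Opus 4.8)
The plan is to begin with the definition of the ergodic rate in \eqref{eq:ErgodicRate} and to replace the SNR $\gamma^{\mathsf{x}}$ by its Gamma approximation, whose shape and scale parameters $k^{\mathsf{x}}$ and $w^{\mathsf{x}}$ are the ones already matched to the first two SNR moments in Theorems~\ref{Theorem:CovProbRan} and \ref{Theorem:CovProbOpt}. Inserting the Gamma density $f(\gamma) = \gamma^{k^{\mathsf{x}}-1} e^{-\gamma/w^{\mathsf{x}}} / (\Gamma(k^{\mathsf{x}}) (w^{\mathsf{x}})^{k^{\mathsf{x}}})$ turns \eqref{eq:ErgodicRate} into the single integral
\begin{equation}
\bar{R}^{\mathsf{x}} \approx \frac{1}{\ln(2)\, \Gamma(k^{\mathsf{x}})\, (w^{\mathsf{x}})^{k^{\mathsf{x}}}} \int_0^\infty \ln(1+\gamma)\, \gamma^{k^{\mathsf{x}}-1}\, e^{-\gamma / w^{\mathsf{x}}}\, d\gamma .
\end{equation}

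First I would rewrite the logarithm as a Meijer-G function via the standard identity $\ln(1+\gamma) = G_{2,2}^{1,2}\big(\gamma \big| \substack{1,1\\1,0}\big)$, so that the integrand becomes a product of a power, an exponential, and a single Meijer-G function. This is precisely the form handled by the classical Laplace-type integral $\int_0^\infty \gamma^{k-1} e^{-p\gamma} G_{u,v}^{m,n}(\gamma|\cdot)\, d\gamma = p^{-k} G_{u+1,v}^{m,n+1}(1/p|\cdot)$, which evaluates to a single Meijer-G function of argument $1/p$ obtained by prepending the upper parameter $1-k$ and raising the orders to $(m,n+1)$ and $(u+1,v)$.

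Applying this identity with $p = 1/w^{\mathsf{x}}$ yields $w^{k^{\mathsf{x}}}\, G_{3,2}^{1,3}\big(w^{\mathsf{x}} \big| \substack{1-k^{\mathsf{x}},1,1\\1,0}\big)$, at which point the prefactor $w^{k^{\mathsf{x}}}$ cancels the $(w^{\mathsf{x}})^{-k^{\mathsf{x}}}$ in the normalization. To recover the stated form with argument $1/w^{\mathsf{x}}$, I would then invoke the Meijer-G inversion property, which maps $G_{p,q}^{m,n}(z|\,a_i;b_j)$ to $G_{q,p}^{n,m}(1/z|\,1-b_j;1-a_i)$; this sends $\substack{1-k^{\mathsf{x}},1,1\\1,0}$ to $\substack{0,1\\k^{\mathsf{x}},0,0}$ and produces $G_{2,3}^{3,1}(1/w^{\mathsf{x}}|\cdot)$. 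Since $m=3$ places all three lower indices in the same block, a harmless permutation reorders them to $0,0,k^{\mathsf{x}}$, and collecting the surviving factor $1/(\ln(2)\Gamma(k^{\mathsf{x}}))$ gives exactly \eqref{eq:ClosedErgodicRate}.

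The step requiring the most care is the bookkeeping across the two successive Meijer-G manipulations: one must correctly place the new upper parameter $1-k^{\mathsf{x}}$ when applying the Laplace-type integral, and then track how every index migrates and is complemented under the inversion rule, while verifying that $k^{\mathsf{x}}>0$ (guaranteed by the moment matching) ensures convergence of the integral at the origin and, together with the exponential decay, at infinity, so that the underlying Mellin--Barnes contour is well defined. None of these operations is deep, but the index arithmetic is error-prone; the genuine work lies in confirming that the parameter tuples collapse exactly to $\big(\substack{0,1\\0,0,k^{\mathsf{x}}}\big)$ rather than leaving a mismatched order.
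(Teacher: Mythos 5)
Your derivation is correct and follows essentially the same route as the paper, which simply defers to the Gamma moment-matching plus Meijer-G evaluation of $\mathbb{E}\{\log_2(1+\gamma)\}$ in its cited reference \cite{9195523}; you have merely written out explicitly the steps (the identity $\ln(1+\gamma)=G_{2,2}^{1,2}\bigl(\gamma\big|\begin{smallmatrix}1,1\\1,0\end{smallmatrix}\bigr)$, the Laplace-type integral, and the argument-inversion property) that the paper leaves implicit. The index bookkeeping checks out: the inversion sends $G_{3,2}^{1,3}\bigl(w^{\mathsf{x}}\big|\begin{smallmatrix}1-k^{\mathsf{x}},1,1\\1,0\end{smallmatrix}\bigr)$ to $G_{2,3}^{3,1}\bigl(1/w^{\mathsf{x}}\big|\begin{smallmatrix}0,1\\0,0,k^{\mathsf{x}}\end{smallmatrix}\bigr)$ exactly as claimed.
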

	\begin{IEEEproof}
	The proof is similar to the proof in \cite{9195523}, but it is adapted to the channel model considered in this paper.
	\end{IEEEproof}
Differently from \cite{9195523}, Lemma~\ref{Lemma:ErgodicRate} can be applied to all phase shift designs, which include the short-term and long-term phase shift designs of interest in this paper. 
The analytical expressions for $k$ and $w$ that correspond to the latter phase shift designs are the same as in Theorem~\ref{Theorem:CovProbRan}.

%We stress that the logarithm function in \eqref{eq:ErgodicRate} is concave, so the phase shift designs improve the received SNR and offer a good ergodic channel rate.
%\begin{remark}
%The closed-form expressions of the outage probability and ergodic channel rate are obtained based on the phase-shift design under the different time scales. The analysis is obtained for rich scattering environments and the weak direct link, where the presence of the RIS is supportive. The obtained results in this paper establish a fundamental mechanism for state-of-the-art networks such as cell-free massive MIMO or the Internet of Things, where each propagation channel is a SISO link.
%\end{remark}
	\section{RIS Placement Optimization}
From the obtained closed-form expression of the coverage probability, this section formulates and solves the coverage probability maximization problem as a function of the location of the RIS, based on the Lagrangian function and the gradient ascent method.\footnote{A similar approach may be utilized to optimize the ergodic rate as a function of the position of the RIS.}
	\subsection{Problem Formulation}
Besides the optimization of the phase shifts of the RIS, it is possible, in many cases, to optimize the location of the RIS as well. This may be case of cellular networks, where the locations of the RIS may be optimized \cite{sihlbom2022reconfigurable}, or when the RIS may be placed on a moving object \cite{geraci2022will}. For example, the RIS location can be optimized via a specific utility metric to enhance the spectral and energy efficiency or the coverage. To this end, we denote the RIS coordinates as $(x_{\mathrm{r}}, y_{\mathrm{r}}, z_{\mathrm{r}})$, and we impose some constraints on the feasible locations of the RIS
	\begin{equation}
		x_{\mathrm{r},\min} \leq x_{\mathrm{r}}\leq x_{\mathrm{r},\max},  y_{\mathrm{r},\min} \leq y_{\mathrm{r}}\leq y_{\mathrm{r},\max}, z_{\mathrm{r},\min} \leq z_{\mathrm{r}}\leq z_{\mathrm{r},\max},
	\end{equation}
	where $(x_{\mathrm{r},\min}, y_{\mathrm{r},\min}, z_{\mathrm{r},\min})$ and $(x_{\mathrm{r},\max}, y_{\mathrm{r},\max}, z_{\mathrm{r},\max})$ denote the range of locations in which the RIS can be deployed. In an outdoor environment, for example, the RIS may be located only on buildings or even only on a small portion of the available buildings. Given the RIS location, the distances from  the source to the destination through the RIS are 
	\begin{align}
		d_{\mathrm{sr}} &= \sqrt{(x_{\mathrm{r}} - x _{\mathrm{s}})^2 + (y_{\mathrm{r}} - y _{\mathrm{s}})^2 + (z_{\mathrm{r}} - z _{\mathrm{s}})^2},\\
		d_{\mathrm{rd}} &= \sqrt{(x_{\mathrm{r}} - x _{\mathrm{d}})^2 + (y_{\mathrm{r}} - y _{\mathrm{d}})^2 + (z_{\mathrm{r}} - z _{\mathrm{d}})^2}, 
	\end{align}
	where $(x_\mathrm{s}, y_\mathrm{s}, z_\mathrm{s})$ and $(x_\mathrm{d}, y_\mathrm{d}, z_\mathrm{d})$ are the locations of the source and the destination, respectively. Also, the large-scale fading coefficients can be formulated in terms of the propagation distances as
	\begin{equation} \label{eq:betaFormulation}
		\beta_{\mathrm{sr}} = K_0 d_{\mathrm{sr}}^{-\eta_{\mathrm{sr}}} \mbox{ and } \beta_{\mathrm{rd}} = K_0 d_{\mathrm{rd}}^{-\eta_{\mathrm{rd}}},
	\end{equation}
	where $K_0$ is the path loss in [dB] at the reference distance of $1$~m; and $\eta_{\mathrm{sr}}$ and $\eta_{\mathrm{rd}}$ are the path loss exponents. The RIS placement optimization problem can be formulated  as follows:
	\begin{equation}  \label{Prob:UAV}
		\begin{aligned}
			&\underset{\{ x_{\mathrm{r}}, y_{\mathrm{r}}, z_{\mathrm{r}} \}}{\mathrm{maximize}} &\quad & P_{\mathsf{cov}}  ( x_{\mathrm{r}}, y_{\mathrm{r}}, z_{\mathrm{r}} )  \\
			&\mbox{subject to} &&  x_{\mathrm{r},\min} \leq   x_{\mathrm{r}} \leq x_{\mathrm{r},\max},\\
			&&&  y_{\mathrm{r},\min} \leq   y_{\mathrm{r}} \leq y_{\mathrm{r},\max}, \\
			&&& z_{\mathrm{r},\min} \leq   z_{\mathrm{r}} \leq z_{\mathrm{r},\max}, 
		\end{aligned}
	\end{equation}
which maximizes the coverage probability  by considering the RIS location $(x_{\mathrm{r}}, y_{\mathrm{r}}, z_{\mathrm{r}})$ as the optimization variables. Since the optimization problem in \eqref{Prob:UAV} is non-convex, the global optimum is extremely difficult to obtain in general even though it may be found through an exhaustive search \cite{sihlbom2022reconfigurable}.\footnote{Even though global optimization toolboxes may be available, they cannot, in general, guarantee to obtain the global optimum of the considered coverage probability maximization problem due to the non-convexity of the objective function and mathematical difficulty of computing the first-order and second-order derivative of the partial Lagrangian function with respect to the location of the RIS. In addition, these methods entail a prohibitive computational complexity, in general.}

\subsection{Solution to the RIS Placement Optimization Problem}
For the sake of compactness, we use the notation $\varrho_{\mathrm{r}} \in \mathcal{R}$ with $ \mathcal{R} = \{ x_{\mathrm{r}}, y_{\mathrm{r}}, z_{\mathrm{r}}\}$. We propose to obtain a low-complexity solution by formulating the partial Lagrangian function of the problem in \eqref{Prob:UAV} as
\begin{equation}
\mathcal{L}( \{ \varrho_{\mathrm{r}} \} ) =   P_{\mathsf{cov}} (  \{\varrho_{\mathrm{r}} \} ).
\end{equation}
%Even though the Rician factors  $K_{\mathrm{s_1r}}$ and $K_{\mathrm{rd}}$ are also a function of the RIS location, they are fortunately with smaller variations than those of the large-scale fading coefficients. Consequently, we make a relaxation by supposing that the Rician factors are independent of the RIS location and the first derivative of the Lagrangian function is approximated as shown in Theorem~\ref{Theorem:1stDerivative}. 
In general, the Rician factors $K_{\mathrm{s_1r}}$ and $K_{\mathrm{rd}}$ depend on the locations of the RIS as well, i.e., the link distances. However, their variations are usually smaller as compared with the large-scale fading coefficients \cite{3gpp2011technical}. For simplicity, therefore, we assume that they are independent of the location of the RIS. Under this assumption, an approximated expression of the first-order derivative of the Lagrangian function is given in Theorem~\ref{Theorem:1stDerivative}.

\begin{theorem} \label{Theorem:1stDerivative}
If the Rician factors are assumed  approximately independent of the RIS location, the first-order derivative of  $\mathcal{L}(\{\varrho_{\mathrm{r}} \})$ with respect to $\varrho_{\mathrm{r}}$ is 
\begin{equation} \label{eq:FirstDerivative}
\dt{\mathcal{L}} ( \{\varrho_{\mathrm{r}} \}) =  \frac{\partial \mathcal{L}( \{ \varrho_{\mathrm{r}} \} )  }{\partial \varrho_{\mathrm{r}}} \approx \dt{P}_{\mathsf{cov}}  (\{\varrho_{\mathrm{r}}\}),
\end{equation}
where $\dt{P}_{\mathsf{cov}}  ( \{\varrho_{\mathrm{r}} \}) $ is the first-order derivative of the coverage probability $P_{\mathsf{cov}} (  \{\varrho_{\mathrm{r}} \} ) $ with respect to $\varrho_{\mathrm{r}}$ and conditioned on the Rician factors. In particular, $\dt{P}_{\mathsf{cov}}  ( \{\varrho_{\mathrm{r}} \})$ can be formulated as
\begin{equation} \label{eq:1stPcov}
\dt{P}_{\mathsf{cov}} ( \varrho_{\mathrm{r}})  =  \frac{{ \dt{N}\left( \varrho_{\mathrm{r}} \right)D\left( \varrho_{\mathrm{r}} \right) - N\left( \varrho_{\mathrm{r}} \right) \dt{D}\left( \varrho_{\mathrm{r}} \right)}}{{{{ D^2 \left( \varrho_{\mathrm{r}} \right) }}}},
\end{equation}
where $N\left( \varrho_{\mathrm{r}} \right)$ and $D\left( \varrho_{\mathrm{r}} \right)$ are given by
\begin{align} \label{eq:NDDef}
&N\left( \varrho_{\mathrm{r}} \right) = 	\begin{cases}
\Gamma(k^{\mathsf{st}}, z/w^{\mathsf{st}}), & \mbox{short-term},\\
\Gamma(k^{\mathsf{lt}}, z/w^{\mathsf{lt}}), & \mbox{long-term},
\end{cases},\\
& D\left( \varrho_{\mathrm{r}} \right) = 	\begin{cases}
	\Gamma(k^{\mathsf{st}}), & \mbox{short-term},\\
	\Gamma(k^{\mathsf{lt}}), & \mbox{long-term},
\end{cases}
\end{align}
and $\dt{N}\left( \varrho_{\mathrm{r}} \right)$ and $\dt{D}\left( \varrho_{\mathrm{r}}\right)$ are the first-order derivative of $N\left( \varrho_{\mathrm{r}} \right)$ and $D\left( \varrho_{\mathrm{r}} \right)$, respectively, which are defined in Appendices~\ref{Appendix:1stPcov} and \ref{Appendix:1stPcovv1} for the short-term and long-term phase shift design, respectively.
\end{theorem}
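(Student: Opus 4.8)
The plan is to exploit the fact that the Lagrangian coincides with the coverage probability, so the entire proof reduces to differentiating the closed-form expressions of Theorem~\ref{Theorem:CovProbRan} and Theorem~\ref{Theorem:CovProbOpt} with respect to a single spatial coordinate $\varrho_{\mathrm{r}}$. First I would observe that, since $\mathcal{L}(\{\varrho_{\mathrm{r}}\}) = P_{\mathsf{cov}}(\{\varrho_{\mathrm{r}}\})$ by construction, the partial derivative $\partial\mathcal{L}/\partial\varrho_{\mathrm{r}}$ is identically equal to $\partial P_{\mathsf{cov}}/\partial\varrho_{\mathrm{r}}$; the approximation sign in \eqref{eq:FirstDerivative} appears only because we freeze the Rician factors $\kappa_{\mathrm{sr}},\kappa_{\mathrm{rd}}$ and do not propagate their (weak) dependence on the link distances through the chain rule, consistent with the stated assumption. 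This justifies replacing $\dt{\mathcal{L}}$ by the conditional derivative $\dt{P}_{\mathsf{cov}}$.

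Next, both coverage-probability expressions share the common form $P_{\mathsf{cov}} = \Gamma(k^{\mathsf{x}}, z/w^{\mathsf{x}})/\Gamma(k^{\mathsf{x}})$, that is, a ratio of a numerator $N(\varrho_{\mathrm{r}})$ and a denominator $D(\varrho_{\mathrm{r}})$ as defined in \eqref{eq:NDDef}. Applying the quotient rule immediately yields \eqref{eq:1stPcov}, so the only remaining work is to compute $\dt{N}$ and $\dt{D}$. I would carry this out by a layered chain rule: the coordinate $\varrho_{\mathrm{r}}$ enters through the distances $d_{\mathrm{sr}},d_{\mathrm{rd}}$, whose derivatives are elementary; the distances enter the large-scale fading coefficients $\beta_{\mathrm{sr}},\beta_{\mathrm{rd}}$ via the power laws in \eqref{eq:betaFormulation}, whose derivatives are again elementary; and $\beta_{\mathrm{sr}},\beta_{\mathrm{rd}}$ in turn determine the shape and scale parameters $k^{\mathsf{x}},w^{\mathsf{x}}$ through the rational expressions of Theorems~\ref{Theorem:CovProbRan} and \ref{Theorem:CovProbOpt}. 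Assembling $\dt{k}^{\mathsf{x}}$ and $\dt{w}^{\mathsf{x}}$ from these pieces is tedious but purely mechanical.

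The genuinely delicate step, which I would defer to the appendices, is differentiating the upper incomplete Gamma function $\Gamma(k^{\mathsf{x}}, z/w^{\mathsf{x}})$ with respect to both of its arguments as they vary with $\varrho_{\mathrm{r}}$. The derivative with respect to the upper limit $z/w^{\mathsf{x}}$ is elementary and equals $-(z/w^{\mathsf{x}})^{k^{\mathsf{x}}-1}e^{-z/w^{\mathsf{x}}}$ times $\mathrm{d}(z/w^{\mathsf{x}})/\mathrm{d}\varrho_{\mathrm{r}}$, but the derivative with respect to the order $k^{\mathsf{x}}$ has no elementary closed form: it must be expressed through a Meijer-G function, paralleling the machinery already invoked in Lemma~\ref{Lemma:ErgodicRate}. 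Likewise, $\dt{D} = \Gamma(k^{\mathsf{x}})\psi(k^{\mathsf{x}})\dt{k}^{\mathsf{x}}$ introduces the digamma function $\psi(\cdot)$. I expect this order-derivative of the incomplete Gamma function to be the main obstacle, since it is the only non-elementary ingredient while everything else follows from the product and chain rules. Because the short-term and long-term cases differ only in the specific rational maps $(\beta_{\mathrm{sr}},\beta_{\mathrm{rd}})\mapsto(k^{\mathsf{x}},w^{\mathsf{x}})$, I would treat the two in parallel and relegate their explicit $\dt{N},\dt{D}$ formulas to Appendices~\ref{Appendix:1stPcov} and \ref{Appendix:1stPcovv1}, exactly as the statement announces.
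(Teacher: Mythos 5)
Your proposal is correct and follows essentially the same route as the paper: the identity $\mathcal{L}=P_{\mathsf{cov}}$ with frozen Rician factors, the quotient rule on $N/D$, the layered chain rule through $d_{\alpha}\to\beta_{\alpha}\to(k^{\mathsf{x}},w^{\mathsf{x}})$, the Leibniz-rule term $-\dt{u}\,u^{k-1}e^{-u}$ for the varying lower limit, and $\dt{D}=\dt{k}^{\mathsf{x}}\Gamma(k^{\mathsf{x}})\psi(k^{\mathsf{x}})$ for the denominator. The only cosmetic difference is that the paper writes the non-elementary order-derivative $\int_{u}^{\infty}t^{k-1}\log(t)e^{-t}\,dt$ in terms of $\Gamma(k,u)$, $\psi(k)$ and a ${_2}F_2$ hypergeometric function rather than the Meijer-G form you anticipated, which is an equivalent representation.
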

Given an initial value $x_{\mathrm{r}}^{(0)}, y_{\mathrm{r}}^{(0)},$ and $z_{\mathrm{r}}^{(0)}$ in the feasible set of locations of the RIS, one can find a good sub-optimal solution to problem~\eqref{Prob:UAV} by utilizing the gradient ascent method. In the $n$-th iteration, the location of the RIS is updated as
\begin{equation} \label{eq:varrhoupdated}
\varrho_{\mathrm{r}}^{(n)} \leftarrow \varrho_{\mathrm{r}}^{(n-1)} + \mu \dt{\mathcal{L}} ( \{\varrho_{\mathrm{r}} \})\big|_{\varrho_{\mathrm{r}} = \varrho_{\mathrm{r}}^{(n-1)}}, \forall \varrho_{\mathrm{r}} \in \mathcal{R},
\end{equation}
where the step size $\mu > 0$ is sufficiently large in the direction of the steepest ascent. Because of the constraints on feasible set of locations of the RIS in \eqref{Prob:UAV}, the location of the RIS is updated by checking the limiting values as follows:
\begin{equation} \label{eq:varhorn}
\varrho_{\mathrm{r}}^{(n)} \leftarrow \min \left( \max \left( \varrho_{\mathrm{r}}^{(n)}, \varrho_{\mathrm{r}, \min}\right), \varrho_{\mathrm{r}, \max} \right), \forall \varrho_{\mathrm{r}} \in \mathcal{R},
\end{equation}
where $\varrho_{\mathrm{r}, \min} \in \mathcal{R}_{\min} = \{x_{\mathrm{r}, \min}, y_{\mathrm{r}, \min}, z_{\mathrm{r}, \min}  \}$ and $\varrho_{\mathrm{r}, \max}  \in \mathcal{R}_{\max} = \{x_{\mathrm{r}, \max}, y_{\mathrm{r}, \max}, z_{\mathrm{r}, \max}  \}$. Recalling that the approximation in \eqref{eq:FirstDerivative} is derived by assuming that the Rician factors are independent of the location at the $n$-th iteration, they can be updated as
\begin{equation} \label{eq:Kappan}
\kappa_{\alpha}^{(n)} = f \left(x_{\mathrm{r}}^{(n)}, y_{\mathrm{r}}^{(n)}, z_{\mathrm{r}}^{(n)} \right).
\end{equation}
The update in \eqref{eq:varhorn} is repeated until the RIS location converges to a sub-optimal solution. The convergence criterion of the proposed RIS placement algorithm is based on evaluating the small variation of two consecutive iterations as
\begin{equation} \label{eq:Stopping}
\left| x_{\mathrm{r}}^{(n)} - x_{\mathrm{r}}^{(n-1)}  \right|^2  + \left| y_{\mathrm{r}}^{(n)} - y_{\mathrm{r}}^{(n-1)}  \right|^2 + \left| z_{\mathrm{r}}^{(n)} - z_{\mathrm{r}}^{(n-1)}  \right|^2 \leq \epsilon,
\end{equation}
where $\epsilon >0$ is a given accuracy constant. Once the convergence is achieved at the $n$-th iteration, a  sub-optimal solution is given by
\begin{equation} \label{eq:Suboptimal}
x_{\mathrm{r}}^{\ast} = x_{\mathrm{r}}^{(n)}, x_{\mathrm{r}}^{\ast} = x_{\mathrm{r}}^{(n)}, \mbox{ and } z_{\mathrm{r}}^{\ast} = z_{\mathrm{r}}^{(n)},
\end{equation}
with the corresponding coverage probability $P_{\mathsf{cov}}(\{ \varrho_{\mathrm{r}}^{\ast} \})$  and $\varrho_{\mathrm{r}}^{\ast} \in \mathcal{R}^{\ast} = \{ x_{\mathrm{r}}^\ast, y_{\mathrm{r}}^\ast, z_{\mathrm{r}}^\ast \}$. The proposed iterative approach is summarized in Algorithm~\ref{Algorithm1}.\footnote{The convergence of Algorithm~\ref{Algorithm1} is established based on the methodology inherent in the gradient ascent method \cite{nocedal1999numerical}.} As far as the computational complexity is concerned, the highest cost of each iteration of Algorithm~\ref{Algorithm1} is the evaluation of the first-order derivative $\dt{\mathcal{L}} ( \{\varrho_{\mathrm{r}} \})$, denoted by $f$. Then, the computational complexity of Algorithm~\ref{Algorithm1} is in the order of $\mathcal{O}(Nf)$, where $N$ is the number of iterations needed to reach a fixed-point solution. By utilizing numerical evaluations, we observe that $N$ is in an order of hundreds of iterations. Instead of Algorithm~\ref{Algorithm1}, we may use a grid-search approach. If we denote the number of points of the grid as $N_x, N_y,$ and $N_z$ for the three dimensions, the computational complexity of a grid-based exhaustive search is in the order of $\mathcal{O}(N_x N_y N_z \tilde{f})$, where $\tilde{f}$ is the cost for evaluating the coverage probability. For example, setting $N_x=N_y=N_z= 100$, the exhaustive search has a computational complexity in the order of $10^6$, which is impractical. Hence, Algorithm~\ref{Algorithm1} has much lower computational complexity and is more practical than a grid-based exhaustive search approach.
\begin{algorithm}[t]
	\caption{Sub-optimal solution to problem~\eqref{Prob:UAV} by utilizing the gradient ascent method} \label{Algorithm1}
	\textbf{Input}:  The reference path loss $K_0$; the source coordinate $(x_\mathrm{s}, y_\mathrm{s}, z_\mathrm{s})$; the destination  coordinate $(x_\mathrm{d}, y_\mathrm{d}, z_\mathrm{d})$; the initial coordinate of the RIS $(x_\mathrm{r}^{(0)}, y_\mathrm{r}^{(0)}, z_\mathrm{r}^{(0)})$; the large-scale fading coefficients $\beta_{\mathrm{sd}}$, $\beta_{\mathrm{sr}}^{(0)}$, $\beta_{\mathrm{rd}}^{(0)}$; the transmit power $p$; and the noise variance $\sigma^2$.
	\begin{itemize}
		\item[1.] Set $n=0$ and the initial values $\{ \varrho_{\mathrm{r}}^{(0)} \} = \big\{ x_{\mathrm{r}}^{(0)}, y_{\mathrm{r}}^{(0)}, z_{\mathrm{r}}^{(0)}  \big\}$.
		\item[2.] Compute the first-order derivative of $\mathcal{L}( \{ \varrho_{\mathrm{r}} \} )$ with respect to $\varrho_{\mathrm{r}}$ at the coordinates $x_{\mathrm{r}}^{(n)}$, $y_{\mathrm{r}}^{(n)}$, and $z_{\mathrm{r}}^{(n)}$ by  \eqref{eq:FirstDerivative}.
		\item[3.] Update the coordinates  $\{\varrho_{\mathrm{r}}^{(n)} \}$ of the RIS by the gradient ascent method in \eqref{eq:varrhoupdated}.
		\item[4.] Update the coordinates $\{\varrho_{\mathrm{r}}^{(n)} \}$ of the RIS by checking the boundary conditions in \eqref{eq:varhorn}.
		\item[5.] Update the Rician factors by utilizing \eqref{eq:Kappan}.
		\item[6.] Verify the convergence criterion: If \eqref{eq:Stopping} is satisfied $\rightarrow$ set the sub-optimal solution as in \eqref{eq:Suboptimal}. Otherwise, set $n=n+1$ and repeat steps $2-4$.
	\end{itemize}
	\textbf{Output}: The location of the RIS $(x_{\mathrm{r}}^\ast, y_{\mathrm{r}}^\ast, z_{\mathrm{r}}^\ast )$.
\end{algorithm}

	\begin{figure}[t]
		\centering
		\includegraphics[trim=0.9cm 0.0cm 0.9cm 0.6cm, clip=true, width=3.2in]{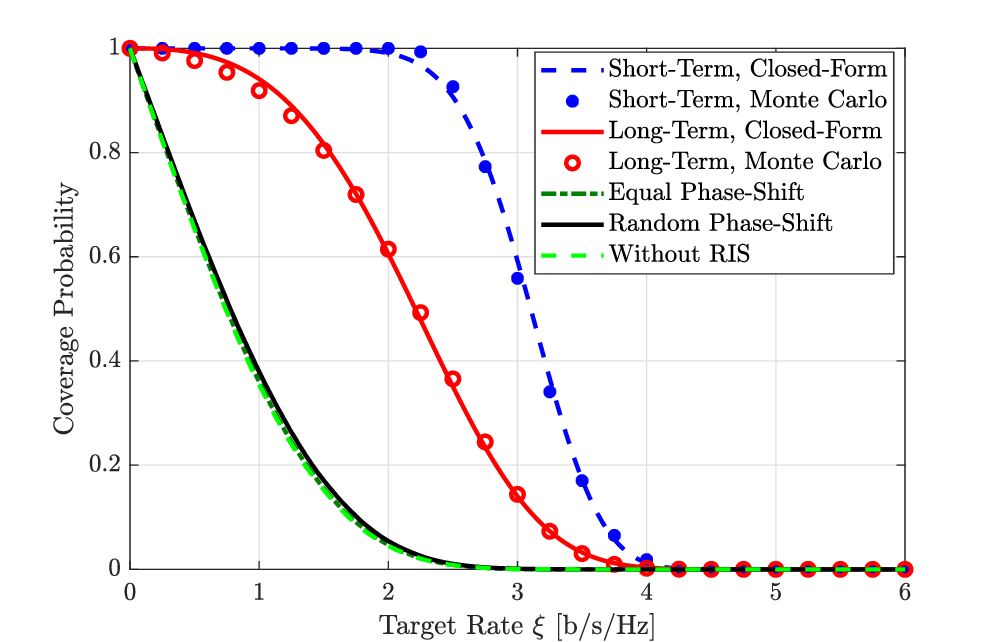} \vspace*{-0.25cm}
		\caption{Coverage probability versus the target rate [b/s/Hz] with $M=64$, $(x_{\mathrm{r}},  y_{\mathrm{r}}, z_{\mathrm{r}} ) = (27, 25, 25)$~[m], $(x_{\mathrm{d}},  y_{\mathrm{d}}, z_{\mathrm{d}} ) = (180, 100, 25)$~[m], and $p = 20$~[dBm].}
		\label{FigMonteCarloClosedForm}
		\vspace*{-0.0cm}
	\end{figure}
	\begin{figure}[t]
		\centering
		\includegraphics[trim=0.9cm 0cm 0.9cm 0.6cm, clip=true, width=3.2in]{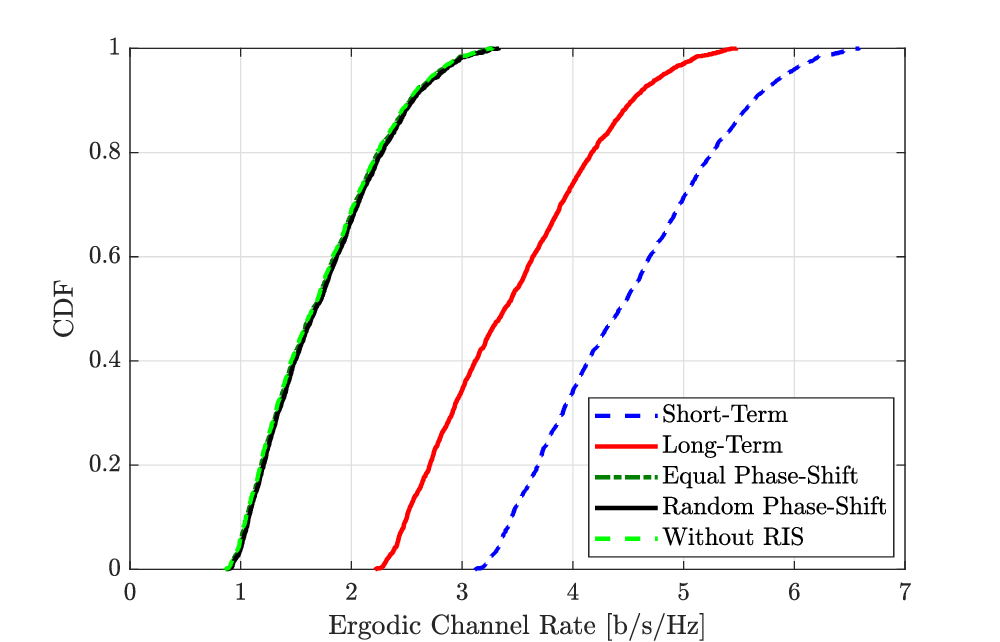} \vspace*{-0.25cm}
		\caption{Cumulative distribution function of the ergodic channel rate [b/s/Hz] with $M=64$, $(x_{\mathrm{r}},  y_{\mathrm{r}}, z_{\mathrm{r}} ) = (27, 25, 25)$~[m], $p = 20$~[dBm], and the destination location: $100$~[m] $\leq x_{\mathrm{d}} \leq $ $180$~[m],  $50$~[m] $\leq x_{\mathrm{d}} \leq $ $100$~[m], $z_{\mathrm{d}} = 15$~[m]. }
		\label{FigCDFRate}
		\vspace*{-0.2cm}
	\end{figure}

\begin{figure}[t]
\begin{minipage}{0.48\textwidth}
		\centering
		\includegraphics[trim=0.9cm 0cm 0.5cm 0.6cm, clip=true, width=3.4in]{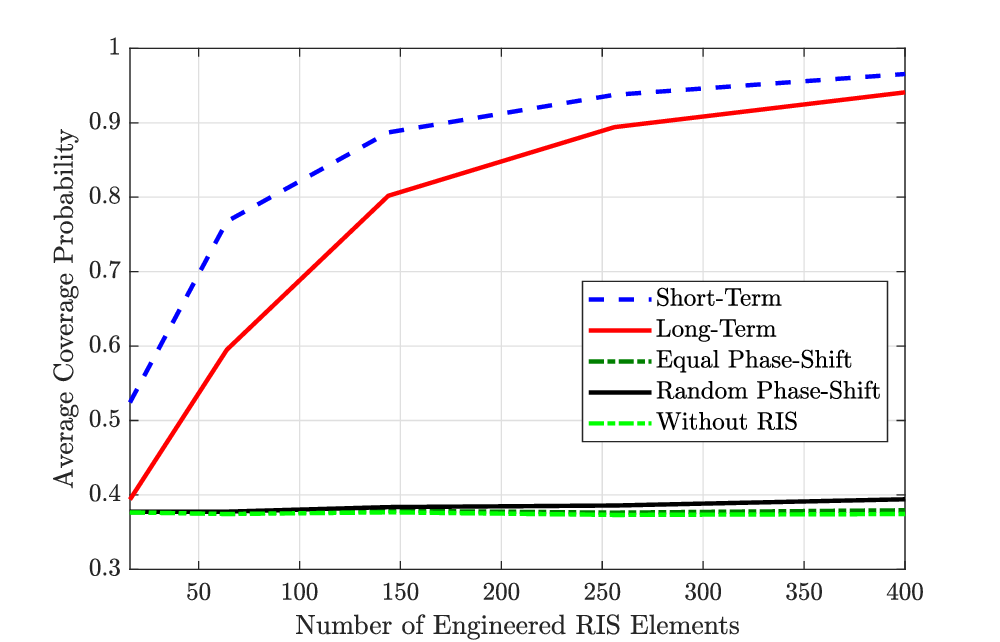} \vspace*{-0.4cm}
        \\ (a)
		%\caption{Average coverage probability versus the number of engineered RIS elements with $M=64$, $(x_{\mathrm{r}},  y_{\mathrm{r}}, z_{\mathrm{r}} ) = (27, 25, 25)$~[m], $p = 20$~[dBm], and the destination location: $100$~[m] $\leq x_{\mathrm{d}} \leq $ $180$~[m],  $50$~[m] $\leq x_{\mathrm{d}} \leq $ $100$~[m], $z_{\mathrm{d}} = 15$~[m].}
		%\label{FigCoverageProbvsNbrPhase}
		\vspace*{-0.0cm}
\end{minipage}
	\vspace*{-0.0cm}
	\hfill
\begin{minipage}{0.48\textwidth}
		\centering
		\includegraphics[trim=0.9cm 0cm 0.9cm 0.6cm, clip=true, width=3.3in]{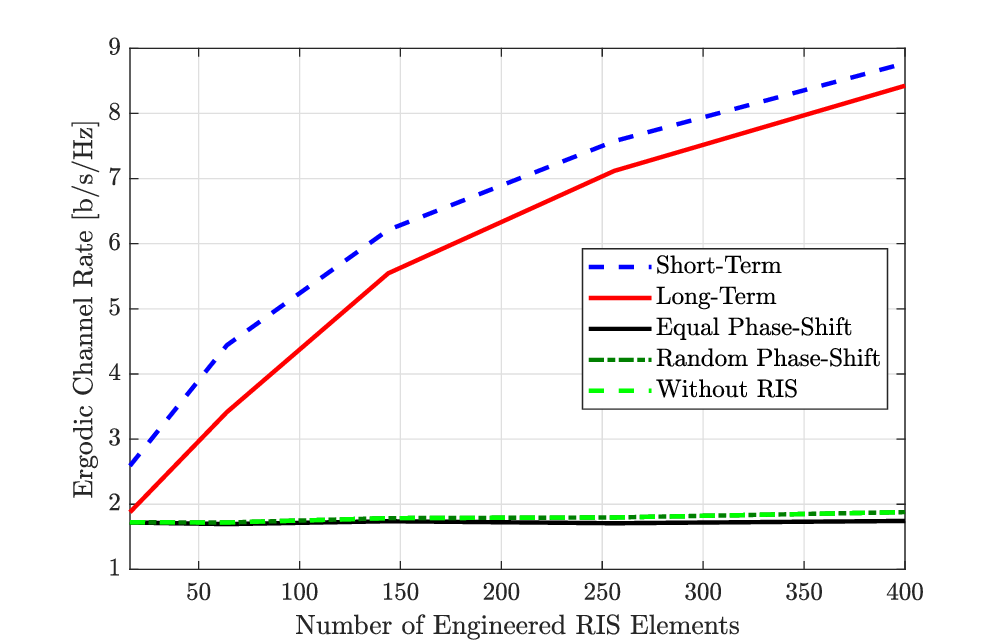} \vspace*{-0.2cm} \\ (b)
		%\caption{Average coverage probability versus the number of engineered RIS elements with $M=64$, $(x_{\mathrm{r}},  y_{\mathrm{r}}, z_{\mathrm{r}} ) = (27, 25, 25)$~[m], $p = 20$~[dBm], and the destination location: $100$~[m] $\leq x_{\mathrm{d}} \leq $ $180$~[m],  $50$~[m] $\leq x_{\mathrm{d}} \leq $ $100$~[m], $z_{\mathrm{d}} = 15$~[m]. }
		%\label{FigWithWithoutCooperationST}
		\vspace*{-0.0cm}
  \end{minipage}
  \caption{System performance versus the number of RIS elements with $(x_{\mathrm{r}},  y_{\mathrm{r}}, z_{\mathrm{r}} ) = (27, 25, 25)$~[m], $p = 20$~[dBm], and the destination location: $100$~[m] $\leq x_{\mathrm{d}} \leq $ $180$~[m],  $50$~[m] $\leq y_{\mathrm{d}} \leq $ $100$~[m], $z_{\mathrm{d}} = 15$~[m]: $(a)$ The average coverage probability and $(b)$ the ergodic channel rate. } \label{FigWithWithoutCooperationST}
\end{figure}
\begin{figure}[t]
\begin{minipage}{0.48\textwidth}
		\centering
		\includegraphics[trim=0.5cm 0cm 0.5cm 0.6cm, clip=true, width=3.3in]{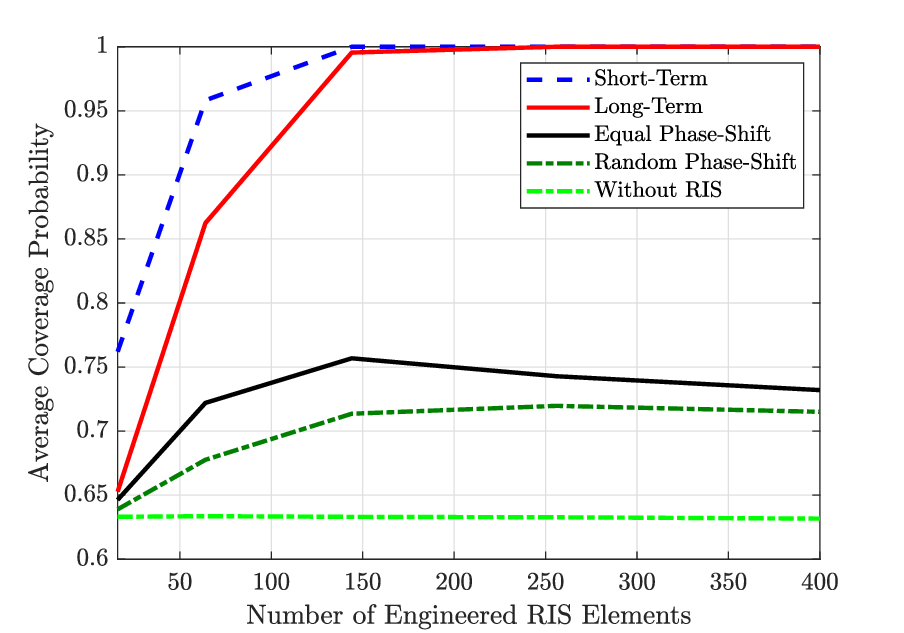} \vspace*{-0.1cm}
        \\ (a)
		%\caption{Average coverage probability versus the number of engineered RIS elements with $M=64$, $(x_{\mathrm{r}},  y_{\mathrm{r}}, z_{\mathrm{r}} ) = (27, 25, 25)$~[m], $p = 20$~[dBm], and the destination location: $100$~[m] $\leq x_{\mathrm{d}} \leq $ $180$~[m],  $50$~[m] $\leq x_{\mathrm{d}} \leq $ $100$~[m], $z_{\mathrm{d}} = 15$~[m].}
		%\label{FigCoverageProbvsNbrPhase}
		\vspace*{-0.0cm}
\end{minipage}
	\vspace*{-0.0cm}
	\hfill
\begin{minipage}{0.48\textwidth}
		\centering
		\includegraphics[trim=0.5cm 0cm 0.9cm 0.6cm, clip=true, width=3.2in]{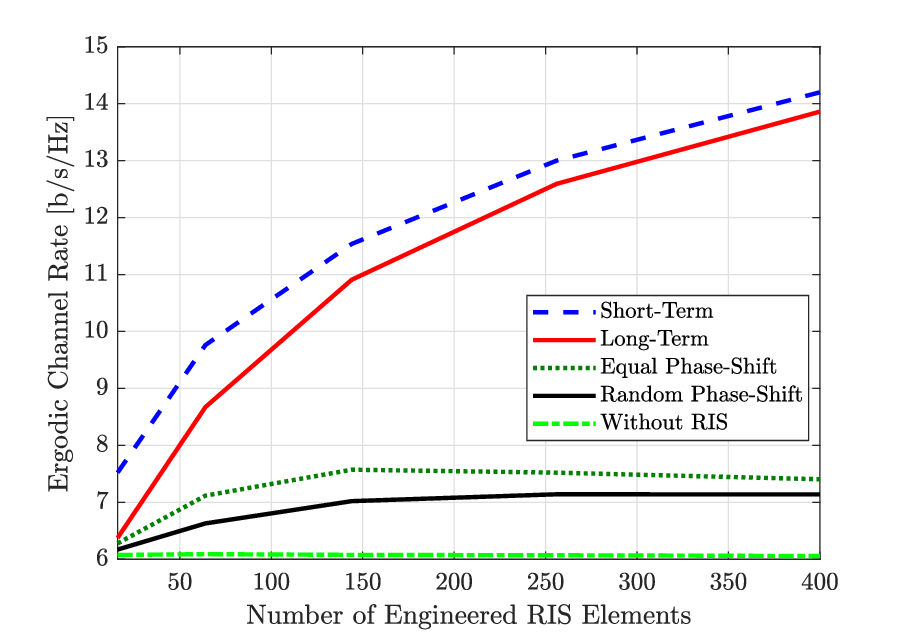} \vspace*{-0.1cm} \\ (b)
		\vspace*{-0.0cm}
  \end{minipage}
  \caption{System performance versus the number of RIS elements with $(x_{\mathrm{r}},  y_{\mathrm{r}}, z_{\mathrm{r}} ) = (27, 25, 25)$~[m], $p = 20$~[dBm], and the destination location: $40$~[m] $\leq x_{\mathrm{d}} \leq $ $55$~[m],  $10$~[m] $\leq y_{\mathrm{d}} \leq $ $25$~[m], $z_{\mathrm{d}} = 15$~[m]: $(a)$ The average coverage probability and $(b)$ the ergodic channel rate.} \label{FigWithWithoutCooperationSTv1}
\end{figure}
\section{Numerical Results}
In this section, the proposed analytical and optimization frameworks are validated by Monte Carlo simulations. The source is located at the origin. The direct link between the source and the destination is assumed to be weak and the channel gain  $\beta_{\mathrm{sd}}$~[dB] is $\beta_{\mathrm{sd}} = -33.1 -3.5 \log_{10} (d_{\mathrm{sd}}/1~\mathrm{m})$. For the indirect link, the channel gains $\beta_{\alpha}$~[dB] are 
$  \beta_{\alpha} = -25.5 -2.4\log_{10} (d_{\alpha}/1~\mathrm{m}),$
where $d_{\alpha}$ is the distance between the source and the receiver, i.e., the RIS if $\alpha = \mathrm{sr}$ and the destination if $\alpha = \mathrm{rd}$. The Rician factors are equal to $\kappa_\alpha = 10^{1.3 - 0.003 d_{\alpha}}$ \cite{3gpp2011technical}. The transmit power is $20$~mW, and the system bandwidth is $20$~MHz. The carrier frequency is $1.8$~GHz, and the noise power is $-94$~dBm, corresponding to a noise figure of $10$~dB. We consider the following phase shift designs for comparison:
\begin{itemize}
	\item[$i)$] The long-term phase shift design optimizes the average received SNR as formulated in \eqref{eq:Optphasse}. This phase shift design is denoted as ``Long-term" in the figures.
	\item[$ii)$] The short-term phase shift design optimizes the instantaneous SNR as formulated in \eqref{eq:OptSt}.  This phase shift design is denoted as ``Short-term" in the figures.
	\item[$iii)$] The equal phase shift design where the phase shifts are all equal to each other. This phase shift design is denoted as ``Equal phase" in the figures. 
	\item[$iv)$] The random phase shift design where arbitrary values in the range $[-\pi, \pi]$ are considered. This phase shift design is denoted as ``Random phase" in the figures. 
\end{itemize}
\begin{figure}[t]
		\centering
		\includegraphics[trim=0.9cm 0cm 0.5cm 0.6cm, clip=true, width=3.5in]{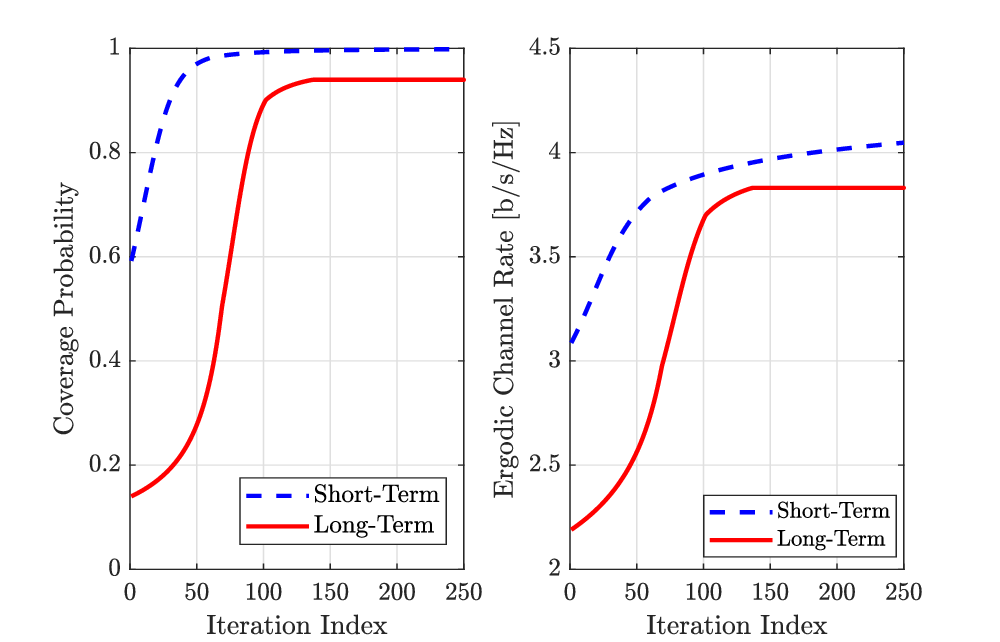} \vspace*{-0.25cm}
		\caption{Coverage probability and ergodic channel rate [b/s/Hz] with respect to the number of iterations  of Algorithm~\ref{Algorithm1} with $M=64$, $p = 20$~[dBm], and $(x_{\mathrm{d}}, y_{\mathrm{d}}, z_{\mathrm{d}}) = (180, 100, 15)$ [m]. The optimal RIS position is $(x_r^\ast, y_r^\ast, z_r^\ast) = (20, 10, 5)$ [m] and $(x_r^\ast, y_r^\ast, z_r^\ast) = (20, 13.36, 11.81)$ [m] corresponding the long-term and short-term phase shift design, respectively.}
		\label{Fig64dB20dBm}
		\vspace*{-0.2cm}
\end{figure}
\begin{figure}[t]
		\centering
		\includegraphics[trim=0.9cm 0cm 0.9cm 0.6cm, clip=true, width=3.5in]{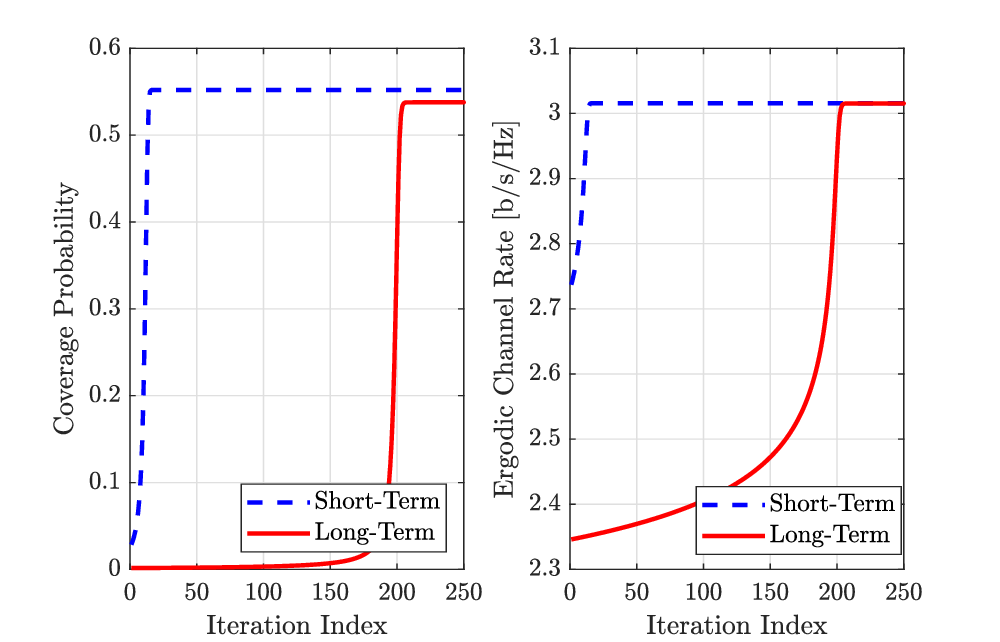} \vspace*{-0.25cm}
		\caption{Coverage probability and ergodic channel rate [b/s/Hz] with respect to the number of iterations of Algorithm~\ref{Algorithm1} with $M=256$, $p = 9$~[dBm], and $(x_{\mathrm{d}}, y_{\mathrm{d}}, z_{\mathrm{d}}) = (180, 100, 15)$ [m]. The optimal RIS position is $(x_r^\ast, y_r^\ast, z_r^\ast) = (21.56, 19.03, 17.84)$ [m] and $(x_r^\ast, y_r^\ast, z_r^\ast) = (24.50, 22.13, 21.41)$ [m] corresponding the long-term and short-term phase shift design, respectively. }
		\label{Fig256E9dBm}
		\vspace*{-0.2cm}
\end{figure}

In Fig.~\ref{FigMonteCarloClosedForm}, we compare the closed-form expression of the coverage probability by using \eqref{eq:Pcovlt} in Theorem~\ref{Theorem:CovProbRan} and \eqref{eq:Pcovst} in Theorem~\ref{Theorem:CovProbOpt} against Monte Carlo simulations by using \eqref{eq:PcovGen}. The good match between the analytical results and the numerical simulations confirms the accuracy of the proposed analytical framework. The proposed phase shift designs offer a coverage probability significantly higher than the considered benchmark schemes for a given set of instantaneous channel coefficients. For instance, the random and equal phase shift designs give a coverage probability of about $0.2$ if the target rate is 2 [b/s/Hz]. On the other hand, the long-term phase shift design has a coverage probability of about $0.6$ for the considered target rate, which is $3\times$ better than the baselines. By exploiting instantaneous CSI, the short-term phase shift design  can offer full coverage probability at the target rate of $2$~[b/s/Hz]. 

The cumulative distribution function (CDF) of the ergodic rate for different user locations is shown in Fig.~\ref{FigCDFRate}. The random phase shift design and a system without the presence of the RIS provide an ergodic channel rate of $1.75$~[b/s/Hz] on average. The long-term phase shift design increases the ergodic rate by a factor $1.97\times$ with respect to the two baselines, which is $3.46$~[b/s/Hz]. Meanwhile, the short-term phase shift design gives the best performance with a rate of $4.49$~[b/s/Hz], which is $1.3\times$ higher than the long-term phase shift design. Assuming a coverage target of $0.95$, the long-term and short-term phase-shift designs offer a $2.49\times$ and $3.42\times$ gain compared to the benchmark schemes, respectively. %The simulation results manifest the superior improvements of carefully optimizing the phase shifts for a constructive signal combination at the destination. 

In Fig.~\ref{FigWithWithoutCooperationST}$(a)$, we utilize the analytical framework in Theorems~\ref{Theorem:CovProbRan} and \ref{Theorem:CovProbOpt} to evaluate the coverage probability as a function of the number of
RIS elements and for different designs of the phase shifts.
The phase shift designs  based on short-term and long-term CSI offer significant gains compared to the equal and random phase shift designs. The gains become significantly larger by increasing the number of RIS elements. Specifically, the short-term phase shift design provides a   $1.37\times$ gain with respect to  the random phase-shift design if the RIS is equipped with $16$ RIS elements. The gain is $2.49\times$ for $400$ RIS elements.  In addition, the gap between the
short-term and long-term phase shifts design reduces as the number of RIS elements increases. In Fig.~\ref{FigWithWithoutCooperationST}$(b)$, we display the ergodic rate [b/s/Hz] in \eqref{eq:ClosedErgodicRate}. We evince that the deployment of an RIS results in a substantial increase of the ergodic rate, as opposed to surfaces that operate as random scatterers and are not smart and reconfigurable. Notably, the long-term phase shift design provides an ergodic rate close to the short-term phase shift design and approaches it if the number of RIS elements is sufficiently large. If the RIS elements increase from  $16$ to $400$, the gap between the short-term and long-term phase shift designs reduces from $1.38\times$ to $1.04\times$.

In Fig.~\ref{FigWithWithoutCooperationSTv1}, the system performance is evaluated when the location of the destination is uniformly distributed in a narrower range compared with the setting considered in the previous figures. In this case, we see that the random and equal phase-shift designs outperform the case study without RIS. Furthermore, the results in Fig.~\ref{FigWithWithoutCooperationSTv1} confirm that the phase shift design based on statistical CSI provides system performance close to the optimal solution obtained by exploiting instantaneous CSI, especially when the large number of RIS elements is large.

In Figs.~\ref{Fig64dB20dBm} and \ref{Fig256E9dBm}, we illustrate the coverage probability as a function of the iterations of Algorithm~\ref{Algorithm1}. The initial location of the RIS $\big(x_{\mathrm{r}}^{(0)}, y_{\mathrm{r}}^{(0)}, z_{\mathrm{r}}^{(0)} \big)$ is selected as in Figs.\ref{FigMonteCarloClosedForm}--\ref{FigWithWithoutCooperationST}, $(x_{\mathrm{r}, \min}, y_{\mathrm{r}, \min}, z_{\mathrm{r}, \min}) = (20, 10, 5)$ [m], $(x_{\mathrm{r}, \max}, y_{\mathrm{r}, \max}, z_{\mathrm{r}, \max}) = (30, 40, 35)$ [m], and the step size is $\mu = 0.9$. We observe a remarkable enhancement of the coverage probability by appropriately deploying the RIS. In Fig.~\ref{Fig64dB20dBm}, assuming $M=64$, $p=20$~[dBm], the short-term phase shift design improves the coverage probability from $0.59$ if the RIS is located in $(x_{\mathrm{r}}^{(0)}, y_{\mathrm{r}}^{(0)}, z_{\mathrm{r}}^{(0)}) = (27, 25, 25)$~[m] to $1.0$ if the RIS is deployed in the location returned by Algorithm~\ref{Algorithm1}, i.e.,  $(x_{\mathrm{r}}^{\ast}, y_{\mathrm{r}}^{\ast}, z_{\mathrm{r}}^{\ast}) = (20, 13.36, 11.81)$. It corresponds to a $1.69\times$ improvement of the coverage probability. For the long-term phase shift design, Algorithm~\ref{Algorithm1} returns an RIS location at the coordinates $(x_{\mathrm{r}}^{\ast}, y_{\mathrm{r}}^{\ast}, z_{\mathrm{r}}^{\ast}) = (20, 10, 5)$ and offers a $6.71 \times$ improvement in the coverage probability compared with the initial point. %We note that the RIS place solution of the phase shift designs differs from each other due to the non-convexity of problem~\eqref{Prob:UAV}. 
Besides, Algorithm~\ref{Algorithm1} applied to the short-term phase shift design converges faster as compared to when it is applied to the long-term phase shift design, especially if the number of RIS elements is large. Interestingly, the results demonstrate that maximizing the coverage probability with respect to the optimal location significantly boosts the ergodic channel rate as well. Compared to Fig.~\ref{Fig64dB20dBm}, the number of RIS elements increases by $4\times$ and the transmit power reduces by approximately $2.2 \times$ in Fig.~\ref{Fig256E9dBm}. Different initial RIS locations may result in different solutions due to the
inherent non-convexity. However, thanks to the gradient ascent, Algorithm~\ref{Algorithm1}
potentially obtains a good sub-optimal solution. In this paper, we observe that
optimizing the RIS location can significantly improve the coverage probability.
The obtained RIS placement solution can also improve the ergodic rate. A
heuristic method to obtain a good initialization should be left for future work.

\begin{figure}[t]
\begin{minipage}{0.48\textwidth}
		\centering
		\includegraphics[trim=0.2cm 0cm 0.5cm 0.6cm, clip=true, width=3.3in]{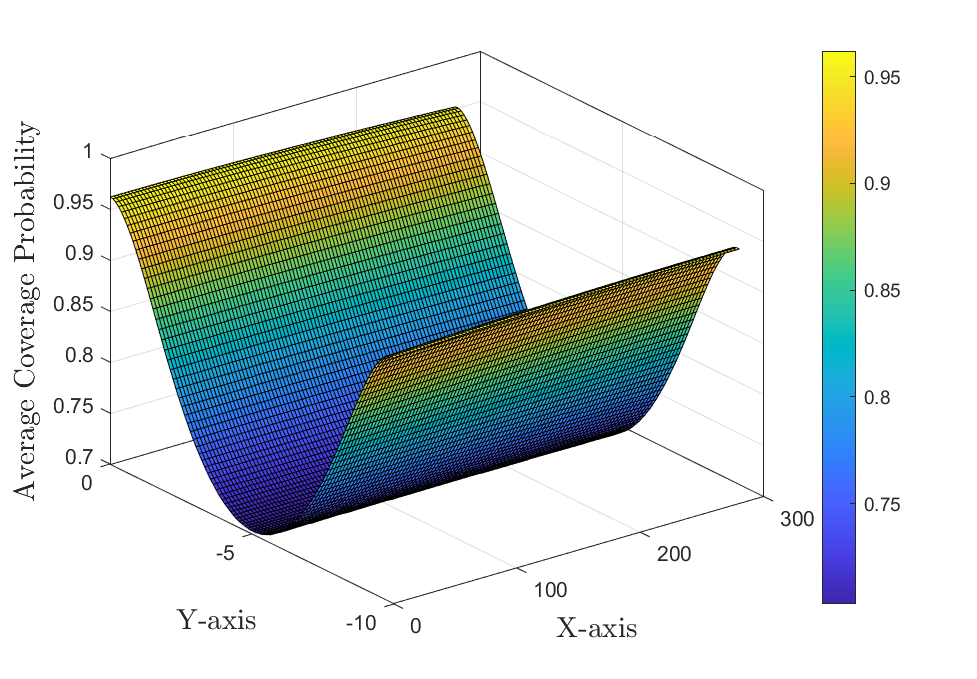} \vspace*{-0.25cm}
        \\ (a)
		%\caption{Average coverage probability versus the number of engineered RIS elements with $M=64$, $(x_{\mathrm{r}},  y_{\mathrm{r}}, z_{\mathrm{r}} ) = (27, 25, 25)$~[m], $p = 20$~[dBm], and the destination location: $100$~[m] $\leq x_{\mathrm{d}} \leq $ $180$~[m],  $50$~[m] $\leq x_{\mathrm{d}} \leq $ $100$~[m], $z_{\mathrm{d}} = 15$~[m].}
		%\label{FigCoverageProbvsNbrPhase}
		\vspace*{-0.0cm}
\end{minipage}
	\vspace*{-0.0cm}
	\hfill
\begin{minipage}{0.48\textwidth}
		\centering
		\includegraphics[trim=0.2cm 0cm 0.9cm 0.6cm, clip=true, width=3.3in]{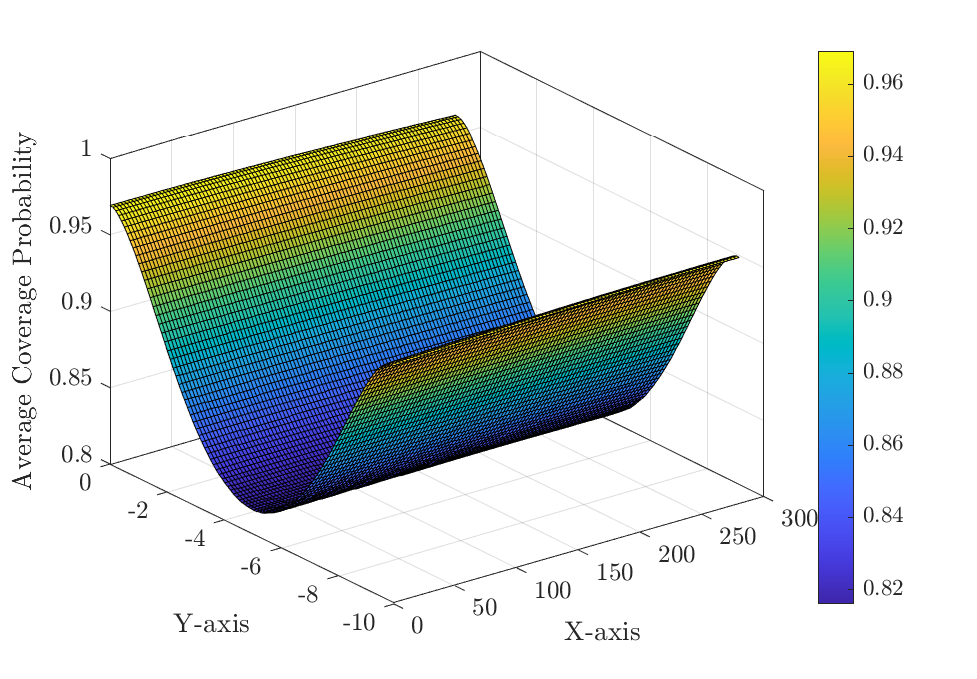} \vspace*{-0.25cm} \\ (b)
		%\caption{Average coverage probability versus the number of engineered RIS elements with $M=64$, $(x_{\mathrm{r}},  y_{\mathrm{r}}, z_{\mathrm{r}} ) = (27, 25, 25)$~[m], $p = 20$~[dBm], and the destination location: $100$~[m] $\leq x_{\mathrm{d}} \leq $ $180$~[m],  $50$~[m] $\leq x_{\mathrm{d}} \leq $ $100$~[m], $z_{\mathrm{d}} = 15$~[m]. }
		%\label{FigWithWithoutCooperationST}
		\vspace*{-0.0cm}
  \end{minipage}
  \caption{Average coverage probability with different locations of the RIS where $0$~[m] $\leq x_r \leq$ $280$~[m], $-10$~[m] $\leq y_r \leq$ $0$~[m], and $z_r = 15$~[m]. The destination location: $100$~[m] $\leq x_{\mathrm{d}} \leq $ $180$~[m],  $50$~[m] $\leq y_{\mathrm{d}} \leq $ $100$~[m], $z_{\mathrm{d}} = 15$~[m]: $(a)$ Long-term phase design and $(b)$ Short-term phase shift design.} \label{FigDiffLocations}
\end{figure}

In Fig.~\ref{FigDiffLocations}, we show the average coverage probability as a function of the RIS location. We observe that the RIS should be located near the source or the destination to obtain a good coverage probability. A low coverage probability is, on the other hand, obtained if the RIS is located around the center of the considered area. The observations provide heuristic strategies to locate the RIS and achieve higher coverage probability. However, the optimal location can only be obtained by solving the problem in \eqref{Prob:UAV}.
\section{Conclusion}
This paper has investigated the coverage probability and the
ergodic rate of an RIS-assisted link for different phase shift
designs depending on the level of CSI that
is exploited for optimizing the RIS. Although a long-term phase shift design based on long-term CSI is
suboptimal compared with
the optimal phase shift design based on instantaneous CSI, we have shown that the performance gap decreases as the number of RIS elements increases. Moreover, we have formulated an RIS placement optimization problem that maximizes the coverage probability. A sub-optimal solution with low computational complexity has been proposed by applying the gradient ascent method. Numerical results have shown that, together with the phase shift design, the RIS placement is a potential research direction for improving the system performance. The generalization of this research work includes the analysis and optimization of sources and
destinations equipped with multiple antennas.

\appendix
	\subsection{Useful Lemmas}
	This section presents a useful lemma used for performance analysis. 
	\begin{lemma}\cite[Lemma~5]{van2021reconfigurable} \label{lemma:4momentv1}
		For a random vector $\mathbf{x} \in \mathbb{C}^{M}$ distributed as $\mathcal{CN}(\mathbf{0}, \mathbf{R})$ with $\mathbf{R} \in \mathbb{C}^{M \times M}$ and two given deterministic matrices $\mathbf{U}, \mathbf{V} \in \mathbb{C}^{M \times M}$, it holds that
		\begin{equation} \label{eq:xUxxVxx}
			\mathbb{E} \{\mathbf{x}^H \mathbf{U} \mathbf{x} \mathbf{x}^H \mathbf{V} \mathbf{x}  \} = \mathrm{tr}(\mathbf{R} \mathbf{U} \mathbf{R} \mathbf{V}) + \mathrm{tr}(\mathbf{R}\mathbf{U}) \mathrm{tr}(\mathbf{R} \mathbf{V} ). 
		\end{equation}
		If $\mathbf{V} = \mathbf{U}^H$, then \eqref{eq:xUxxVxx} becomes $\mathbb{E} \{|\mathbf{x}^H \mathbf{U} \mathbf{x}|^2  \} = \mathrm{tr}(\mathbf{R} \mathbf{U} \mathbf{R} \mathbf{U}^H) + |\mathrm{tr}(\mathbf{R}\mathbf{U})|^2$.
	\end{lemma}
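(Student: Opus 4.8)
The plan is to reduce the fourth-moment computation to the standard Wick/Isserlis identity for circularly symmetric complex Gaussian vectors by means of a whitening transformation. First I would write $\mathbf{x} = \mathbf{R}^{1/2}\mathbf{z}$ with $\mathbf{z} \sim \mathcal{CN}(\mathbf{0}, \mathbf{I}_M)$ and $\mathbf{R}^{1/2}$ the Hermitian square root of the covariance, so that $\mathbf{x}^H \mathbf{U} \mathbf{x} = \mathbf{z}^H \tilde{\mathbf{U}} \mathbf{z}$ and $\mathbf{x}^H \mathbf{V} \mathbf{x} = \mathbf{z}^H \tilde{\mathbf{V}} \mathbf{z}$, where $\tilde{\mathbf{U}} = \mathbf{R}^{1/2} \mathbf{U} \mathbf{R}^{1/2}$ and $\tilde{\mathbf{V}} = \mathbf{R}^{1/2} \mathbf{V} \mathbf{R}^{1/2}$. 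This removes the covariance $\mathbf{R}$ from the randomness and leaves only deterministic matrices to track, reducing the problem to fourth moments of i.i.d.\ standard complex Gaussian entries.

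Next I would expand both quadratic forms in coordinates, obtaining $\mathbb{E}\{ \mathbf{z}^H \tilde{\mathbf{U}} \mathbf{z}\, \mathbf{z}^H \tilde{\mathbf{V}} \mathbf{z}\} = \sum_{i,j,k,l} \tilde{U}_{ij} \tilde{V}_{kl}\, \mathbb{E}\{ \bar{z}_i z_j \bar{z}_k z_l \}$. The crucial step, which I expect to be the main obstacle to state cleanly, is the evaluation of the scalar fourth moment $\mathbb{E}\{ \bar{z}_i z_j \bar{z}_k z_l \}$ for i.i.d.\ $\mathcal{CN}(0,1)$ entries. Here circular symmetry forces every pairing of two unconjugated (or two conjugated) variables to vanish, i.e.\ $\mathbb{E}\{ z_j z_l \} = 0$, so that only pairings matching a conjugated index with an unconjugated one survive. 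This yields $\mathbb{E}\{ \bar{z}_i z_j \bar{z}_k z_l \} = \delta_{ij}\delta_{kl} + \delta_{il}\delta_{kj}$ --- two terms rather than the three that would arise in the real Gaussian case. Getting this distinction right is the heart of the argument.

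Substituting the two surviving terms back, the $\delta_{ij}\delta_{kl}$ contribution collapses to $\mathrm{tr}(\tilde{\mathbf{U}})\,\mathrm{tr}(\tilde{\mathbf{V}})$ and the $\delta_{il}\delta_{kj}$ contribution to $\mathrm{tr}(\tilde{\mathbf{U}}\tilde{\mathbf{V}})$. I would then restore the original matrices using cyclic invariance of the trace and $(\mathbf{R}^{1/2})^H = \mathbf{R}^{1/2}$, giving $\mathrm{tr}(\tilde{\mathbf{U}}) = \mathrm{tr}(\mathbf{R}\mathbf{U})$, $\mathrm{tr}(\tilde{\mathbf{V}}) = \mathrm{tr}(\mathbf{R}\mathbf{V})$, and $\mathrm{tr}(\tilde{\mathbf{U}}\tilde{\mathbf{V}}) = \mathrm{tr}(\mathbf{R}\mathbf{U}\mathbf{R}\mathbf{V})$, which recovers the claimed identity. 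For the special case $\mathbf{V} = \mathbf{U}^H$, I would note that $\mathbf{x}^H \mathbf{U}^H \mathbf{x} = \overline{\mathbf{x}^H \mathbf{U} \mathbf{x}}$, so the product becomes $|\mathbf{x}^H \mathbf{U} \mathbf{x}|^2$, while Hermitian symmetry of $\mathbf{R}$ gives $\mathrm{tr}(\mathbf{R}\mathbf{U}^H) = \overline{\mathrm{tr}(\mathbf{R}\mathbf{U})}$; hence $\mathrm{tr}(\mathbf{R}\mathbf{U})\,\mathrm{tr}(\mathbf{R}\mathbf{V}) = |\mathrm{tr}(\mathbf{R}\mathbf{U})|^2$ and $\mathrm{tr}(\mathbf{R}\mathbf{U}\mathbf{R}\mathbf{V}) = \mathrm{tr}(\mathbf{R}\mathbf{U}\mathbf{R}\mathbf{U}^H)$, completing the reduction. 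Since the statement is quoted verbatim from \cite{van2021reconfigurable}, one may alternatively defer to that reference, but the whitening-plus-Wick argument above supplies a self-contained derivation.
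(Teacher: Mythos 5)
Your proof is correct, but note that the paper itself does not prove this lemma at all: it is imported verbatim from \cite[Lemma~5]{van2021reconfigurable} and used as a black box, so there is no in-paper argument to compare against step by step. Your whitening-plus-Wick derivation is a sound self-contained substitute. The reduction $\mathbf{x} = \mathbf{R}^{1/2}\mathbf{z}$ is legitimate even for singular $\mathbf{R}$ (only the distributional identity is needed), the central combinatorial fact $\mathbb{E}\{\bar{z}_i z_j \bar{z}_k z_l\} = \delta_{ij}\delta_{kl} + \delta_{il}\delta_{kj}$ is exactly the complex (circularly symmetric) Wick theorem with its two surviving conjugate--unconjugate pairings, and it passes the sanity checks $\mathbb{E}\{|z_i|^4\}=2$ and $\mathbb{E}\{|z_i|^2|z_k|^2\}=1$ for $i\neq k$; the trace-cyclicity bookkeeping $\mathrm{tr}(\tilde{\mathbf{U}}) = \mathrm{tr}(\mathbf{R}\mathbf{U})$ and $\mathrm{tr}(\tilde{\mathbf{U}}\tilde{\mathbf{V}}) = \mathrm{tr}(\mathbf{R}\mathbf{U}\mathbf{R}\mathbf{V})$ is also right, as is the specialization $\mathbf{V}=\mathbf{U}^H$ via $\mathrm{tr}(\mathbf{R}\mathbf{U}^H) = \overline{\mathrm{tr}(\mathbf{R}\mathbf{U})}$. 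What the citation buys the authors is brevity; what your argument buys is independence from the external reference, which matters here because the lemma underpins the fourth-moment computations of Theorem~1 (Appendix B) and hence the coverage-probability results downstream. One presentational suggestion: state explicitly that the Wick factorization requires the entries of $\mathbf{z}$ to be \emph{jointly} circularly symmetric Gaussian (automatic after whitening), since that joint Gaussianity, not mere componentwise circular symmetry, is what licenses the pairing formula.
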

	\subsection{Proof of Theorem~\ref{Theorem:RISChannel}} \label{Appendix:RISChannel}
	By exploiting the channel formulation in \eqref{eq:ChannelMod}, the second expectation in \eqref{eq:SNRranMean} is computed as
	\begin{equation} \label{eq:Expcascaded2}
		\begin{split}
			&\mathbb{E} \left\{ \mathbf{h}_{\mathrm{sr}}^H \pmb{\Phi} \mathbf{h}_{\mathrm{rd}}|^2 \right\} = \mathbb{E} \left\{ (\bar{\mathbf{h}}_{\mathrm{sr}}^H +  \mathbf{g}_{\mathrm{sr}}^H ) \pmb{\Phi} (\bar{\mathbf{h}}_{\mathrm{rd}} + \mathbf{g}_{\mathrm{rd}})|^2 \right\}\\
			&\stackrel{(a)}{=} |\bar{\alpha}|^2 + \mathbb{E} \big\{ |\bar{\mathbf{h}}_{\mathrm{sr}}^H \pmb{\Phi} \mathbf{g}_{\mathrm{rd}}|^2 \big\} +  \mathbb{E} \big\{ |\mathbf{g}_{\mathrm{sr}}^H \pmb{\Phi} \bar{\mathbf{h}}_{\mathrm{rd}}|^2 \big\}  +  \mathbb{E} \big\{ |\mathbf{g}_{\mathrm{sr}}^H \pmb{\Phi} \mathbf{g}_{\mathrm{rd}}|^2 \big\} \\
			&\stackrel{(b)}{=} |\bar{\alpha}|^2 + \frac{\beta_{\mathrm{rd}}}{\kappa_{\mathrm{rd}}+1} \| \bar{\mathbf{h}}_{\mathrm{sr}}\|^2 + \frac{\beta_{\mathrm{sr}}}{K_{\mathrm{sr}}+1} \| \bar{\mathbf{h}}_{\mathrm{rd}}\|^2  + \frac{M\beta_{\mathrm{sr}} \beta_{\mathrm{rd}}}{(\kappa_{\mathrm{sr}}+1)(\kappa_{\mathrm{rd}}+1)}\\
           &= |\bar{\alpha}|^2 + M\mu\beta_{\mathrm{sr}}\beta_{\mathrm{rd}}\tilde{\kappa}.
		\end{split}
	\end{equation}
	where $(a)$ is obtained by the independence of the cascaded channels and $(b)$ is due to the distributions of the NLoS channels.
	By utilizing the definition of cascaded channels, the fourth moment in \eqref{eq:4MomentCascade} is recast as 
	\begin{equation} \label{eq:Cascade4v1}
		\begin{split}
			&\mathbb{E} \left\{ |\mathbf{h}_{\mathrm{sr}}^H \pmb{\Phi} \mathbf{h}_{\mathrm{rd}}|^4 \right\} = \mathbb{E} \left\{ (\bar{\mathbf{h}}_{\mathrm{sr}}^H +  \mathbf{g}_{\mathrm{sr}}^H ) \pmb{\Phi} (\bar{\mathbf{h}}_{\mathrm{rd}} + \mathbf{g}_{\mathrm{rd}})|^4 \right\} \\
			&= \mathbb{E}\left\{ \left| \underbrace{\bar{\mathbf{h}}_{\mathrm{sr}}^H \pmb{\Phi} \bar{\mathbf{h}}_{\mathrm{rd}}}_{\triangleq \bar{\alpha}} +  \underbrace{\bar{\mathbf{h}}_{\mathrm{sr}}^H \pmb{\Phi} \mathbf{g}_{\mathrm{rd}}}_{\triangleq \alpha_2 } +   \underbrace{\mathbf{g}_{\mathrm{sr}}^H \pmb{\Phi} \bar{\mathbf{h}}_{\mathrm{rd}}}_{\triangleq \alpha_3} +   \underbrace{\mathbf{g}_{\mathrm{sr}}^H \pmb{\Phi} \mathbf{g}_{\mathrm{rd}}}_{\triangleq \alpha_4 } \right|^4 \right\}.
		\end{split}
	\end{equation}
	Denoting $A = \alpha_2 + \alpha_3 + \alpha_4$, \eqref{eq:Cascade4v1} is equivalent to
	\begin{equation} \label{eq:Cascadev2}
		\begin{split}
			&\mathbb{E} \left\{ |\mathbf{h}_{\mathrm{sr}}^H \pmb{\Phi} \mathbf{h}_{\mathrm{rd}}|^4 \right\} =  \mathbb{E}\{ |(\bar{\alpha} + A)(\bar{\alpha}^\ast + A^\ast)|^2\} \\
   & = \mathbb{E}\{ | |\bar{\alpha}|^2 + \bar{\alpha}^\ast A + A^\ast \bar{\alpha} + |A|^2 |^2 \}\\
			& =  |\bar{\alpha}|^4 + 4 |\bar{\alpha}|^2 \mathbb{E} \{ |A|^2 \} + 2 \bar{\alpha} \mathbb{E} \{ A^\ast |A|^2 \} + 2 \bar{\alpha}^\ast \mathbb{E} \{ A |A|^2 \}  \\
   &\quad + \mathbb{E} \{ |A|^4 \}.
		\end{split}
	\end{equation}
	  We note that $|A|^2$ can be rewritten as follows:
	\begin{equation} \label{eq:absA2}
		\begin{split}
		&|A|^2 = (\alpha_2 + \alpha_3 + \alpha_4)(\alpha_2^\ast + \alpha_3^\ast + \alpha_4^\ast) = |\alpha_2|^2 + |\alpha_3|^2 + \\
		& |\alpha_4|^2  + \alpha_2 \alpha_3^\ast + \alpha_2 \alpha_4^\ast + \alpha_2^\ast \alpha_3 + \alpha_3 \alpha_4^\ast  + \alpha_2^\ast \alpha_4 + \alpha_3^\ast \alpha_4 ,
		\end{split}
	\end{equation}
	and therefore $\mathbb{E} \{ |A|^2 \}$ is reformulated as
	\begin{equation} \label{eq:A2}
		\mathbb{E} \{ |A|^2 \} = \mathbb{E} \{ | \alpha_2^2 |\} + \mathbb{E} \{ | \alpha_3^2 |\} + \mathbb{E} \{ | \alpha_4^2 |\},
	\end{equation}
	where the missing expectations are equal to zero. Since $\mathbf{g}_{\mathrm{sr}}$ and $\mathbf{g}_{\mathrm{rd}}$ are circularly symmetric complex Gaussian vectors with zero mean, we obtain 
	\begin{align}
		\mathbb{E} \{|\alpha_2|^2 \}  &= \bar{\mathbf{h}}_{\mathrm{sr}}^H \pmb{\Phi} \mathbb{E}\{\mathbf{g}_{\mathrm{rd}} \mathbf{g}_{\mathrm{rd}}^H \} \pmb{\Phi}^H \bar{\mathbf{h}}_{\mathrm{sr}} = \frac{\beta_{\mathrm{rd}}}{\kappa_{\mathrm{rd}}+1} \| \bar{\mathbf{h}}_{\mathrm{sr}} \|^2 = M\mu_n \kappa_{\mathrm{sr}},  \label{eq:Expa2}\\
		\mathbb{E} \{|\alpha_3|^2 \}  &= \bar{\mathbf{h}}_{\mathrm{rd}}^H \pmb{\Phi} \mathbb{E}\{\mathbf{g}_{\mathrm{sr}} \mathbf{g}_{\mathrm{sr}}^H \} \pmb{\Phi}^H \bar{\mathbf{h}}_{\mathrm{rd}} = \frac{\beta_{\mathrm{sr}}}{\kappa_{\mathrm{sr}}+1} \|\bar{\mathbf{h}}_{\mathrm{rd}}\|^2 = M\mu \kappa_{\mathrm{rd}}, \label{eq:Exp3}
	\end{align}
	by utilizing the identity $\mathrm{tr}(\mathbf{X}\mathbf{Y}) = \mathrm{tr}(\mathbf{Y}\mathbf{X})$, $\mathbb{E} \{|\alpha_4|^2 \}$ can be computed as follows
	\begin{equation} \label{eq:Exp4}
		\mathbb{E} \{ |\alpha_4|^2 \} \stackrel{(a)}{=} \mathbb{E} \left\{ \mathbf{g}_{\mathrm{s_1r}}^H \pmb{\Phi} \mathbb{E} \{\mathbf{g}_{\mathrm{rd}} \mathbf{g}_{\mathrm{rd}}^H \} \pmb{\Phi}^H \mathbf{g}_{\mathrm{s_1r}} \right\} = M \mu, 
	\end{equation}
	where $(a)$ is obtained since the cascaded channels are independent. By plugging \eqref{eq:Expa2}--\eqref{eq:Exp4} into \eqref{eq:A2} and doing some algebra, we obtain the closed form expression of $\mathbb{E}\{ |A|^2 \}$ as
	\begin{equation}
		\mathbb{E} \{ |A|^2 \} =  M\mu \tilde{\kappa}.
	\end{equation}
	The expectation $\mathbb{E}\{ A^\ast |A|^2 \}$ in \eqref{eq:Cascadev2} can be simplified by utilizing the definition of $A$ and \eqref{eq:absA2} as
	\begin{equation}
		\begin{split}
		 &	\mathbb{E}\{ A^\ast |A|^2 \}  \stackrel{(a)}{=} 2 \mathbb{E}\{ \alpha_2^\ast  \alpha_3^\ast  \alpha_4 \} = 2 \mathbb{E} \{ \mathbf{g}_{\mathrm{rd}}^H \pmb{\Phi}^H \bar{\mathbf{h}}_{\mathrm{sr}} \bar{\mathbf{h}}_{\mathrm{rd}}^H \pmb{\Phi}^H \mathbf{g}_{\mathrm{sr}} \mathbf{g}_{\mathrm{sr}}^H \pmb{\Phi} \mathbf{g}_{\mathrm{rd}} \}  \\
		 & = \frac{2\beta_{\mathrm{sr}} \beta_{\mathrm{rd}} \mathrm{tr}( \pmb{\Phi}^H \bar{\mathbf{h}}_{\mathrm{sr}} \bar{\mathbf{h}}_{\mathrm{rd}}^H )}{(\kappa_{\mathrm{sr}}+1)(\kappa_{\mathrm{rd}}+1)} = 2 \mu \bar{\alpha}^\ast,
		\end{split}
	\end{equation}
	where $(a)$ is obtained from the products of uncorrelated random variables. Similarly, the expectation $\mathbb{E}\{ A |A|^2 \}$ in \eqref{eq:Cascadev2} is simplified as
	\begin{equation}
		\begin{split}
			&\mathbb{E}\{ A |A|^2 \} = 2 \mathbb{E}\{ \alpha_2  \alpha_3 \alpha_4^\ast \} = 2 \mathbb{E} \{ \bar{\mathbf{h}}_{\mathrm{sr}}^H \pmb{\Phi} \mathbf{g}_{\mathrm{rd}} \mathbf{g}_{\mathrm{sr}}^H \pmb{\Phi}  \bar{\mathbf{h}}_{\mathrm{rd}} \mathbf{g}_{\mathrm{rd}}^H \pmb{\Phi}^H \mathbf{g}_{\mathrm{sr}} \} \\
			& \stackrel{(a)}{=}   \mathbb{E} \{  \mathbf{g}_{\mathrm{sr}}^H \pmb{\Phi}   \bar{\mathbf{h}}_{\mathrm{rd}} \bar{\mathbf{h}}_{\mathrm{sr}}^H \pmb{\Phi} \mathbf{g}_{\mathrm{rd}} \mathbf{g}_{\mathrm{rd}}^H \pmb{\Phi}^H \mathbf{g}_{\mathrm{sr}} \}  =  \frac{2\beta_{\mathrm{sr}} \beta_{\mathrm{rd}} \mathrm{tr}( \pmb{\Phi} \bar{\mathbf{h}}_{\mathrm{rd}} \bar{\mathbf{h}}_{\mathrm{sr}}^H )}{(\kappa_{\mathrm{sr}}+1)(\kappa_{\mathrm{rd}}+1)} = 2 \mu \bar{\alpha},
		\end{split}
	\end{equation}
	where $(a)$ is obtained by using the commutative property two random variables $X$ and $Y$, i.e., $\mathbb{E}\{XY\} = \mathbb{E}\{YX\}$. The expectation $\mathbb{E}\{ |A|^4\}$ in \eqref{eq:Cascadev2} is simplified as
	\begin{equation} \label{eq:ExA4}
	\begin{split}
		\mathbb{E}\{ |A|^4\} =& \mathbb{E} \{ |\alpha_2|^4 \} + \mathbb{E} \{ |\alpha_3|^4 \} + \mathbb{E} \{ |\alpha_4|^4 \} + 4 \mathbb{E} \{ |\alpha_2|^2 |\alpha_3|^2 \} \\
		&+ 4 \mathbb{E} \{ |\alpha_2|^2 |\alpha_4|^2 \} +  4 \mathbb{E} \{ |\alpha_3|^2 |\alpha_4|^2 \},
	\end{split}
	\end{equation}
	where the missing expectations are equal to zero. The first expectation $\mathbb{E} \{ |\alpha_2|^4 \}$ in \eqref{eq:ExA4} is computed in a closed-form expression as
	\begin{equation}
		\begin{split}
			 \mathbb{E} \{ |\alpha_2|^4 \} & = \mathbb{E} \{ |\bar{\mathbf{h}}_{\mathrm{sr}}^H \pmb{\Phi} \mathbf{g}_{\mathrm{rd}}|^4 \} = \mathbb{E} \{ | \mathbf{g}_{\mathrm{rd}}^H \pmb{\Phi}^H \bar{\mathbf{h}}_{\mathrm{sr}} \bar{\mathbf{h}}_{\mathrm{sr}}^H \pmb{\Phi} \mathbf{g}_{\mathrm{rd}} |^2 \} \\
   & = 2M^2\mu^2 K_{\mathrm{sr}}^2,  
		\end{split}
	\end{equation}
	where $(a)$ follows from  Lemma~\ref{lemma:4momentv1}. Similarly, the  expectation $\mathbb{E} \{ |\alpha_3|^4 \}$ in \eqref{eq:ExA4} is computed in a closed-form expression as
	\begin{equation} \label{eq:Expalpha2}
		\mathbb{E} \{ |\alpha_3|^4 \} = \mathbb{E} \{ |\mathbf{g}_{\mathrm{s_1r}}^H \pmb{\Phi} \bar{\mathbf{h}}_{\mathrm{rd}} \bar{\mathbf{h}}_{\mathrm{rd}}^H \pmb{\Phi}^H \mathbf{g}_{\mathrm{s_1r}} |^2 \} = 2M^2\mu^2  K_{\mathrm{rd}}^2.
	\end{equation}
	The third expectation $\mathbb{E} \{ |\alpha_3|^4 \}$ in \eqref{eq:ExA4} is simplified as
	\begin{equation} \label{eq:alpha4}
		\begin{split}
		&\mathbb{E} \{ |\alpha_4|^4  \} = \mathbb{E}\left\{ |\mathbf{g}_{\mathrm{sr}}^H \pmb{\Phi} \mathbf{g}_{\mathrm{rd}}|^4 \right\} = \mathbb{E} \left\{ \| \pmb{\Phi} \mathbf{g}_{\mathrm{rd}} \|^4 \left|\frac{\mathbf{g}_{\mathrm{sr}}^H \pmb{\Phi} \mathbf{g}_{\mathrm{rd}}}{\| \pmb{\Phi} \mathbf{g}_{\mathrm{rd}} \|}  \frac{\mathbf{g}_{\mathrm{rd}}^H \pmb{\Phi}^H \mathbf{g}_{\mathrm{sr}}}{\| \pmb{\Phi} \mathbf{g}_{\mathrm{rd}} \|} \right|^2 \right\}.
		\end{split}
	\end{equation}
	By introducing the random variable $z = \mathbf{g}_{\mathrm{sr}}^H \pmb{\Phi} \mathbf{g}_{\mathrm{rd}}/\| \pmb{\Phi} \mathbf{g}_{\mathrm{rd}} \| $ with $z \sim \mathcal{CN}(0,\beta_{\mathrm{sr}}/(\kappa_{\mathrm{sr}}+1) )$, we see that \eqref{eq:alpha4} is equivalent to 
	\begin{equation} \label{eq:4moment}
	 \begin{split}
		& \mathbb{E} \{ |\alpha_4|^4  \} = \mathbb{E} \left\{ \| \pmb{\Phi} \mathbf{g}_{\mathrm{rd}} \|^4 |z|^4 \right\} \stackrel{(a)}{=} \mathbb{E} \left\{ \| \pmb{\Phi} \mathbf{g}_{\mathrm{rd}} \|^4 \right\} \mathbb{E} \{  |z|^4 \} \\
    & = 2 (M^2 +M) \mu^2, 
	\end{split}
	\end{equation}
	where $(a)$ is obtained by the independence of $\pmb{\Phi} \mathbf{g}_{\mathrm{rd}}$ and $z$, and the fourth moments of the circularly symmetric Gaussian variables with zero means in \eqref{eq:4moment} is computed by utilizing Lemma~\ref{lemma:4momentv1}. Buy applying the results in \eqref{eq:Expa2}, the forth expectation in \eqref{eq:ExA4} is simplified  as follows
	\begin{equation}
		\mathbb{E} \{ |\alpha_2|^2 |\alpha_3|^2 \} \stackrel{(a)}{=}   \mathbb{E} \{ |\alpha_2|^2 \}  \mathbb{E} \{ |\alpha_3|^2 \} = M^2 \mu^2 \kappa_{\mathrm{sr}} \kappa_{\mathrm{rd}},
	\end{equation}
	where $(a)$ follows since the cascaded channels are independent. The fifth expectation in \eqref{eq:ExA4} can be simplified as follows
	\begin{equation}
		\begin{split}
			&\mathbb{E} \{ |\alpha_2|^2 |\alpha_4|^2 \} =  \mathbb{E} \{ \mathbf{g}_{\mathrm{rd}}^H \pmb{\Phi}^H \bar{\mathbf{h}}_{\mathrm{sr}} \bar{\mathbf{h}}_{\mathrm{sr}}^H \pmb{\Phi} \mathbf{g}_{\mathrm{rd}} \mathbf{g}_{\mathrm{rd}}^H \pmb{\Phi}^H \mathbf{g}_{\mathrm{sr}} \mathbf{g}_{\mathrm{sr}}^H \pmb{\Phi} \mathbf{g}_{\mathrm{rd}}  \} \\
			& \stackrel{(a)}{=} \frac{\beta_{\mathrm{rd}}^2}{(\kappa_{\mathrm{rd}} +1)^2}\mathrm{tr}(\pmb{\Phi}^H \bar{\mathbf{h}}_{\mathrm{sr}} \bar{\mathbf{h}}_{\mathrm{sr}}^H \pmb{\Phi} \pmb{\Phi}^H \mathbb{E}\{\mathbf{g}_{\mathrm{sr}} \mathbf{g}_{\mathrm{sr}}^H \} \pmb{\Phi})  + \\
   &\qquad \frac{\beta_{\mathrm{rd}}^2}{(\kappa_{\mathrm{rd}} +1)^2} \mathrm{tr}(\pmb{\Phi}^H \bar{\mathbf{h}}_{\mathrm{sr}} \bar{\mathbf{h}}_{\mathrm{sr}}^H \pmb{\Phi}) \mathrm{tr}( \pmb{\Phi}^H \mathbb{E}\{\mathbf{g}_{\mathrm{sr}} \mathbf{g}_{\mathrm{sr}}^H \} \pmb{\Phi}) \\
			& \stackrel{(b)}{=} (M^2 +M)\mu^2 \kappa_{\mathrm{sr}},
		\end{split}
	\end{equation}
	where $(a)$ is obtained by using Lemma~\ref{lemma:4momentv1} and $(b)$ follows from the identity $\pmb{\Phi} \pmb{\Phi}^H = \mathbf{I}_M$. The  last expectation in \eqref{eq:ExA4} is derived  as follows
	\begin{align}
		\mathbb{E} \{ |\alpha_3|^2 |\alpha_4|^2 \} &=  \mathbb{E} \{ |\alpha_3|^2 \} \mathbb{E} \{ |\alpha_4|^2 \} =  (M^2 +M)\mu^2 K_{\mathrm{rd}}
	\end{align}
	Plugging \eqref{eq:Expa2}, \eqref{eq:Exp3}, and \eqref{eq:Exp4} into \eqref{eq:4moment} and with some algebraic manipulations, we prove the theorem.
	\subsection{Proof of Lemma~\ref{Corollary:LTOpt}} \label{Appendix:LTOpt} 
	We first consider the long-term phase shift design. By exploiting the definition of the SNR  in \eqref{eq:SNRLTv}, we compute the mean of $\gamma^{\mathrm{lt}}$ as follows:
	\begin{equation} \label{eq:SNRranMean}
		\mathbb{E} \{ \gamma^{\mathrm{lt}} \} = \nu \mathbb{E} \left\{ |h_{\mathrm{sd}}|^2 \right\} + \nu \mathbb{E} \left\{ |\mathbf{h}_{\mathrm{sr}}^H \pmb{\Phi}^{\mathrm{lt}}  \mathbf{h}_{\mathrm{rd}}|^2 \right\},
	\end{equation}
	thanks to the independence of the direct link and the cascaded channels. Since the direct link follows a Rayleigh distribution, the first expectation in \eqref{eq:SNRranMean} is computed in closed-form expression as
	\begin{equation} \label{eq:Ehsdn}
		\mathbb{E} \{ |h_{\mathrm{sd}}|^2 \} = \beta_{\mathrm{sd}},
	\end{equation}
	while the second expectation is computed as in \eqref{eq:2MomentCascade}. Plugging \eqref{eq:Ehsdn} and \eqref{eq:2MomentCascade} into \eqref{eq:SNRranMean}, we obtain the closed-form expression of the mean of the SNR as
	\begin{equation} \label{eq:Expgamma}
		\mathbb{E} \{ \gamma^{\mathrm{lt}} \} = \nu (\beta_{\mathrm{sd}} + \delta ).
	\end{equation}
	From \eqref{eq:Expgamma}, the long-term phase shift design to maximize the average received SNR  is obtained from the following optimization problem
	\begin{equation} 
	\begin{aligned}
		&\underset{\pmb{\Phi}^{\mathsf{lt}}}{\mathrm{maximize}} &\quad &  |\bar{\alpha} |^2  \\
		&\mbox{subject to} &&  - \pi \leq \theta_{m}^{\mathsf{lt}} \leq \pi, \forall m, 
	\end{aligned}
   \end{equation}
 where $|\bar{\alpha} |^2$ is given in Theorem~\ref{Theorem:RISChannel}. Decomposing the LoS components into the magnitude and phase as 
	\begin{align}
		&[\bar{\mathbf{h}}_{\mathrm{sr}}^\ast]_m =  |[\bar{\mathbf{h}}_{\mathrm{sr}}^\ast]_m| e^{j\arg([\bar{\mathbf{h}}_{\mathrm{sr}}^\ast]_m)},  \forall m, \\
  & [\bar{\mathbf{h}}_{\mathrm{rd}}]_m =  |[\bar{\mathbf{h}}_{\mathrm{rd}}]_m| e^{j\arg([\bar{\mathbf{h}}_{\mathrm{rd}}]_m)}, \forall m,
	\end{align}
	  $|\bar{\alpha}|^2$ is reformulated  as
	\begin{equation} \label{eq:alphabar}
		|\bar{\alpha}|^2 = \left| \sum\nolimits_{m=1}^M  |[\bar{\mathbf{h}}_{\mathrm{rd}}]_m| |[\bar{\mathbf{h}}_{\mathrm{sr}}^\ast]_m| e^{j (\arg([\bar{\mathbf{h}}_{\mathrm{sr}}^\ast]_m) + \arg([\bar{\mathbf{h}}_{\mathrm{rd}}]_m) + \theta_{m}^{\mathsf{lt}} )}  \right|^2.
	\end{equation}
	Next, let us denote $\tilde{\theta}_{m}^{\mathsf{lt}} = \arg([\bar{\mathbf{h}}_{\mathrm{sr}}^\ast]_m) + \arg([\bar{\mathbf{h}}_{\mathrm{rd}}]_m) + \theta_{m}^{\mathsf{lt}}, \forall m, $ and introduce the two vectors $\mathbf{a} \in \mathbb{C}^M$ and $\mathbf{b} \in \mathbb{C}^M$ as follows
	\begin{align}
		\mathbf{a} &= \left[\sqrt{|[\bar{\mathbf{h}}_{\mathrm{rd}}]_1| |[\bar{\mathbf{h}}_{\mathrm{sr}}^\ast]_1|}, \ldots, \sqrt{|[\bar{\mathbf{h}}_{\mathrm{rd}}]_m| |[\bar{\mathbf{h}}_{\mathrm{sr}}^\ast]_M|} \right]^T, \\
		\mathbf{b} &= \left[\sqrt{|[\bar{\mathbf{h}}_{\mathrm{rd}}]_1| |[\bar{\mathbf{h}}_{\mathrm{sr}}^\ast]_1|}  e^{j \tilde{\theta}_{1}^{\mathsf{lt}}}, \ldots, \sqrt{|[\bar{\mathbf{h}}_{\mathrm{rd}}]_m| |[\bar{\mathbf{h}}_{\mathrm{sr}}^\ast]_M|}  e^{j \tilde{\theta}_{M}^{\mathsf{lt}}} \right]^T.
	\end{align}
	Then, \eqref{eq:alphabar} can be rewritten in the equivalent form as follows
	\begin{equation}
		|\bar{\alpha}|^2 = | \mathbf{a}^H \mathbf{b} |^2 \stackrel{(a)}{\leq} \| \mathbf{a} \|^2 \| \mathbf{b} \|^2 = \left(   \sum\nolimits_{m=1}^M  |[\bar{\mathbf{h}}_{\mathrm{rd}}]_m| |[\bar{\mathbf{h}}_{\mathrm{sr}}^\ast]_m|  \right)^2,
	\end{equation}
	where $(a)$ is obtained by utilizing Cauchy-Schwarz's inequality, where the equality holds if the two vectors $\mathbf{a}$ and $\mathbf{b}$ are parallel to each other. For all the phase shifts $\tilde{\theta}_{m}^{\mathsf{lt}}$, it holds that
	\begin{equation}
		\tilde{\theta}_{m}^{\mathsf{lt}} = 0 \Leftrightarrow  \theta_{m}^{\mathsf{lt}}  + \arg([\bar{\mathbf{h}}_{\mathrm{sr}}^\ast]_m) + \arg([\bar{\mathbf{h}}_{\mathrm{rd}}]_m) = 0,
	\end{equation}
	and therefore, the optimal long-term phase shift design is given in \eqref{eq:Optphasse}.

In a similar manner, for the short-term design, let us decompose the complex channel coefficients into their modulus and phase as follows
 \begin{multline} \label{eq:ShortTerm}
 |h_{\mathrm{sd}} + \mathbf{h}_{\mathrm{sr}}^H \pmb{\Phi}^{\mathsf{st}} \mathbf{h}_{\mathrm{rd}}|^2 =  \left| |h_{\mathrm{sd}}| e^{j \arg(h_{\mathrm{sd}}) } + \right. \\ \left. \sum\nolimits_{m=1}^M [\mathbf{h}_{\mathrm{sr}}^\ast]_m  [\mathbf{h}_{\mathrm{rd}}]_m e^{j (  \arg(\theta_{m}^{\mathsf{lt}})+ \arg([\mathbf{h}_{\mathrm{sr}}^\ast]_m ) + \arg([\mathbf{h}_{\mathrm{rd}}]_m )} \right|^2.
 \end{multline}
 Let us introduce the two vectors $\tilde{\mathbf{a}} \in \mathbb{C}^{M+1}$ and $\tilde{\mathbf{b}} \in \mathbb{C}^{M+1}$ as follows
 \begin{align}
 &\tilde{\mathbf{a}} = \left[ \sqrt{|h_{\mathrm{sd}}|},  \sqrt{[\mathbf{h}_{\mathrm{sr}}^\ast]_1  [\mathbf{h}_{\mathrm{rd}}]_1}, \ldots, \sqrt{[\mathbf{h}_{\mathrm{sr}}^\ast]_M  [\mathbf{h}_{\mathrm{rd}}]_M}  \right]^T ,\\
 & \tilde{\mathbf{b}} =  \left[ \sqrt{|h_{\mathrm{sd}}|} e^{j \arg(h_{\mathrm{sd}}) },  \sqrt{[\mathbf{h}_{\mathrm{sr}}^\ast]_1  [\mathbf{h}_{\mathrm{rd}}]_1} e^{j \arg(\hat{\theta}_{1}^{\mathsf{lt}})}, \ldots, \right. \notag \\ 
 & \qquad \left. \sqrt{[\mathbf{h}_{\mathrm{sr}}^\ast]_M  [\mathbf{h}_{\mathrm{rd}}]_M} e^{j \arg(\hat{\theta}_{M}^{\mathsf{lt}})}  \right]^T,
 \end{align}
where $\arg(\hat{\theta}_{m}^{\mathsf{lt}}) = \arg(\theta_{m}^{\mathsf{lt}})+ \arg([\mathbf{h}_{\mathrm{sr}}^\ast]_m ) + \arg([\mathbf{h}_{\mathrm{rd}}]_m), \forall m$. Accordingly, \eqref{eq:ShortTerm} can be reformulated in the equivalent form as follows
\begin{equation}
\begin{split}
&|h_{\mathrm{sd}} + \mathbf{h}_{\mathrm{sr}}^H \pmb{\Phi}^{\mathsf{st}} \mathbf{h}_{\mathrm{rd}}|^2 = |\tilde{\mathbf{a}}^H \tilde{\mathbf{b}}|^2 \stackrel{(a)}{\leq} \| \tilde{\mathbf{a}} \|^2 \| \tilde{\mathbf{b}} \|^2 \\
& = \left( |h_{\mathrm{sd}} | +  \sum\nolimits_{m=1}^M  |[\bar{\mathbf{h}}_{\mathrm{rd}}]_m| |[\bar{\mathbf{h}}_{\mathrm{sr}}^\ast]_m|  \right)^2 ,
\end{split}
\end{equation}
where $(a)$ is attained by using Cauchy-Schwarz's inequality, where the equality holds if $\tilde{\mathbf{a}}$ and $\tilde{\mathbf{b}}$ are parallel to each other, i.e.,
\begin{equation}
 \arg(h_{\mathrm{sd}})  =  \arg(\theta_{m}^{\mathsf{lt}})+ \arg([\mathbf{h}_{\mathrm{sr}}^\ast]_m ) + \arg([\mathbf{h}_{\mathrm{rd}}]_m).
\end{equation}
Therefore, the optimal short-term phase shift design is obtained as in \eqref{eq:OptSt} and we complete the proof.
	\subsection{Proof of Theorem~\ref{Theorem:CovProbRan}}\label{Appendix:CovProbRan}
	The average received SNR  for the long-term phase shift design is given in \eqref{eq:Expgamma}. Denoting $a = \sqrt{\nu} h_{\mathrm{sd}}$ and $b = \sqrt{\nu}\mathbf{h}_{\mathrm{sr}}^H \pmb{\Phi}^{\mathsf{lt}} \mathbf{h}_{\mathrm{rd}}$, we recast the second moment of the SNR as follows
	\begin{equation} \label{eq:GammaRan2}
		\begin{split}
			& \mathbb{E} \{ |\gamma^{\mathrm{lt}}|^2 \} = \mathbb{E} \big\{ \big| |a|^2  + a^\ast b + a b^\ast + |b|^2 \big|^2 \big\} \\
   &= \mathbb{E} \{ |a|^4 \} +   \mathbb{E} \{ |b|^4 \}   + 4 \mathbb{E} \{ |a|^2 |b|^2 \}.
		\end{split}
	\end{equation}
	By exploiting Lemma~\ref{lemma:4momentv1}, we tackle the first expectation  in \eqref{eq:GammaRan2} as follows:
	\begin{equation}
		\mathbb{E} \{ |a|^4 \} = \nu^2 \mathbb{E} \{ |h_{\mathrm{sd}}|^4 \} =  2 \nu^2 \beta_{\mathrm{sd}}^2.
	\end{equation}
	From the result in \eqref{eq:4MomentCascade}, the second expectation in \eqref{eq:GammaRan2} is computed in closed-form expression as follows:
	\begin{equation}
		\mathbb{E} \{ |b|^4 \} =  \nu^2 (\delta^2  + 2M |\bar{\alpha}|^2 \mu  \widetilde{K} + M^2 \mu^2 \tilde{\kappa}^2 +  2M \mu^2 \hat{\kappa} +8 |\bar{\alpha}|^2 \mu). 
	\end{equation}
	Next, the fourth expectation in \eqref{eq:GammaRan2} is computed in  closed-form expression as follows
	\begin{equation}
		\begin{split}
			&\mathbb{E}\{ |a|^2 |b|^2 \} = \nu^2 \mathbb{E}\{ |h_{\mathrm{sd}}|^2 \} \mathbb{E} \left\{ \mathbf{h}_{\mathrm{sr}}^H \pmb{\Phi}^{\mathsf{lt}}  \mathbf{h}_{\mathrm{rd}}|^2 \right\} = \nu^2 \beta_{\mathrm{sd}} \delta,
		\end{split}
	\end{equation}
	where $(a)$ is obtained by the independence between the direct and indirect links; $(b)$ follows from  \eqref{eq:Expcascaded2}. Combing \eqref{eq:Expgamma} and \eqref{eq:GammaRan2} together with the identity $\mathsf{Var}\{X \} = \mathbb{E}\{ |X|^2\} - |\mathbb{E}\{ X\}|^2$, we obtain
	\begin{equation} \label{eq:Valgamma}
		\mathsf{Var}\{\gamma^{\mathrm{lt}} \} =   \nu^2 \beta_{\mathrm{sd}}^2 + \nu^2 \tilde{a} + 2 \nu^2 \beta_{\mathrm{sd}} \delta.
	\end{equation}
From the  mean in \eqref{eq:SNRranMean} and the variance in \eqref{eq:Valgamma}, we match the received SNR, $\gamma^{\mathrm{lt}}$, to a Gamma distribution and obtain the result as shown in the theorem. 
\subsection{Proof of Theorem~\ref{Theorem:CovProbOpt}}\label{Appendix:CovProbOpt}
In this section, we compute the coverage probability for the short-term phase shift design using the double-matching method. We start with some properties of the Rice distribution. Particularly, the mean and variance of each channel gain $\left| \left[ \mathbf{h}_{\mathrm{sr}} \right]_m \right|$, which follows by Rice distribution, are given by
	\begin{align} 
		&\mathbb{E} \left\{ {\left| {{{\left[ {\mathbf{h}_{\mathrm{sr}}} \right]}_m}} \right|} \right\} = \frac{1}{2}\sqrt {\frac{\pi \beta _{\mathrm{sr}}}{{{\kappa_{\mathrm{sr}}} + 1}}} t_{\mathrm{sr}}, \\
		&\mathsf{Var}\left\{ \left| [ \mathbf{h}_{\mathrm{sr}} ]_m \right| \right\} = %\frac{{{\beta _{sr}}}}{{{K_{sr}} + 1}} + \frac{{{K_{sr}}{\beta _{sr}}}}{{{K_{sr}} + 1}}{\left( \overline{\mathbf{h}}_{{\mathrm{sr}},n} \right)^2} - \frac{\pi }{4}\frac{{{\beta _{sr}}}}{{{K_{sr}} + 1}}{\left( {_1{F_1}\left( { - 0.5,1, - {K_{sr}}{{\left( \overline{\mathbf{h}}_{{\mathrm{sr}},n} \right)}^2}} \right)} \right)^2}
		%\nonumber \\
		\frac{\beta _{\mathrm{sr}}}{\kappa_{\mathrm{sr}} + 1}\left( 1 + \kappa_{\mathrm{sr}} - \frac{\pi }{4} t_{\mathrm{sr}}^2 \right).
	\end{align}
	With the presence of the RIS, the mean and variance of $B_{m} = | [ \mathbf{h}_{\mathrm{sr}}]_m | [ \mathbf{h}_{\mathrm{rd}} ]_m|$ is then computed as follows
	\begin{align}
		\mathbb{E}\{ B_{m}\} =& \frac{\pi}{4} \sqrt{\mu} t_{\mathrm{sr}} t_{\mathrm{rd}} ,
		 \\
		\mathsf{Var} \left\{ {{B_{m}}} \right\} =& \mu
		\left( \left( {1 + {\kappa_{\mathrm{sr}}}} \right)\left( {1 + {\kappa_{\mathrm{rd}}}} \right) 
		- \frac{{{\pi ^2}}}{{16}}{ t_{\mathrm{sr}}^2 t_{\mathrm{rd}}^2 } \right).
	\end{align}
	Also, the mean and variance of the  random variable $C = | h_{\mathrm{sd}} | + \sum\nolimits_{m = 1}^M \left| {{{[ {{\mathbf{h}_{\mathrm{sr}}}} ]}_m}} \right|\left| {{{\left[ {{\mathbf{h}_{\mathrm{rd}}}} \right]}_m}} \right|$ are given as follows
	\begin{align}
		& \mathbb{E} \{ C \} = \frac{1}{2}\sqrt {\pi {\beta_{\mathrm{sd}}}}  +    \sum\nolimits_{m = 1}^M  \mathbb{E} \left\{ {{B_{m}}} \right\}, \\
		&\mathsf{Var} \{ C \} = \frac{{4 - \pi }}{4}{\beta_{\mathrm{sd}}} +      \sum\nolimits_{m = 1}^M \mathsf{Var} \left\{ {{B_{m}}} \right\}
	\end{align}
	Given the mean and variance of $C$, we  
	apply the first-moment matching. More precisely, we match the random variable $C$ to a Gamma distribution as follows
	\begin{align}
		k_{C} =& \frac{{{{\left| {\mathbb{E} \{ C\}} \right|}^2}}}{{\mathsf{Var}\left\{ C \right\}}} \mbox{ and } {w_{C}} = \frac{{\mathsf{Var}\left\{ C \right\}}}{{ \mathbb{E} \left\{ C \right\}}},
	\end{align}
	for which the solutions to $k_{C}$ and $w_{C}$ are obtained in the theorem. Next, the mean and variance of the SNR $\gamma^{\mathsf{st}}$ is computed as follows
	\begin{align}
		& \mathbb{E}\left\{\gamma^{\mathsf{st}} \right\} = \nu \mathbb{E}\left\{ {{C^2}} \right\} 
		= \nu{w_{c}^2}{k_{c}}\left( {{k_{c}} + 1} \right),
		 \\
		&\mathsf{Var} \left\{ \gamma^{\mathsf{st}} \right\} = 
		\nu^2 \left( {\mathbb{E}\left\{ {{C^4}} \right\} - {\left| {\mathbb{E}\left\{ {{C^2}} \right\}} \right|^2}} \right) \notag \\
  & = 2\nu^2{w_{c}^4}{k_{c}}\left( {{k_{c}} + 1} \right)\left( {2{k_{c}} + 3} \right) ,
	\end{align}
	where $\mathbb{E}\left\{ {{C^m}} \right\} = \frac{{{ w_{c}^m}\Gamma \left( {m + {k_{c}}} \right)}}{{\Gamma \left( {{k_{c}}} \right)}}$.
	Having obtained the mean and variance of $\gamma^{\mathsf{st}}$, we apply the second matching as follows
	\begin{equation} \label{eq:kw}
		k^{\mathsf{st}} = \frac{\left| \mathbb{E} \left\{ \gamma^{\mathsf{st}} \right\} \right|^2}{\mathsf{Var}\left\{ \gamma^{\mathsf{st}} \right\}}, w^{\mathsf{st}} = \frac{{\mathsf{Var}\left\{ \gamma^{\mathsf{st}} \right\}}}{{ \mathbb{E} \left\{ \gamma^{\mathsf{st}} \right\}}}.
	\end{equation}
By matching $\gamma^{\mathsf{st}}$ to a Gamma distribution with the shape and scale parameters in \eqref{eq:kw}, we obtain the coverage probability as shown in the theorem.

	\subsection{The Derivation of \eqref{eq:1stPcov} with the Short-Term Phase Shift Design} \label{Appendix:1stPcov}
	The first derivative of the coverage probability with respect to the coordinate $\varrho_{\mathrm{r}}$ of the RIS is provided as follows:
	\begin{align}
	 P_{\mathsf{cov}}^{\mathsf{st}} \left( \varrho_{\mathrm{r}} \right) = \frac{{ N \left( \varrho_{\mathrm{r}}  \right)}}{{D\left( \varrho_{\mathrm{r}}  \right)}} = \frac{{ \dt{N}\left( \varrho_{\mathrm{r}}  \right)D\left( \varrho_{\mathrm{r}}  \right) - N\left( \varrho_{\mathrm{r}}  \right) \dt{D}\left( \varrho_{\mathrm{r}}  \right)}}{{{{ D^2 \left( \varrho_{\mathrm{r}}  \right)}}}},
	\end{align}
where $\dt{N}\left( \varrho_{\mathrm{r}}  \right)$ is computed by using the formulation in \eqref{eq:NDDef} and the upper incomplete Gamma function as follows
\begin{equation} \label{eq:1stNx}
\begin{split}
		& \dt{N}\left( \varrho_{\mathrm{r}} \right) = \frac{{\partial {\Gamma \left( {{k^{\mathrm{st}}}\left( \varrho_{\mathrm{r}} \right),u\left( \varrho_{\mathrm{r}} \right)} \right)}}}{{\partial \varrho_{\mathrm{r}} }}  \mathop = \limits^{\left( a \right)}  - \dt{u}\left( \varrho_{\mathrm{r}} \right)u{\left(\varrho_{\mathrm{r}} \right)^{k^{\mathsf{st}} \left( \varrho_{\mathrm{r}} \right) - 1}} \times \\
		& \exp ( - u ( \varrho_{\mathrm{r}}) )  + \dt{k}^{\mathsf{st}} ( \varrho_{\mathrm{r}} )\int\nolimits_{t = u ( \varrho_{\mathrm{r}} )}^\infty  {t^{k^{\mathsf{st}} \left( \varrho_{\mathrm{r}} \right) - 1}}\log \left( t \right)\exp \left( { - t} \right)dt\\ 
  & = - \dt{u}\left( \varrho_{\mathrm{r}} \right) I_1 (\varrho_{\mathrm{r}}) + \dt{k}_n^{\mathsf{st}}\left( \varrho_{\mathrm{r}} \right) I_2 (\varrho_{\mathrm{r}}),
\end{split}
\end{equation}
where  $u(\varrho_{\mathrm{r}}) = z /w^{\mathsf{st}}(  \varrho_{\mathrm{r}})$, so its first-order derivative with respect to $\varrho_{\mathrm{r}}$ is  $\dt{u} ( \varrho_{\mathrm{r}} ) =  - z ( {w^{\mathsf{st}} ( \varrho_{\mathrm{r}})} )^{-2} \dt{w}^{\mathsf{st}}( \varrho_{\mathrm{r}} )$. Moreover, the functions $I_1( \varrho_{\mathrm{r}})$ and $I_2( \varrho_{\mathrm{r}})$ are formulated as
\begin{align}
		& I_1 ( \varrho_{\mathrm{r}}) = u{( \varrho_{\mathrm{r}} )^{k^{\mathsf{st}}( \varrho_{\mathrm{r}}) - 1}}\exp( { - u ( \varrho_{\mathrm{r}} )} )
		\nonumber \\
		& I_2 ( \varrho_{\mathrm{r}})= \Gamma \left( {k_n^{\mathsf{st}} \left( \varrho_{\mathrm{r}} \right),u\left( \varrho_{\mathrm{r}} \right)} \right)\log \left( {u\left( \varrho_{\mathrm{r}} \right)} \right)
		+  \Gamma \left( {k\left( \varrho_{\mathrm{r}} \right)} \right) \times \notag \\
		& \left( { - \log \left( {u\left( \varrho_{\mathrm{r}} \right)} \right) + \psi \left( {k\left( \varrho_{\mathrm{r}} \right)} \right)} \right) + u{\left( \varrho_{\mathrm{r}} \right)^{k_n^{\mathsf{st}} \left( \varrho_{\mathrm{r}} \right)}}{\left( {k^{\mathsf{st}} \left( \varrho_{\mathrm{r}} \right)} \right)^{ - 2}} \times \notag \\
		& {\;_2}{F_2}\left( {k^{\mathsf{st}}\left( \varrho_{\mathrm{r}} \right),k^{\mathsf{st}} \left( \varrho_{\mathrm{r}} \right),1 + k^{\mathsf{st}} \left( \varrho_{\mathrm{r}} \right),1 + k^{\mathsf{st}} \left( \varrho_{\mathrm{r}} \right), - u\left( \varrho_{\mathrm{r}} \right)} \right)
	\end{align}
In \eqref{eq:1stNx}, 
	$\left( a \right)$ is obtained by using Leibniz's integral rule for the upper incomplete Gamma function. For further processing, let us reformulate  $w^{\mathsf{st}} ( \varrho_{\mathrm{r}} )$ and $k^{\mathsf{st}} ( \varrho_{\mathrm{r}} )$ as
	\begin{align}
		w^{\mathsf{st}} ( \varrho_{\mathrm{r}} ) =& \frac{{2\nu{l_2} ( \varrho_{\mathrm{r}}) }}{{{l_1} ( \nu_{\mathrm{r}} ) }}, 
		k_n^{\mathsf{st}} ( \varrho_{\mathrm{r}} ) = \frac{{{ l_1^2 ( \varrho_{\mathrm{r}} ) }}}{{2  {l_2} ( \varrho_{\mathrm{r}} )}},
	\end{align}
where $l_1(\varrho_{\mathrm{r}})$ and  $l_2(\varrho_{\mathrm{r}})$ are given as
\begin{align}
& {l_1}( \varrho_{\mathrm{r}} ) = { w_{c}^2 ( \varrho_{\mathrm{r}} )}{k_{c}} ( \varrho_{\mathrm{r}} )( {1 + k_{c} ( \varrho_{\mathrm{r}})} ), \\
& {l_2} ( \varrho_{\mathrm{r}} ) = {w_{c}^4 (  \varrho_{\mathrm{r}} ) }{k_{c}} (  \varrho_{\mathrm{r}} ) ( {{k_{c}} (  \varrho_{\mathrm{r}}) + 1} ) ( {2{k_{c}} (  \varrho_{\mathrm{r}} ) + 3} ).
\end{align}
The first derivative of $w^{\mathsf{st}} ( \varrho_{\mathrm{r}} )$ and $k^{\mathsf{st}} ( \varrho_{\mathrm{r}})$ with respect to $\varrho_{\mathrm{r}}$ are  computed as 
	\begin{align}
		 \dt{w}^{\mathsf{st}} ( \varrho_{\mathrm{r}} ) =&
		{\left( \nu{l_1}(  \varrho_{\mathrm{r}}) \right)^{ - 2}}
		\left( {2{\nu^3}{\dt{l}_2}( \varrho_{\mathrm{r}}) }{l_1} (\varrho_{\mathrm{r}})  
		-  {2{\nu^3 }{l_2} ( \varrho_{\mathrm{r}} ) } 
	 \dt{l}_1 ( \varrho_{\mathrm{r}} )  
		\right), \label{eq:dwst} \\
		{\dt{k}^{\mathsf{st}}}  (  \varrho_{\mathrm{r}} ) =& 
		{\left( 2{\nu^2}{l_2} ( \varrho_{\mathrm{r}} )  \right)^{ - 2}}
		\left(
		4 {\nu^4{l_1} (\varrho_{\mathrm{r}})  } 
	 {\dt{l}_1} ( \varrho_{\mathrm{r}} ) 
		 {l_2} ( \varrho_{\mathrm{r}} ) \right. \notag \\ 
   & \left.  
		-  2{ \nu^4}{\dt{l}_2}( \varrho_{\mathrm{r}} )  
		l_1^2 ( \varrho_{\mathrm{r}} )
		\right), \label{eq:dkst}
	\end{align}
	where  ${\dt{l}_1} ( \varrho_{\mathrm{r}} )$ and ${\dt{l}_2} ( \varrho_{\mathrm{r}} )$ are the first derivative of $	{l_1} ( \varrho_{\mathrm{r}} )$ and ${l_2} ( \varrho_{\mathrm{r}} )$ with respect to $ \varrho_{\mathrm{r}}$, which are computed as follows
	\begin{align}
		{\dt{l}_1} ( \varrho_{\mathrm{r}} ) =& 
		2{w_{c} } ( \varrho_{\mathrm{r}} ){\dt{w}_{c}}( \varrho_{\mathrm{r}}){k_{c}} ( \varrho_{\mathrm{r}} )\left( {1 + {k_{c}}\left(  \varrho_{\mathrm{r}} \right)} \right) + \notag  \\
  & { {{w_{c}^2}\left( \varrho_{\mathrm{r}} \right)}}\left( {1 + 2{k_{c}}\left( \varrho_{\mathrm{r}} \right)} \right){\dt{k}_{c}}\left( \varrho_{\mathrm{r}} \right), \\
		{\dt{l}_2}\left(  \varrho_{\mathrm{r}} \right) =& 
		4{w_{c}^3\left(  \varrho_{\mathrm{r}} \right)}{\dt{w}_{c}}\left(  \varrho_{\mathrm{r}} \right){k_{c}}\left(  \varrho_{\mathrm{r}} \right)\left( {{k_{c}}\left( x \right) + 1} \right)\left( {2{k_{c}}\left(  \varrho_{\mathrm{r}} \right) + 3} \right)
		\nonumber \\
		& + {{w_{c}^4 \left(  \varrho_{\mathrm{r}} \right)}}\left( {6{{\left( {{k_{c}}\left(  \varrho_{\mathrm{r}} \right)} \right)}^2} + 10{k_{c}}\left(  \varrho_{\mathrm{r}} \right) + 3} \right){\dt{k}_{c}}\left(  \varrho_{\mathrm{r}} \right),
	\end{align}
where $\dt{w}_{c} ( \varrho_{\mathrm{r}} )$ and $\dt{k}_{c} ( \varrho_{\mathrm{r}}  )$ are the first derivative of the scale and shape parameters $w_{c} ( \varrho_{\mathrm{r}} )$ and $k_{c} ( \varrho_{\mathrm{r}}  )$ with respect to $ \varrho_{\mathrm{r}}$, respectively, and defined as follows
	\begin{align} \label{Eq:Apn_Deri_Opt:04}
		& \dt{w}_{c}(  \varrho_{\mathrm{r}} ) = \frac{ \dt{t} ( \varrho_{\mathrm{r}}) I_3(\varrho_{\mathrm{r}}) }{{{{\left( {{c_{1}} + {\tilde{c}_{2}}\sqrt {t\left(  \varrho_{\mathrm{r}} \right)} } \right)}^2}}},  \dt{k}_c\left(  \varrho_{\mathrm{r}} \right) = \frac{\dt{t}\left(  \varrho_{\mathrm{r}} \right){{I_4(\varrho_{\mathrm{r}})}}}{{{{\left( {c_{3} + {\tilde{c}_{4}}t (  \varrho_{\mathrm{r}})} \right)}^2}}},
	\end{align}
	where $ {c_{2}} =  {\tilde{c}_{2}}\sqrt {t (  \varrho_{\mathrm{r}} )}$ and ${c_{4}} = {\tilde{c}_{4}}t (  \varrho_{\mathrm{r}} ) $. Besides, the supplementary parameters $t(\varrho_{\mathrm{r}})$, $\dt{t}(\varrho_{\mathrm{r}})$, $I_3$, and $I_4$ are given as follows
	\begin{align}
	  I_3(\varrho_{\mathrm{r}}) =&  c_{4} \left( {{c_{1}} + {\tilde{c}_{2}}\sqrt {t\left( \varrho_{\mathrm{r}} \right)} } \right) - {0.5}{ {t^{-0.5}\left( \varrho_{\mathrm{r}} \right)}}\left( {{c_{3}} + {\tilde{c}_{4}}t\left( \varrho_{\mathrm{r}} \right)} \right){c_{2}}, \\
	 I_4(\varrho_{\mathrm{r}})=&  \left( {c_{1} + \tilde{c}_{2}\sqrt {t (\varrho_{\mathrm{r}})} } \right) \left( \tilde{c}_{2}{ {t^{-1/2}\left( \varrho_{\mathrm{r}}\right)}}\left( c_{3} + {\tilde{c}_{4}}t( \varrho_{\mathrm{r}}) \right) \right. \notag \\ 
   &\quad \left. - {\tilde{c}_{4}}\left( c_{1} + {\tilde{c}_{2}}\sqrt {t( \varrho_{\mathrm{r}})}  \right) \right),\\
	  t( \varrho_{\mathrm{r}} ) =& {\beta_{\mathrm{s_1r}}}\left( \varrho_{\mathrm{r}} \right){\beta_{\mathrm{rd}}}\left( \varrho_{\mathrm{r}} \right), \\
	 \dt{t}\left( \varrho_{\mathrm{r}} \right) = & {\dt{\beta}_{\mathrm{s_1r}}}\left( \varrho_{\mathrm{r}} \right){\beta_{\mathrm{rd}}}\left( \varrho_{\mathrm{r}} \right) + {\beta_{\mathrm{s_1r}}}\left( \varrho_{\mathrm{r}} \right){\dt{\beta}_{\mathrm{rd}}}\left( \varrho_{\mathrm{r}} \right), \\
		 {\beta_{\mathrm{sr}}}\left( \varrho_{\mathrm{r}} \right) =& {K_{0}}\left( {{{\left( {x_{\mathrm{r}} - {x_{\mathrm{s}}}} \right)}^2} + {{\left( {y_{\mathrm{r}} - {y_{\mathrm{s}}}} \right)}^2}}  + {{\left( {{z_{\mathrm{r}}}  - {z_{\mathrm{s}}}} \right)}^2} \right)^{-\eta_{\mathrm{sr}} /2}, \\
		 {\dt{\beta}_{\mathrm{sr}}}\left( \varrho_{\mathrm{r}} \right) =& -\eta_{\mathrm{sr}} {K_{0}}\left( {\varrho_{\mathrm{r}} - {v_{\mathrm{s}}}} \right) 
		 \left( {{{\left( {{x_{{\mathrm{r}}}} - {x_{\mathrm{s}}}} \right)}^2} + {{\left( {y_{\mathrm{r}} - {y_{\mathrm{s}}}} \right)}^2} }  + \right. \notag \\
   & \quad \left. {{\left( {{z_{\mathrm{r}}} - {z_{\mathrm{s}}}} \right)}^2} \right)^{-\eta_{\mathrm{sr}} /2 - 1},%\\
\end{align}
 \begin{align}
		{\beta_{\mathrm{rd}}}\left( \varrho_{\mathrm{r}} \right) =&  {K_{0}}{\left( {{{\left( {x_{\mathrm{r}} - {x_{\mathrm{d}}}} \right)}^2} + {{\left( {y_{\mathrm{r}} - {y_{\mathrm{d}}}} \right)}^2}} + {{\left( {{z_{\mathrm{r}}} - {z_{\mathrm{d}}}} \right)}^2} \right)^{-\eta_{\mathrm{rd}} /2}}, \\
		{\dt{\beta}_{\mathrm{rd}}}\left( \varrho_{\mathrm{r}} \right) =&  -\eta_{\mathrm{rd}} {K_{0}} \left( {\varrho_{\mathrm{r}} - {\alpha_{\mathrm{d}}}} \right)\left( {{{\left( {{x_{{\mathrm{r}}}} - {x_{\mathrm{d}}}} \right)}^2} + {{\left( {y_{\mathrm{r}} - {y_{\mathrm{d}}}} \right)}^2} } + \right. \notag \\
   & \quad \left. {{\left( {{z_{\mathrm{r}}} - {z_{\mathrm{d}}}} \right)}^2} \right)^{-\eta_{\mathrm{rd}} /2 - 1},
	\end{align}
	where $\alpha_\mathrm{d} \in \{ x_\mathrm{d},  y_\mathrm{d},  z_\mathrm{d} \}$. In a similar manner, the derivative of $ D\left( \varrho_{\mathrm{r}} \right) = \Gamma \left( {k^{\mathsf{st}}\left( \varrho_{\mathrm{r}} \right)} \right) $ in the numerator of \eqref{eq:1stPcov} with respect to $\varrho_{\mathrm{r}}$ is computed as follows:
	\begin{equation}
		\dt{D}\left( \varrho_{\mathrm{r}} \right) = \dt{k}^{\mathsf{st}} \left(\varrho_{\mathrm{r}} \right)\Gamma \left( {k^{\mathsf{st}}\left(\varrho_{\mathrm{r}} \right)} \right)\psi \left( {k^{\mathsf{st}}\left( \varrho_{\mathrm{r}} \right)} \right),
	\end{equation}
where $\psi ( {k\left( x \right)} )$ is the polygamma function  of the first order. 
	%
	
	%%%%%%%%%%%%%%%%%%%%%%%%%%%%%%%%%%%%%%%%%%%%%%%%%5
	
	\subsection{The Derivation of \eqref{eq:1stPcov} with the Long-Term Phase Shift Design} \label{Appendix:1stPcovv1}
	The proof follows similar steps as for the short-term phase shift design. Specifically, the first-order derivative of the coverage probability with respect to $\varrho_{\mathrm{r}}$ is given in \eqref{eq:1stPcov} with 
	$\dt{N}\left( \varrho_{\mathrm{r}} \right)$ and $\dt{D}\left( \varrho_{\mathrm{r}} \right)$ defined in \eqref{eq:NDDef} that have the same structure as the short-term phase shift design, except for the scale and shape parameters. Let us therefore compute the derivative of $w_{n}^{\mathrm{lt}} ( \varrho_{\mathrm{r}} )$ and $k^{\mathrm{lt}} ( \varrho_{\mathrm{r}} )$ as a function of the RIS coordinates. To this end, we first reformulate the shape and scale parameters in \eqref{eq:klt} and \eqref{eq:wlt} to the corresponding equivalent forms
	\begin{align}
		k_{n}^{\mathrm{lt}} ( \varrho_{\mathrm{r}} ) = \mathsf{Nu}^2( \varrho_{\mathrm{r}})/\mathsf{De}( \varrho_{\mathrm{r}}), w^{\mathrm{lt}} (\varrho_{\mathrm{r}} ) = \mathsf{De}(\varrho_{\mathrm{r}}) /\mathsf{Nu}( \varrho_{\mathrm{r}}),
	\end{align}
where the following definitions hold for $\mathsf{Nu}( \varrho_{\mathrm{r}})$ and $\mathsf{De}( \varrho_{\mathrm{r}})$ as follows
\begin{align}
& \mathsf{Nu}( \varrho_{\mathrm{r}}) =	\nu \beta_{\mathrm{sd}} + 
	\nu \tilde{o}_{1} \beta_{\mathrm{sr}} ( \varrho_{\mathrm{r}}) \beta_{\mathrm{rd}}  (\varrho_{\mathrm{r}}) , \\
& \mathsf{De}( \varrho_{\mathrm{r}}) = \nu^2 \beta_{\mathrm{sd}}^2 +
\nu^2 \tilde{o}_{2} \beta_{\mathrm{sr}}^2 ( \varrho_{\mathrm{r}}) \beta_{\mathrm{rd}}^2  (\varrho_{\mathrm{r}}) + 2 \nu^2 \beta_{\mathrm{sd}} \tilde{o}_{1} \beta_{\mathrm{sr}} ( \varrho_{\mathrm{r}}) \beta_{\mathrm{rd}}  (\varrho_{\mathrm{r}}) ,
\end{align}
with $o_{1} = \tilde{o}_{2}  \beta_{\mathrm{s_1r}} ( \varrho_{\mathrm{r}}) \beta_{\mathrm{rd}}  (\varrho_{\mathrm{r}})$ and $o_{2} = \tilde{o}_{2}  \beta_{\mathrm{sr}}^2 ( \varrho_{\mathrm{r}}) \beta_{\mathrm{rd}}^2  (\varrho_{\mathrm{r}})$. The first-order derivative of the shape and scale parameters with respect to $\varrho_{\mathrm{r}}$ are then computed as follows
\begin{align}
	&\dt{k}^{\mathrm{lt}} ( \varrho_{\mathrm{r}} ) 	= \mathsf{De} ( \varrho_{\mathrm{r}} )^{-2} \left( 2  \mathsf{Nu}  ( \varrho_{\mathrm{r}} ) 
		\dt{\mathsf{Nu}} ( \varrho_{\mathrm{r}}  ) \mathsf{De} ( \varrho_{\mathrm{r}}  )
		-  \mathsf{Nu}^2 ( \varrho_{\mathrm{r}} )  \dt{\mathsf{De}} ( \varrho_{\mathrm{r}} )
		\right), \\
		& \dt{w}_{n}^{\mathrm{lt}} ( \varrho_{\mathrm{r}} ) 
		= \mathsf{Nu}^{-2} ( \varrho_{\mathrm{r}} ) \left(  
		\dt{\mathsf{De}} ( \varrho_{\mathrm{r}}  ) \mathsf{Nu} ( \varrho_{\mathrm{r}} )
		- \mathsf{De} ( \varrho_{\mathrm{r}} ) \dt{\mathsf{Nu}} ( \varrho_{\mathrm{r}}  )
		\right), \\
		& \dt{\mathsf{Nu}} ( \varrho_{\mathrm{r}} ) = \nu \tilde{o}_{1} \dt{t} (  \varrho_{\mathrm{r}} ),\\
		& \dt{\mathsf{De}} (\varrho_{\mathrm{r}}) =2 
		\nu^2 \tilde{o}_{2}  t (  \varrho_{\mathrm{r}} ) \dt{t} ( \varrho_{\mathrm{r}} ) + 2 \nu^2 \beta_{\mathrm{sd}} \tilde{o}_{1} \dt{t} ( \varrho_{\mathrm{r}} ),
	\end{align}
	where $\dt{t} ( v_{\mathrm{r},n}  )$ is provided in \eqref{Eq:Apn_Deri_Opt:04} and we, therefore, complete the proof.
	\bibliographystyle{IEEEtran}
	\bibliography{IEEEabrv,refs}

% Generated by IEEEtran.bst, version: 1.14 (2015/08/26)
\begin{thebibliography}{10}
\providecommand{\url}[1]{#1}
\csname url@samestyle\endcsname
\providecommand{\newblock}{\relax}
\providecommand{\bibinfo}[2]{#2}
\providecommand{\BIBentrySTDinterwordspacing}{\spaceskip=0pt\relax}
\providecommand{\BIBentryALTinterwordstretchfactor}{4}
\providecommand{\BIBentryALTinterwordspacing}{\spaceskip=\fontdimen2\font plus
\BIBentryALTinterwordstretchfactor\fontdimen3\font minus
  \fontdimen4\font\relax}
\providecommand{\BIBforeignlanguage}[2]{{%
\expandafter\ifx\csname l@#1\endcsname\relax
\typeout{** WARNING: IEEEtran.bst: No hyphenation pattern has been}%
\typeout{** loaded for the language `#1'. Using the pattern for}%
\typeout{** the default language instead.}%
\else
\language=\csname l@#1\endcsname
\fi
#2}}
\providecommand{\BIBdecl}{\relax}
\BIBdecl

\bibitem{van2022controlling}
T.~Van~Chien, L.~T. Tu, D.-H. Tran, H.~Van~Nguyen, S.~Chatzinotas, M.~Di~Renzo,
  and B.~Ottersten, ``Controlling smart propagation environments: Long-term
  versus short-term phase shift optimization,'' in \emph{IEEE International
  Conference on Acoustics, Speech and Signal Processing (ICASSP)}, 2022, pp.
  5348--5352.

\bibitem{tataria20216g}
H.~Tataria, M.~Shafi, A.~F. Molisch, M.~Dohler, H.~Sj{\"o}land, and
  F.~Tufvesson, ``{6G} wireless systems: {V}ision, requirements, challenges,
  insights, and opportunities,'' \emph{Proceedings of the IEEE}, vol. 109,
  no.~7, pp. 1166--1199, 2021.

\bibitem{tariq2020speculative}
F.~Tariq, M.~R. Khandaker, K.-K. Wong, M.~A. Imran, M.~Bennis, and M.~Debbah,
  ``{A speculative study on 6G},'' \emph{IEEE Wireless Commun.}, vol.~27,
  no.~4, pp. 118--125, 2020.

\bibitem{jonsson2021ericsson}
P.~Jonsson, S.~Carson, S.~Davis \emph{et~al.}, ``Ericsson mobility report
  (2020),'' \emph{Ericsson, Stockholm, Sweden}, 2021.

\bibitem{3gpp2020technical}
3GPP, ``Email discussion for {RAN4 R17} non-spectrum work areas: {S}mart
  repeaters,'' 2020.

\bibitem{flamini2022towards}
R.~Flamini, D.~De~Donno, J.~Gambini, F.~Giuppi, C.~Mazzucco, A.~Milani, and
  L.~Resteghini, ``Towards a heterogeneous smart electromagnetic environment
  for millimeter-wave communications: {A}n industrial viewpoint,'' \emph{IEEE
  Transactions on Antennas and Propagation}, 2022.

\bibitem{qian2020beamforming}
X.~Qian, M.~Di~Renzo, J.~Liu, A.~Kammoun, and M.-S. Alouini, ``Beamforming
  through reconfigurable intelligent surfaces in single-user {MIMO} systems:
  {SNR} distribution and scaling laws in the presence of channel fading and
  phase noise,'' \emph{{IEEE} Wireless Commun. Lett.}, vol.~10, no.~1, pp.
  77--81, 2020.

\bibitem{wu2019intelligent}
Q.~Wu and R.~Zhang, ``Intelligent reflecting surface enhanced wireless network
  via joint active and passive beamforming,'' \emph{{IEEE} Trans. Wireless
  Commun.}, vol.~18, no.~11, pp. 5394--5409, 2019.

\bibitem{zappone2020overhead}
A.~Zappone, M.~Di~Renzo, F.~Shams, X.~Qian, and M.~Debbah, ``Overhead-aware
  design of reconfigurable intelligent surfaces in smart radio environments,''
  \emph{{IEEE} Trans. Wireless Commun.}, vol.~20, no.~1, pp. 126--141, 2020.

\bibitem{kumar2022achievable}
V.~Kumar, M.~F. Flanagan, R.~Zhang, and L.-N. Tran, ``Achievable rate
  maximization for underlay spectrum sharing {MIMO} system with intelligent
  reflecting surface,'' \emph{IEEE Wireless Communications Letters}, 2022.

\bibitem{wei2021channel}
L.~Wei, C.~Huang, G.~C. Alexandropoulos, C.~Yuen, Z.~Zhang, and M.~Debbah,
  ``Channel estimation for {RIS}-empowered multi-user {MISO} wireless
  communications,'' \emph{IEEE Transactions on Communications}, vol.~69, no.~6,
  pp. 4144--4157, 2021.

\bibitem{tang2020wireless}
W.~Tang, M.~Z. Chen, X.~Chen, J.~Y. Dai, Y.~Han, M.~Di~Renzo, Y.~Zeng, S.~Jin,
  Q.~Cheng, and T.~J. Cui, ``Wireless communications with reconfigurable
  intelligent surface: {P}ath loss modeling and experimental measurement,''
  \emph{IEEE Transactions on Wireless Communications}, vol.~20, no.~1, pp.
  421--439, 2020.

\bibitem{zhi2021two}
K.~Zhi, C.~Pan, H.~Ren, K.~Wang, M.~Elkashlan, M.~Di~Renzo, R.~Schober, H.~V.
  Poor, J.~Wang, and L.~Han, ``Two-timescale design for reconfigurable
  intelligent surface-aided {M}assive {MIMO} systems with imperfect {CSI},''
  \emph{{IEEE} Trans. Inf. Theory}, vol.~69, no.~5, pp. 3001 -- 3033, 2022.

\bibitem{di2021communication}
M.~Di~Renzo, F.~H. Danufane, and S.~Tretyakov, ``Communication models for
  reconfigurable intelligent surfaces: {F}rom surface electromagnetics to
  wireless networks optimization,'' \emph{arXiv preprint arXiv:2110.00833},
  2021.

\bibitem{zhang2020sum}
Y.~Zhang, C.~Zhong, Z.~Zhang, and W.~Lu, ``Sum rate optimization for two way
  communications with intelligent reflecting surface,'' \emph{{IEEE} Commun.
  Lett.}, vol.~24, no.~5, pp. 1090--1094, 2020.

\bibitem{huang2019reconfigurable}
C.~Huang, A.~Zappone, G.~C. Alexandropoulos, M.~Debbah, and C.~Yuen,
  ``Reconfigurable intelligent surfaces for energy efficiency in wireless
  communication,'' \emph{{IEEE} Trans. Wireless Commun.}, vol.~18, no.~8, pp.
  4157--4170, 2019.

\bibitem{wei2022joint}
L.~Wei, C.~Huang, Q.~Guo, Z.~Yang, Z.~Zhang, G.~C. Alexandropoulos, M.~Debbah,
  and C.~Yuen, ``Joint channel estimation and signal recovery for
  {RIS}-empowered multi-user communications,'' \emph{IEEE Transactions on
  Communications}, vol.~70, no.~7, pp. 4640 -- 4655, 2022.

\bibitem{huang2021multi}
C.~Huang, Z.~Yang, G.~C. Alexandropoulos, K.~Xiong, L.~Wei, C.~Yuen, Z.~Zhang,
  and M.~Debbah, ``Multi-hop {RIS}-empowered terahertz communications: {A}
  {DRL}-based hybrid beamforming design,'' \emph{IEEE Journal on Selected Areas
  in Communications}, vol.~39, no.~6, pp. 1663--1677, 2021.

\bibitem{pan2022overview}
C.~Pan, G.~Zhou, K.~Zhi, S.~Hong, T.~Wu, Y.~Pan, H.~Ren, M.~Di~Renzo, A.~L.
  Swindlehurst, R.~Zhang \emph{et~al.}, ``An overview of signal processing
  techniques for {RIS/IRS}-aided wireless systems,'' \emph{IEEE Journal of
  Selected Topics in Signal Processing}, 2022.

\bibitem{van2022reconfigurable}
T.~Van~Chien, H.~Q. Ngo, S.~Chatzinotas, and B.~Ottersten, ``Reconfigurable
  intelligent surface-assisted massive mimo: Favorable propagation, channel
  hardening, and rank deficiency [lecture notes],'' \emph{IEEE Signal
  Processing Magazine}, vol.~39, no.~3, pp. 97--104, 2022.

\bibitem{pan2021overview}
C.~Pan, G.~Zhou, K.~Zhi, S.~Hong, T.~Wu, Y.~Pan, H.~Ren, M.~Di~Renzo, A.~L.
  Swindlehurst, R.~Zhang \emph{et~al.}, ``An overview of signal processing
  techniques for {RIS/IRS}-aided wireless systems,'' \emph{{IEEE} Trans.
  Commun.}, 2021.

\bibitem{jung2020performance}
M.~Jung, W.~Saad, Y.~Jang, G.~Kong, and S.~Choi, ``Performance analysis of
  large intelligent surfaces ({LIS}s): {A}symptotic data rate and channel
  hardening effects,'' \emph{{IEEE} Trans. Wireless Commun.}, vol.~19, no.~3,
  pp. 2052--2065, 2020.

\bibitem{han2019large}
Y.~Han, W.~Tang, S.~Jin, C.-K. Wen, and X.~Ma, ``Large intelligent
  surface-assisted wireless communication exploiting statistical {CSI},''
  \emph{{IEEE} Trans. Veh. Technol.}, vol.~68, no.~8, pp. 8238--8242, 2019.

\bibitem{zhao2021two}
M.-M. Zhao, A.~Liu, Y.~Wan, and R.~Zhang, ``Two-timescale beamforming
  optimization for intelligent reflecting surface aided multiuser communication
  with {QoS} constraints,'' \emph{{IEEE} Trans. Wireless Commun.}, vol.~20,
  no.~9, pp. 6179--6194, 2021.

\bibitem{dai2021statistical}
J.~Dai, F.~Zhu, C.~Pan, H.~Ren, and K.~Wang, ``Statistical {CSI}-based
  transmission design for reconfigurable intelligent surface-aided massive
  {MIMO} systems with hardware impairments,'' \emph{IEEE Wireless
  Communications Letters}, vol.~11, no.~1, pp. 38--42, 2021.

\bibitem{luo2021reconfigurable}
C.~Luo, X.~Li, S.~Jin, and Y.~Chen, ``Reconfigurable intelligent
  surface-assisted multi-cell {MISO} communication systems exploiting
  statistical {CSI},'' \emph{IEEE Wireless Communications Letters}, vol.~10,
  no.~10, pp. 2313--2317, 2021.

\bibitem{you2021reconfigurable}
L.~You, J.~Xiong, Y.~Huang, D.~W.~K. Ng, C.~Pan, W.~Wang, and X.~Gao,
  ``Reconfigurable intelligent surfaces-assisted multiuser {MIMO} uplink
  transmission with partial {CSI},'' \emph{{IEEE} Trans. Wireless Commun.},
  vol.~20, no.~9, pp. 5613--5627, 2021.

\bibitem{abrardo2020intelligent}
A.~Abrardo, D.~Dardari, and M.~Di~Renzo, ``Intelligent reflecting surfaces:
  {S}um-rate optimization based on statistical position information,''
  \emph{{IEEE} Trans. Commun.}, 2021, early Access.

\bibitem{9140329}
M.~{Di Renzo}, A.~{Zappone}, M.~{Debbah}, M.~S. {Alouini}, C.~{Yuen}, J.~{de
  Rosny}, and S.~{Tretyakov}, ``Smart radio environments empowered by
  reconfigurable intelligent surfaces: {H}ow it works, state of research, and
  the road ahead,'' \emph{{IEEE} J. Sel. Areas Commun.}, vol.~38, no.~11, pp.
  2450--2525, 2020.

\bibitem{peng2021analysis}
Z.~Peng, T.~Li, C.~Pan, H.~Ren, W.~Xu, and M.~Di~Renzo, ``Analysis and
  optimization for {RIS}-aided multi-pair communications relying on statistical
  {CSI},'' \emph{{IEEE} Trans. Veh. Technol.}, vol.~70, no.~4, pp. 3897--3901,
  2021.

\bibitem{hu2021two}
C.~Hu, L.~Dai, S.~Han, and X.~Wang, ``Two-timescale channel estimation for
  reconfigurable intelligent surface aided wireless communications,''
  \emph{IEEE Transactions on Communications}, vol.~69, no.~11, pp. 7736--7747,
  2021.

\bibitem{gan2021ris}
X.~Gan, C.~Zhong, C.~Huang, and Z.~Zhang, ``Ris-assisted multi-user miso
  communications exploiting statistical csi,'' \emph{IEEE Transactions on
  Communications}, vol.~69, no.~10, pp. 6781--6792, 2021.

\bibitem{trigui2022performance}
I.~Trigui, W.~Ajib, W.-P. Zhu, and M.~Di~Renzo, ``Performance evaluation and
  diversity analysis of {RIS}-assisted communications over generalized fading
  channels in the presence of phase noise,'' \emph{IEEE Open Journal of the
  Communications Society}, vol.~3, pp. 593--607, 2022.

\bibitem{you2022deploy}
C.~You, B.~Zheng, W.~Mei, and R.~Zhang, ``How to deploy intelligent reflecting
  surfaces in wireless network: {BS}-side, user-side, or both sides?''
  \emph{Journal of Communications and Information Networks}, vol.~7, no.~1, pp.
  1--10, 2022.

\bibitem{perovic2021maximum}
N.~S. Perovi{\'c}, L.-N. Tran, M.~Di~Renzo, and M.~F. Flanagan, ``On the
  maximum achievable sum-rate of the {RIS}-aided {MIMO} broadcast channel,''
  \emph{arXiv preprint arXiv:2110.01700}, 2021.

\bibitem{zheng2021double}
B.~Zheng, C.~You, and R.~Zhang, ``Double-irs assisted multi-user {MIMO}:
  Cooperative passive beamforming design,'' \emph{{IEEE} Trans. Wireless
  Commun.}, vol.~20, no.~7, pp. 4513--4526, 2021.

\bibitem{nguyen2022leveraging}
T.~V. Nguyen, D.~N. Nguyen, M.~Di~Renzo, and R.~Zhang, ``Leveraging secondary
  reflections and mitigating interference in multi-{IRS/RIS} aided wireless
  network,'' \emph{{IEEE} Trans. Wireless Commun.}, 2022.

\bibitem{rappaport2010wireless}
T.~S. Rappaport, \emph{Wireless communications: Principles and practice,
  2/E}.\hskip 1em plus 0.5em minus 0.4em\relax Pearson Education India, 2010.

\bibitem{9195523}
T.~{Van Chien}, L.~T. {Tu}, S.~{Chatzinotas}, and B.~{Ottersten}, ``Coverage
  probability and ergodic capacity of intelligent reflecting surface-enhanced
  communication systems,'' \emph{{IEEE} Commun. Lett.}, vol.~25, no.~1, pp.
  69--73, 2021.

\bibitem{sihlbom2022reconfigurable}
B.~Sihlbom, M.~I. Poulakis, and M.~Di~Renzo, ``Reconfigurable intelligent
  surfaces: {P}erformance assessment through a system-level simulator,''
  \emph{IEEE Wireless Communications}, 2022.

\bibitem{geraci2022will}
G.~Geraci, A.~Garcia-Rodriguez, M.~M. Azari, A.~Lozano, M.~Mezzavilla,
  S.~Chatzinotas, Y.~Chen, S.~Rangan, and M.~Di~Renzo, ``What will the future
  of uav cellular communications be? a flight from {5G} to {6G},'' \emph{IEEE
  communications surveys \& tutorials}, vol.~24, no.~3, pp. 1304--1335, 2022.

\bibitem{3gpp2011technical}
3GPP, ``Technical specification group radio access network; spatial channel
  model for multiple input multiple output (mimo) simulations (release 10),''
  \emph{TR 25.996}, vol.~10, 2011.

\bibitem{nocedal1999numerical}
J.~Nocedal and S.~J. Wright, \emph{Numerical optimization}.\hskip 1em plus
  0.5em minus 0.4em\relax Springer, 1999.

\bibitem{van2021reconfigurable}
T.~V. Chien, H.~Q. Ngo, S.~Chatzinotas, M.~Di~Renzo, and B.~Otternsten,
  ``Reconfigurable intelligent surface-assisted {C}ell-{F}ree {M}assive {MIMO}
  systems over spatially-correlated channels,'' \emph{{IEEE} Trans. Wireless
  Commun.}, vol.~21, no.~7, 2022.

\end{thebibliography}
\end{document}